\setlist[enumerate]{nolistsep,itemsep=3pt,topsep=3pt,leftmargin=*} %
\setlist[itemize]{nolistsep,itemsep=3pt,topsep=3pt,leftmargin=2em} %
\crefname{algorithm}{Algorithm}{Algorithms} %
\crefname{metaalgorithm}{Meta-Algorithm}{Meta-Algorithms} %
\newenvironment{algspec}{ %
  \begin{mdframed}[style=figstyle]}{ %
  \end{mdframed}} %
\theoremstyle{plain} % default
\newtheorem{theorem}{Theorem}[section] %
\newtheorem{lemma}{Lemma}[section] %
\newtheorem{proposition}{Proposition}[section] %
\newtheorem{corollary}{Corollary}[section] %
\theoremstyle{definition} %
\newtheorem{definition}{Definition}[section] %
\theoremstyle{remark} %
\newcommand{\microspace}{\mspace{.5mu}} %
\newcommand{\ket}[1]{\ensuremath{\lvert\microspace#1%
    \microspace\rangle}} %
\newcommand{\bigket}[1]{\bigl\lvert\microspace#1%
  \microspace\bigr\rangle} %
\newcommand{\bra}[1]{\ensuremath{\langle\microspace#1%
    \microspace\rvert}} %
\newcommand{\bigbra}[1]{\bigl\langle\microspace#1%
  \microspace\bigr\rvert} %
\newcommand{\ignore}[1]{} %
\newcommand{\ip}[2]{\ensuremath{\left\langle#1,#2\right\rangle}} %
\newcommand{\norm}[1]{\ensuremath{\left\lVert #1 \right\rVert}} %
\newcommand{\abs}[1]{\ensuremath{\left\lvert #1 \right\rvert}} %
\newcommand{\defeq}{\stackrel{\smash{\text{\tiny\rm def}}}{=}} %
\newcommand{\real}{\mathbb{R}} %
\newcommand{\at}[2][]{#1|_{#2}} %
\newcommand{\dom}[1]{\mathrm{dom}\,{#1}} %
\newcommand{\interior}[1]{\mathrm{int} (#1)} %
\newcommand{\intdom}[1]{\interior{\dom{#1}}} %
\newcommand{\epi}[1]{\mathrm{epi}\,#1} %
\newcommand{\cl}[1]{\mathrm{cl}\left(#1\right)} %
\newcommand{\class}[1]{\textup{#1}\xspace} %
\newcommand{\MIP}{\class{MIP}} %
\newcommand\MIP*{\ensuremath{\class{MIP}^*}} %
\newcommand{\setft}[1]{\mathrm{#1}} %
\newcommand{\Density}{\setft{D}} %
\newcommand{\Pos}{\setft{Pos}} %
\newcommand{\Herm}{\setft{Herm}} %
\newcommand{\labelstyle}[1]{\mathrm{#1}} %
\newcommand{\Shannon}{\labelstyle{H}} %
\newcommand{\lint}{\labelstyle{int}} %
\newcommand{\lbd}{\labelstyle{bd}} %
\newcommand{\li}{\labelstyle{i}} %
\newcommand{\lb}{\labelstyle{b}} %
\newcommand{\abl}{\labelstyle{Average}} %
\newcommand{\pf}{\mathscr{P}} % Projection family
\newcommand{\lf}{\mathscr{L}} % Linear family
\newcommand{\clpf}{\cl{\pf}} % Closure of projection family
\newcommand{\aux}{\mathcal{A}} % Auxiliary function
\newcommand\restr[2]{{% we make the whole thing an ordinary symbol
  \left.\kern-\nulldelimiterspace% automatically resize the bar with \right
  #1 % the function
  \vphantom{\big|} % pretend it's a little taller at normal size
  \right|_{#2} % this is the delimiter
  }}
\DeclareMathOperator{\tr}{tr} %
\DeclareMathOperator*{\argmin}{argmin} %
\DeclareMathOperator*{\argmax}{argmax} %
\DeclareMathOperator{\supp}{supp} %
\DeclareMathOperator{\eig}{eig} %
\def\e{e} % exp
\def\I{\mathds{1}} % Id
\def\L{\mathcal{L}} % Legendre-Bregman projection
\def\X{\mathcal{X}} % Space
\begin{document}

%% End-Of-Header

\title{\Large\bf Classical and Quantum Iterative Optimization Algorithms Based on
  Matrix Legendre-Bregman Projections}

\renewcommand*{\Affilfont}{\small\itshape} % chktex 6
\author{Zhengfeng Ji} %
\affil{ %
  Department of Computer Science and Technology,\protect\\
  Tsinghua University, Beijing, China
}

\date{\today}

\maketitle

\begin{abstract}
  We consider Legendre-Bregman projections defined on the Hermitian matrix space and design iterative optimization algorithms based on them.
  A general duality theorem is established for Bregman divergences on Hermitian matrices and it plays a crucial role in proving the convergence of the iterative algorithms.
  We study both exact and approximate Bregman projection algorithms.
  In the particular case of Kullback-Leibler divergence, our approximate iterative algorithm gives rise to the non-commutative versions of both the generalized iterative scaling (GIS) algorithm for maximum entropy inference and the AdaBoost algorithm in machine learning as special cases.
  As the Legendre-Bregman projections are simple matrix functions on Hermitian matrices, quantum algorithmic techniques are applicable to achieve potential speedups in each iteration of the algorithm.
  We discuss several quantum algorithmic design techniques applicable in our setting, including the smooth function evaluation technique, two-phase quantum minimum finding, and NISQ Gibbs state preparation.
\end{abstract}

\section{Introduction}\label{sec:intro}

% Bregman divergence

Bregman divergence is a quantity introduced in~\cite{Bre67} to solve convex
optimization problems under linear constraints.
It is a distance-like measure defined for a convex function and generalizes the
Euclidean distance.
An important example of Bregman divergence beyond the squared Euclidean distance
is the Kullback-Leibler information divergence.
Bregman divergence and several related concepts play a crucial role in many
areas of study, including optimization theory~\cite{Byr07}, statistical learning
theory~\cite{LPP97,CSS02,D-PD-PL97,BMDG05}, and information theory~\cite{CS04}.

% Bregman projection

Given a convex optimization problem over a convex set $C = \bigcap_{j} C_{j}$
specified as the intersection of potentially simpler convex sets $C_{j}$,
Bregman's method iteratively projects the initial point to different $C_{j}$'s
using Bregman's divergence as the measure of distance.
This method, referred to as Bregman's projection algorithm~\cite{Bre67}, is one
of the most influential iterative algorithms for solving convex optimization
problems under linear constraints~\cite{Byr07}.
It generalizes the orthogonal projection algorithms in the Euclidean space and
the Euclidean distance is replaced by the Bregman divergence for certain
underlying convex functions.
Many important convex optimization algorithms are special cases of Bregman's
iterative projection algorithm by choosing different divergence measures.
When Bregman's divergence is chosen to be the Kullback-Leibler
divergence, Bregman's projection is also known as information
projection~\cite{CS04} and several iterative algorithms, including the
generalized iterative scaling algorithm~\cite{DR72} (also known as the SMART
algorithm~\cite{GBH70}), are information projection algorithms~\cite{CS04}.

% Approximate

Bregman's algorithm iteratively computes the Bregman's projection on convex
sets, the exact computation of which is a complicated task and usually does not
have explicit formulas even in the simple case of linear constraints.
Several different ideas have been proposed to compute the Bregman's projection
approximately~\cite{Cen81,D-PI86,CD-PE+90,CD-PI91,CH02,CSS02,D-PD-PL97,D-PD-PL02}.
Of great importance to this work is the auxiliary function
method~\cite{D-PD-PL97,CSS02,D-PD-PL02} that designs an auxiliary function to
bound the progress of the iterative update procedure measured by Bregman
divergence.
Important learning algorithms including the improved iterative scaling
algorithm~\cite{D-PD-PL97} and the AdaBoost
algorithm~\cite{FS97,CSS02,D-PD-PL02} can be analyzed using the auxiliary
function method and shown to converge to the correct optimizer.

% Hermitian

This paper considers \emph{non-commutative} analogs of Bregman's
projection algorithms where Bregman divergence is defined for Hermitian
matrices.
For a real convex function $f$, and two matrices $X$, $Y$, we consider the
Bregman divergence $D_{f}(X, Y) = \tr \bigl(f(X) - f(Y) - f'(Y)(X-Y))$.
Even though generalizing the Bregman's projection and related concepts to the
matrix case is straightforward, analyzing their behavior is a challenging task.
This is primarily due to two reasons.
On the one hand, matrices are generally \emph{non-commutative}, so inequalities
used in the analysis become much harder to establish.
On the other hand, the proof of convergence in the classical case relied on the
continuity of Bregman divergence, while Bregman divergence for matrices
is usually \emph{discontinuous}.
A natural question we ask here is: ``\textit{Are there natural generalizations of
  the iterative Bregman projection algorithms that converge correctly given the
  difficulty posed in the non-commutative case?}''
We answer the question in the positive by establishing two main results in the
paper.

% Duality

First, we prove a \emph{general duality theorem} in the non-commutative case.
The duality theorem is concerned with the optimization of $D_{f}(X, Y)$ in two
situations.
On one hand, we minimize $D_{f}(X, Y_{0})$ over $X$ in a linear family defined
as $\lf= \bigl\{ X \mid \ip{F_{j}}{X} = \ip{F_{j}}{X_{0}}, \text{ for
} j = 1, 2, \ldots, k \bigr\}$.
That is, we compute the projection of $Y_{0}$ to the linear family under the
Bregman divergence $D_{f}$.
On the other hand, we consider minimization $D_{f}(X_{0}, Y)$ over $Y$ in a
family of Hermitian matrices called the Legendre-Bregman projection family
$\pf = \bigl\{ Y \mid Y = \L_{f}(Y_{0}, \lambda \cdot F) \bigr\}$ where
$\L_{f}(Y, \Lambda) = (f^{*})'(f'(Y) + \Lambda)$.
The duality theorem states that optimizers in the above two situations coincide
under simple and easy-to-verify conditions on $f$.
An important special case of the duality theorem when $f(x) = x \ln x - x$ is
the well-known result that the linear family and the closure of the exponential
family intersect at a point that maximizes the entropy function under linear
constraints~\cite{CS04,AAKS21}.
This special case is well-known as the Jaynes' maximum entropy
principle~\cite{Jay57}, which states that the maximum entropy state satisfying
linear constraints is the unique intersection of the linear family defined by
the constraints and the closure of the exponential family.
The duality theorem is the key to proving the convergence of our exact (and
approximate) matrix Bregman projection algorithms.

% Inequality

Second, we prove a matrix inequality that is essential for analyzing the
approximate information projection algorithms.
In the classical case, Jensen's inequality suffices to establish the properties
we require for the auxiliary function.
In the non-commutative case, the corresponding inequality is much harder to
establish.
In fact, the inequality is not always true for all convex functions.
Fortunately for us, we can prove the inequality for Kullback-Leibler divergence
by employing a strengthened version of Golden-Thompson inequality recently
established by Carlen and Lieb~\cite{CL19}.
This new inequality, together with the duality theorem, guarantees the
convergence of approximate information projection algorithms.

Important examples of this algorithm include the \emph{matrix AdaBoost
  algorithm} and the \emph{matrix generalized iterative scaling (GIS) algorithm}
as special cases.
The matrix AdaBoost algorithm has a physical interpretation.
It is an iterative algorithm that minimizes the partition function of the linear
family of Hamiltonians.
The matrix GIS algorithm is an algorithm for maximum entropy inference and can
be applied in understanding many-body quantum systems~\cite{CJZZ12,Alh22}.
The convergence of these algorithms follows as they are special cases of the
information projection algorithm.

% Quantum algorithm

Several potential quantum speedups are identified for the iterative algorithms
proposed in the paper.
As the computation in each iterative step of our algorithms boils down to the
computation of a matrix $Y' = \L_{f}(Y, \Lambda)$ and then an update of
parameters based on average values of the matrix $Y'$ with respect to given
operators $F_{j}$.
We can represent $Y$ as a quantum state and then employ quantum algorithmic
techniques such as singular value transformation~\cite{GSLW19,Gil19} and smooth
function evaluation~\cite{AGGW20} to compute $Y'$, the updated version of $Y$.

Thanks to the flexibility of the auxiliary function framework, we can update the
parameters either in parallel or sequentially, and the analysis of them can be
treated uniformly.
The \emph{sequential} update variant has the advantage of utilizing the fast
quantum OR lemma for searching a violation with low sample complexity of the
states representing the matrix $Y' = \L_{f}(Y, \Lambda)$.
As this state preparation is usually the most expensive part of the computation,
the saving in samples could lead to substantial speedups.

When the state preparation for $Y' = \L_{f}(Y, \Lambda)$ can be done on a
near-term quantum device, our algorithms can also be performed on the same
near-term device because of the intrinsic iterative structure and simple update
rules.
This gives rise to near-term applications for certain convex optimization
problems.

% Techniques

\subsection{Techniques}

To prove the general duality theorem, we followed the approach
in~\cite{D-PD-PL02} with important changes to avoid the problems caused by the
discontinuity of $D_{f}(X, Y)$ and to simplify the assumptions on $f$.
Assumption A.3 in~\cite{D-PD-PL02} requires that $D_{f}(X, Y)$ is continuous
with respect to $X$ and $Y$, a condition holds in the classical case but fails
in the quantum case where $X, Y$ are matrices.
In fact, an explicit example is given in~\cite[Example 7.29]{BB97} showing that
$\lim_{t \to \infty} D(Y, Y_{t}) \ne 0$ for a sequence ${(Y_{t})}_{t}$
converging to $Y$.
We get around the difficulty by extending the domain of the Bregman divergence
carefully and establishing the required properties directly for the extended
versions without using continuity of $D_{f}(X, Y)$.
We also identify \emph{one simple condition} that supersedes the five
assumptions A.1--A.5 of~\cite{D-PD-PL02}.
The simple condition requires that the underlying convex function $f$ has an
open conjugate domain $\dom{f^{*}}$.
The key consequence for a convex function having an open conjugate domain is
that we can show $D_{f}(X, Y)$ is \emph{coercive} with respect to $Y$.
That is, the set $\{ Y \mid D_{f}(X, Y) \le c \}$ is bounded for any constant
$c$, a condition key to show compactness and convergence later on.
Such a condition is equivalent to the function being a so-called
Bregman-Legendre function~\cite{BB97} for real functions.
However, it is also shown in the same paper that when the function is extended
to Hermitian matrices, the function is not Bregman-Legendre because of the
discontinuity issue.
We use techniques from matrix perturbation theory to prove that such a condition
suffices to guarantee the validity of the duality theorem.
There are many interesting convex functions that satisfy the condition, and we
have listed a few important ones in \cref{tab:examples}~\cite{BB97}.

The continuity of $\L_{f}(Y, \Lambda)$ is still essential for our proof
of the convergence analysis, and it is one of the main technical parts of our
proof.
In the end, the problem is roughly a matrix perturbation problem where we
have a block matrix $
\begin{pmatrix}
  A_{0} & B\\
  B^{\dagger} & A_{1}
\end{pmatrix}
$ where $A_{0}$ has small eigenvalues and $A_{1}$ has large eigenvalues and
$\norm{B} \le 1$.
We need to show that the perturbation of the spectrum by $B$ becomes arbitrarily
smaller when the eigenvalues of $A_{0}$ go to infinity.
For this, we make use of two results from perturbation theory that take care of
the perturbation of the eigenvalues (a result by Mathias~\cite{Mat98}) and the
perturbation of the eigenprojections (Davis-Kahan $\sin(\Theta)$
theorem~\cite{DK70}).

For the second result about information projection algorithms, the main
difficulty we encounter is to show appropriate matrix inequalities so that we
can bound the improvement measured by the change in the Bregman divergence using
the auxiliary function.
We are able to show the inequality for an important case when the divergence is
Kullback-Leibler information divergence.
The resulting iterative update algorithm has a very similar flavor to the matrix
multiplicative weight update method (MMWU)~\cite{Kal07,AHK12}.
It is an adaptive version where the update step size is not fixed as in MMWU,
but depends on the violation at the current step.
MMWU is proved to be a powerful framework in designing both classical and
quantum algorithms for semi-definite programming (SDP) problems.
The QIS algorithm are applicable as a replacement for MMWU in some cases and
provide possible speedups thanks to its adaptive nature of the QIS algorithm.
The technical inequality for the MMWU analysis is the Golden-Thompson
inequality.
In contrast, the Golden-Thompson inequality does not seem to be powerful enough
any more for the information projection algorithms, even in combination with
Jensen's operator and trace inequalities.
Luckily for us, an improved Golden-Thompson inequality established recently by
Carlen and Lieb~\cite{CL19} fits our analysis perfectly.

Jaynes' maximum entropy principle was a crucial fact that recent
studies~\cite{AAKS20,AAKS21,HKT21} on the Hamiltonian learning problems heavily
rely on.
The QIS algorithm serves as a candidate algorithm that solves a related problem
that we call the Hamiltonian inference problem.
Both problem tries to learn information about the Hamiltonian.
In the Hamiltonian learning problem, the algorithm is provided copies of the
Gibbs state of the \emph{true Hamiltonian}.
While in the Hamiltonian inference problem, the algorithm does not have access
to the true Gibbs state, but only has information about the local information of
it and is allowed to make \emph{adaptive} queries to the Gibbs state of
candidate Hamiltonians in the linear family of the Hamiltonians.
For local Hamiltonians, it is likely that the QIS algorithm can solve the
Hamiltonian inference problem with not only polynomial sample complexity but
with polynomial time complexity as well.
We leave the analysis to future work.

\section{Preliminary}\label{sec:prelim}

% convex

We will need the following concepts from convex analysis.
In convex analysis, functions are defined on all of $\real^{m}$ and take values
from $\real \cup \{\pm \infty\}$.
The \emph{(effective) domain} of a function
$\phi : \real^{m} \rightarrow \real \cup \{\pm \infty\}$ is the set
\begin{equation*}
  \dom{\phi} = \{ x \in \real^{m} \mid \phi(x) < +\infty \}.
\end{equation*}
A function $\phi$ is \emph{convex} if its domain $\dom{\phi}$ is a convex set
and it satisfies
\begin{equation*}
  \phi(t x + (1-t) y) \le t \phi(x) + (1-t) \phi(y)
\end{equation*}
for all $t \in \interval[open]{0}{1}$ and $x, y \in \dom{\phi}$.
The function $\phi$ is \emph{strictly convex} if the above inequality is strict.

For two real vectors $x, y \in \real^{m}$, we define $x \cdot y$ as
$\sum_{i=1}^{m} x_{i} y_{i}$.
For matrices $A, B$, define $\ip{A}{B} = \tr(A^{\dagger} B)$.

% Fenchel conjugate

The \emph{Fenchel conjugate} $\phi^{*}$ of a convex function
$\phi$ is defined as
\begin{equation}
  \label{eq:conjugate}
  \phi^{*}(y) = \sup\, \{ x \cdot y - \phi(x) \mid x \in \real^{m} \}.
\end{equation}

% derivative

Let $f$ be a smooth real function and $A(x)$ a matrix whose entries are
functions of $x$.
Then
\begin{equation}
  \label{eq:dif-trace}
  \odv*{\tr f(A(x))}{x} = \ip{f'(A(x))}{\odv*{A(x)}{x}}.
\end{equation}

% perturbation theory

We will need several results from matrix perturbation theory in our proofs.
Let $A$ be an Hermitian matrix of size $m+n$ by $m+n$ and has a block form $A =
\begin{pmatrix}
  A_{0} & 0\\
  0 & A_{1}
\end{pmatrix}
$ where $A_{0}$ and $A_{1}$ are $m$ by $m$ and $n$ by $n$ Hermitian matrices
respectively.
Let $\tilde{A} =
\begin{pmatrix}
  A_{0} & B\\
  B^{\dag} & A_{1}
\end{pmatrix}
$ be a perturbation of $A$.
We have the following two eigenvalue and eigenvector perturbation bounds.

\begin{proposition}[Eigenvalue Perturbation Bound~\cite{Mat98}]%
  \label{pro:eigenvalue-perturb}
  Let $A$ and $\tilde{A}$ be Hermitian matrices given above.
  Let $\lambda_{k}$ and $\tilde{\lambda}_{k}$ be the $k$-th largest eigenvalue
  of $A$ and $\tilde{A}$ respectively.
  Suppose the eigenvalues of $A_{0}$ and $A_{1}$ are separated in the sense that
  $\lambda_{\min}(A_{1}) - \lambda_{\max}(A_{0}) \ge \eta > 0$.
  Then for all $k = 1, 2, \ldots, m+n$,
  \begin{equation*}
    \abs{\lambda_{k} - \tilde{\lambda}_{k}} \le \frac{{\norm{B}}^{2}}{\eta}.
  \end{equation*}
\end{proposition}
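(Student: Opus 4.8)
The plan is to sidestep eigenvector perturbation altogether and argue purely at the level of eigenvalue \emph{counting functions}, because the spectral gap enters most transparently through a Schur complement, and this is exactly what converts the first-order Weyl bound $\norm{B}$ into the second-order bound $\norm{B}^{2}/\eta$. For a Hermitian matrix $M$ and $\mu \in \real$, write $N(M,\mu)$ for the number of eigenvalues of $M$ strictly greater than $\mu$; equivalently $N(M,\mu)$ is the number of positive eigenvalues of $M - \mu I$, and the $k$-th largest eigenvalue is characterized by $\lambda_{k}(M) > \mu \iff N(M,\mu) \ge k$. Since the gap condition forces the $n$ largest eigenvalues of $A$ to be those of $A_{1}$ and the $m$ smallest to be those of $A_{0}$, and since replacing $(A,\tilde{A})$ by $(-A,-\tilde{A})$ swaps the two diagonal blocks while preserving both $\eta$ and $\norm{B}$, it suffices to prove the bound for the lower indices, i.e.\ those $k$ with $\lambda_{k} \le \lambda_{\max}(A_{0})$.

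The crux is a single application of Haynsworth inertia additivity. Fix any $\mu \le \lambda_{\max}(A_{0})$; then $\mu < \lambda_{\min}(A_{1})$, so the block $A_{1} - \mu I$ of $\tilde{A} - \mu I$ is positive definite, and additivity with respect to this block gives
\begin{equation*}
  N(\tilde{A}, \mu) = n + N\bigl(S(\mu), 0\bigr), \qquad S(\mu) = (A_{0} - \mu I) - B(A_{1} - \mu I)^{-1} B^{\dagger}.
\end{equation*}
All of the quadratic savings is contained in the correction term: from $A_{1} - \mu I \succeq (\lambda_{\min}(A_{1}) - \mu) I$ and $\lambda_{\min}(A_{1}) - \mu \ge \eta$, I get $0 \preceq B(A_{1} - \mu I)^{-1} B^{\dagger} \preceq (\norm{B}^{2}/\eta)\, I$, hence the Löwner sandwich
\begin{equation*}
  (A_{0} - \mu I) - \tfrac{\norm{B}^{2}}{\eta}\, I \preceq S(\mu) \preceq A_{0} - \mu I.
\end{equation*}
Weyl's monotonicity theorem applied to this sandwich turns it into the two count estimates $n + N(A_{0}, \mu + \norm{B}^{2}/\eta) \le N(\tilde{A},\mu) \le n + N(A_{0},\mu)$.

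To finish, I would feed these two estimates into the characterization of the $k$-th largest eigenvalue. For a lower index $k$, the upper estimate with $\mu$ decreasing to $\lambda_{k}$ yields $\tilde{\lambda}_{k} \le \lambda_{k}$ (using $N(A_{0},\mu) \le k-n-1$ once $\mu > \lambda_{k}$), while the lower estimate with $\mu = \lambda_{k} - \norm{B}^{2}/\eta - \epsilon$ yields $N(\tilde{A},\mu) \ge n + N(A_{0}, \lambda_{k} - \epsilon) \ge n + (k-n) = k$, whence $\tilde{\lambda}_{k} > \lambda_{k} - \norm{B}^{2}/\eta - \epsilon$; letting $\epsilon \to 0$ gives $\tilde{\lambda}_{k} \ge \lambda_{k} - \norm{B}^{2}/\eta$. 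Combining the two bounds proves $\abs{\lambda_{k} - \tilde{\lambda}_{k}} \le \norm{B}^{2}/\eta$.

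I expect the only real obstacle to be keeping the constant sharp rather than the scheme itself. The tempting alternative, differentiating the eigenvalues along $t \mapsto \bigl(\begin{smallmatrix} A_{0} & tB \\ tB^{\dagger} & A_{1}\end{smallmatrix}\bigr)$ and using Hellmann--Feynman, gives $\tfrac{d}{dt}\lambda_{k} = 2\Re(p^{\dagger}Bq)$ for the unit eigenvector $\binom{p}{q}$, but bounding $\norm{q}$ through the resolvent only produces the distance $\lambda_{\min}(A_{1}) - \lambda_{k}(t)$, which degrades to $\eta - \norm{B}$ unless one first proves that the lower eigenvalues move monotonically downward and hence never leave $\interval[open left]{-\infty}{\lambda_{\max}(A_{0})}$. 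The inertia argument is attractive precisely because it dispenses with this monotonicity bootstrap while still delivering the sharp $\norm{B}^{2}/\eta$; the remaining care is purely in the boundary cases $\mu \to \lambda_{k}$ and in checking that the additivity formula persists across the eigenvalue-free window $\interval[open]{\lambda_{\max}(A_{0})}{\lambda_{\min}(A_{1})}$, where $S(\mu) \prec 0$ forces $N(\tilde{A},\mu)=n$.
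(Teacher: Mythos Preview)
The paper does not prove this proposition; it is stated in the preliminaries as a cited result from Mathias~\cite{Mat98} and used later as a black box in the proof of \cref{lem:proj-conti}. So there is no in-paper argument to compare against.

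Your Schur-complement/inertia argument is correct and self-contained. The key observations all check out: for $\mu < \lambda_{\min}(A_{1})$ the block $A_{1}-\mu I$ is invertible and positive definite, Haynsworth additivity gives $N(\tilde{A},\mu) = n + N(S(\mu),0)$, and the Löwner sandwich on $S(\mu)$ follows from $0 \preceq B(A_{1}-\mu I)^{-1}B^{\dagger} \preceq (\norm{B}^{2}/\eta)I$ whenever $\mu \le \lambda_{\max}(A_{0})$. Your handling of the two directions is also clean: the upper bound $\tilde{\lambda}_{k} \le \lambda_{k}$ only needs $S(\mu) \preceq A_{0}-\mu I$, which holds for every $\mu < \lambda_{\min}(A_{1})$ (so the worry about $\mu$ exceeding $\lambda_{\max}(A_{0})$ is harmless there), while the lower bound uses the full sandwich at $\mu = \lambda_{k} - \norm{B}^{2}/\eta - \epsilon \le \lambda_{\max}(A_{0})$. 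The reflection $(A,\tilde{A}) \mapsto (-A,-\tilde{A})$ to cover the upper indices is valid. In fact your argument recovers the sharper one-sided statement that the off-diagonal perturbation pushes the lower $m$ eigenvalues down and the upper $n$ eigenvalues up, which is slightly more than the proposition asserts.
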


\begin{proposition}[Davis-Kahan $\sin(\Theta)$ Theorem~\cite{DK70}]%
  \label{pro:eigenvector-perturb}
  Let $A$ and $\tilde{A}$ be Hermitian matrices in $\Herm{\X}$ given above.
  Let $V_{0}$ and $V_{1}$ be two isometries mapping into $\X$ whose ranges are
  two orthogonal eigenspaces of $\tilde{A}$ and let $V =
  \begin{pmatrix}
    V_{0} & V_{1}
  \end{pmatrix}
  $.
  Write $\tilde{A} =
  V
  \begin{pmatrix}
    \Lambda_{0} & 0\\
    0 & \Lambda_{1}
  \end{pmatrix}
  V^{\dag} $.
  Define $C_{0}, S_{0}, C_{1}, S_{1}$ to be the submatrices of $V$ as $V =
  \begin{pmatrix}
    C_{0} & - S_{1} \\
    S_{0} & \phantom{-}C_{1}
  \end{pmatrix}
  $.
  Suppose the eigenvalues of $A_{0}$ and $\Lambda_{1}$ are separated in the
  sense that $\lambda_{\min}(\Lambda_{1}) - \lambda_{\max}(A_{0}) \ge \eta > 0$.
  Then
  \begin{equation*}
    \norm{S_{0}} \le \frac{\norm{B}}{\eta}.
  \end{equation*}
\end{proposition}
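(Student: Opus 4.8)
The plan is to reduce the statement to a Sylvester matrix equation and then invert the associated Sylvester operator using the spectral gap $\eta$. Since $V$ diagonalizes $\tilde{A}$, the eigenspace relation reads $\tilde{A} V = V \begin{pmatrix} \Lambda_{0} & 0\\ 0 & \Lambda_{1}\end{pmatrix}$. First I would expand this block identity. The columns spanning the eigenspace of $\Lambda_1$ are $V_{1} = \begin{pmatrix} -S_{1}\\ C_{1}\end{pmatrix}$ and satisfy $\tilde{A} V_{1} = V_{1} \Lambda_{1}$; reading off the top block of this equation with $\tilde{A} = \begin{pmatrix} A_{0} & B\\ B^{\dagger} & A_{1}\end{pmatrix}$ yields
\begin{equation*}
  A_{0} S_{1} - S_{1} \Lambda_{1} = B C_{1}.
\end{equation*}
The crucial point is that the two spectra appearing here, those of $A_{0}$ and of $\Lambda_{1}$, are exactly the sets separated by the hypothesis.

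Second, I would bound $\norm{S_{1}}$ by inverting the Sylvester operator $X \mapsto A_{0} X - X \Lambda_{1}$. Because $A_{0}$ and $\Lambda_{1}$ are Hermitian with $\lambda_{\min}(\Lambda_{1}) - \lambda_{\max}(A_{0}) \ge \eta > 0$, this operator is invertible and the solution admits the integral representation
\begin{equation*}
  S_{1} = - \int_{0}^{\infty} e^{t A_{0}}\, (B C_{1})\, e^{-t \Lambda_{1}}\, \mathrm{d}t,
\end{equation*}
which one checks by differentiating the integrand. Its operator norm is at most $e^{-t \eta} \norm{B C_{1}}$ since $\norm{e^{t A_{0}}} = e^{t \lambda_{\max}(A_{0})}$ and $\norm{e^{-t \Lambda_{1}}} = e^{-t \lambda_{\min}(\Lambda_{1})}$, so integrating gives $\norm{S_{1}} \le \norm{B C_{1}}/\eta$. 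As $C_{1}$ is a submatrix of the unitary $V$, we have $\norm{C_{1}} \le 1$, hence $\norm{S_{1}} \le \norm{B}/\eta$.

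Finally, I would pass from $S_{1}$ to $S_{0}$. Here $\norm{S_{0}}$ is the largest sine of the canonical angles between the eigenspace $\operatorname{range}(V_{0})$ and the coordinate subspace of the first $m$ coordinates (the $A_{0}$-eigenspace of the unperturbed $A$), while $\norm{S_{1}}$ is the corresponding sine for the orthogonal complements. Concretely, the $(1,1)$ blocks of $V^{\dagger} V = I$ and $V V^{\dagger} = I$ give $S_{0}^{\dagger} S_{0} = I - C_{0}^{\dagger} C_{0}$ and $S_{1} S_{1}^{\dagger} = I - C_{0} C_{0}^{\dagger}$, so that $\norm{S_{0}}^{2} = 1 - \lambda_{\min}(C_{0}^{\dagger} C_{0})$ and $\norm{S_{1}}^{2} = 1 - \lambda_{\min}(C_{0} C_{0}^{\dagger})$. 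Identifying $\operatorname{range}(V_{0})$ with the $m$-dimensional eigenspace matching the block $A_{0}$ makes $C_{0}$ square, so $C_{0}^{\dagger} C_{0}$ and $C_{0} C_{0}^{\dagger}$ share the same spectrum; thus $\norm{S_{0}} = \norm{S_{1}} \le \norm{B}/\eta$.

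I expect the last step, rather than the Sylvester estimate, to be the main obstacle. The separation hypothesis is tailored to the pair $(A_{0}, \Lambda_{1})$ and therefore controls $S_{1}$ directly, whereas the statement asks about $S_{0}$; one must justify that these two off-diagonal blocks of the unitary $V$ have equal operator norm, i.e.\ the symmetry of principal angles under orthogonal complementation, which hinges on the dimensional matching between $\operatorname{range}(V_{0})$ and the $A_{0}$-block that I would fix at the outset. The operator-norm (as opposed to Frobenius-norm) Sylvester bound also deserves a word of care, but the exponential integral representation delivers the clean constant $1/\eta$ with no extra dimensional factors.
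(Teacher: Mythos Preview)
The paper does not supply its own proof of this proposition; it is quoted as a known result from Davis and Kahan~\cite{DK70} and used as a black box in the proof of \cref{lem:proj-conti}. So there is no in-paper argument to compare against.

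That said, your proposal is a correct and essentially standard proof of the $\sin(\Theta)$ theorem in operator norm. The derivation of the Sylvester equation $A_{0}S_{1} - S_{1}\Lambda_{1} = BC_{1}$ from the eigenvector relation is right, and the exponential integral representation gives the sharp constant $1/\eta$ in operator norm without dimensional factors, exactly because the gap hypothesis is one-sided ($\lambda_{\min}(\Lambda_{1}) > \lambda_{\max}(A_{0})$). Your identification of the last step as the delicate one is also accurate: the hypothesis controls $S_{1}$, not $S_{0}$, directly, and you need $C_{0}$ square to conclude $\norm{S_{0}} = \norm{S_{1}}$ from $S_{0}^{\dagger}S_{0} = I - C_{0}^{\dagger}C_{0}$ and $S_{1}S_{1}^{\dagger} = I - C_{0}C_{0}^{\dagger}$. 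In the paper's intended application (\cref{lem:proj-conti}) the splitting of the eigenspaces of $\tilde{A}$ is chosen so that $\dim\operatorname{range}(V_{0}) = m$, and this is also the natural reading of the proposition as stated (the block structure of $V$ would otherwise be lopsided), so your assumption is justified. With that in place the argument is complete.
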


\subsection{Inequalities}

\begin{lemma}[Jensen's operator inequality]\label{lem:jensen-operator}
  For operator convex function $f$, operators $A_i$ satisfying $\sum_i A_i^\dag
  A_i = \I$ and $X_i \in \Herm(\X)$, the following inequality holds
  \begin{equation*}
    f \Bigl( \sum_i A_i^\dag X_i A_i \Bigr) \preceq
    \sum_i A_i^\dag f(X_i) A_i.
  \end{equation*}
\end{lemma}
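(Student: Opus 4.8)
The plan is to recognize this as the unital, many-term form of Jensen's operator inequality and to establish it in two stages: an algebraic reduction to the single-isometry case, followed by a dilation-plus-pinching argument that extracts the single-isometry inequality from the mere two-point definition of operator convexity of $f$. Throughout I assume (as is implicit in the statement) that $f$ is operator convex on an interval $I$ containing the spectra of all the $X_i$, so that functional calculus is well defined on every operator that arises.

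First I would bundle the data. Let $A$ be the column operator mapping $\X$ into $\X^{\oplus n}$ whose blocks are $A_1, \ldots, A_n$, and let $X = \bigoplus_i X_i \in \Herm(\X^{\oplus n})$ be the block-diagonal operator. The hypothesis $\sum_i A_i^\dag A_i = \I$ says precisely that $A^\dag A = \I$, i.e.\ $A$ is an isometry. Since functional calculus acts blockwise on a direct sum, $f(X) = \bigoplus_i f(X_i)$, and a direct computation gives $A^\dag X A = \sum_i A_i^\dag X_i A_i$ as well as $A^\dag f(X) A = \sum_i A_i^\dag f(X_i) A_i$. Thus the claimed inequality is equivalent to the single-isometry statement $f(A^\dag X A) \preceq A^\dag f(X) A$, and it remains to prove this.

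For the isometry case I would pass to a unitary dilation. In finite dimensions the isometry $A \colon \X \to \X^{\oplus n}$ completes to a unitary $U = \begin{pmatrix} A & D \end{pmatrix}$ on $\X^{\oplus n}$, where $D$ is an isometry onto the orthogonal complement of the range of $A$. Set $Y = U^\dag X U$; since $U$ is unitary we have $f(Y) = U^\dag f(X) U$, and the top-left $\X \times \X$ corner of $Y$ (respectively of $f(Y)$) equals $A^\dag X A$ (respectively $A^\dag f(X) A$), because $U$ restricted to the first block is $A$. Now introduce the self-adjoint unitary $W = \I_{\X} \oplus (-\I)$ on $\X^{\oplus n}$: conjugation by $W$ flips the sign of all off-diagonal blocks, so the pinching $\mathcal{P}(Z) = \tfrac12(Z + W Z W^\dag)$ annihilates them. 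Applying operator convexity of $f$ to the two Hermitian operators $Y$ and $W Y W^\dag$, and using $f(W Y W^\dag) = W f(Y) W^\dag$ (valid since $W$ is unitary), yields $f(\mathcal{P}(Y)) \preceq \mathcal{P}(f(Y))$. Since $\mathcal{P}(Y)$ is block diagonal, $f(\mathcal{P}(Y))$ has top-left corner $f(A^\dag X A)$, whereas $\mathcal{P}(f(Y))$ has top-left corner $A^\dag f(X) A$; reading off the two corners of this Loewner inequality gives $f(A^\dag X A) \preceq A^\dag f(X) A$, as required.

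I expect the single-isometry step to be the crux: the reduction in the second paragraph is pure bookkeeping, whereas the real content is that the two-point definition of operator convexity propagates to an arbitrary isometric average. The technical points I would keep under control are that all spectra remain inside $I$ (so the identities $f(\bigoplus_i X_i) = \bigoplus_i f(X_i)$ and $f(U^\dag X U) = U^\dag f(X) U$ are legitimate, which holds because unitary conjugation and direct sums preserve spectra), and that passing to the top-left corner of a Loewner inequality is monotone --- this is just compression $Z \mapsto P^\dag Z P$ by the corner-embedding isometry $P$, which is positive and hence order-preserving.
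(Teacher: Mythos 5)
The paper does not prove this lemma at all: it is stated in the preliminaries as a standard known fact (alongside Golden--Thompson and Carlen--Lieb), so there is no in-paper argument to compare against. Your proof is the classical Hansen--Pedersen dilation-and-pinching argument for the unital form of Jensen's operator inequality, and it is correct: the reduction to the single-isometry case via the column operator $A$ and the block-diagonal $X=\bigoplus_i X_i$ is exact, the unitary completion $U=\begin{pmatrix} A & D\end{pmatrix}$ exists in finite dimensions because the range of the isometry $A$ has a complement of the right dimension, and the pinching $\mathcal{P}(Z)=\tfrac12\bigl(Z+WZW^{\dag}\bigr)$ correctly converts the two-point operator convexity of $f$ into the compression inequality $f(A^{\dag}XA)\preceq A^{\dag}f(X)A$ after reading off top-left corners. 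The only justification I would tighten is the spectral containment for the midpoint $\mathcal{P}(Y)$: it is not covered by ``unitary conjugation and direct sums preserve spectra'' alone, but it does follow immediately since $\mathcal{P}(Y)$ is block diagonal with blocks $A^{\dag}XA$ and $D^{\dag}XD$, each a unital compression of $X$ and hence with spectrum in $[\lambda_{\min}(X),\lambda_{\max}(X)]\subseteq I$. Note also that the unital hypothesis $\sum_i A_i^{\dag}A_i=\I$ is what lets you avoid any condition like $f(0)\le 0$ that the contractive version of the inequality would require.
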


\begin{lemma}[Jensen's trace inequality]\label{lem:jensen-trace}
  For convex function $f$, operators $A_i$ satisfying $\sum_i A_i^\dag
  A_i = \I$ and $X_i \in \Herm(\X)$, the following inequality holds
  \begin{equation*}
    \tr f \Bigl( \sum_i A_i^\dag X_i A_i \Bigr) \le
    \tr \sum_i A_i^\dag f(X_i) A_i.
  \end{equation*}
\end{lemma}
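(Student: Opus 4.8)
The plan is to reduce this matrix inequality to two successive applications of the scalar Jensen inequality, exploiting the fact that the trace lets us diagonalize the argument on the left-hand side freely. Unlike \cref{lem:jensen-operator}, no operator convexity will be needed: the trace permits us to pass to convenient eigenbases and reason entirely with scalar convex combinations.

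First I would diagonalize $Y = \sum_i A_i^\dag X_i A_i$ as $Y = \sum_k \lambda_k \ket{k}\bra{k}$ with $\{\ket{k}\}$ an orthonormal eigenbasis, so that $\tr f(Y) = \sum_k f(\lambda_k)$. For each fixed $k$, set $\ket{\psi_{i,k}} = A_i \ket{k}$ and $p_{i,k} = \norm{\ket{\psi_{i,k}}}^2 = \bra{k} A_i^\dag A_i \ket{k}$. The normalization hypothesis $\sum_i A_i^\dag A_i = \I$ gives $\sum_i p_{i,k} = 1$, so the $p_{i,k}$ form a probability distribution over $i$. Writing $\mu_{i,k} = \bra{\psi_{i,k}} X_i \ket{\psi_{i,k}}/p_{i,k}$ for the normalized expectation (with the term simply dropped when $p_{i,k} = 0$), we have $\lambda_k = \sum_i p_{i,k}\, \mu_{i,k}$, and scalar convexity of $f$ yields $f(\lambda_k) \le \sum_i p_{i,k}\, f(\mu_{i,k})$.

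The second step handles each $f(\mu_{i,k})$. Since $\mu_{i,k}$ is the expectation of $X_i$ in the unit vector $\ket{\psi_{i,k}}/\sqrt{p_{i,k}}$, expanding in the eigenbasis of $X_i$ expresses $\mu_{i,k}$ as a convex combination of the eigenvalues of $X_i$ weighted by the squared overlaps with that basis. A second application of scalar Jensen then gives $f(\mu_{i,k}) \le \bra{\psi_{i,k}} f(X_i) \ket{\psi_{i,k}}/p_{i,k}$, equivalently $p_{i,k}\, f(\mu_{i,k}) \le \bra{k} A_i^\dag f(X_i) A_i \ket{k}$. Substituting into the bound on $f(\lambda_k)$, summing over $i$, and then summing over $k$ collapses the right-hand side to $\tr \sum_i A_i^\dag f(X_i) A_i$, which is the claimed inequality.

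I expect the only real subtlety to be bookkeeping around degenerate weights $p_{i,k} = 0$, where the normalized vector is undefined; these terms contribute nothing to either side and can be discarded without affecting the argument. The conceptual point worth emphasizing is precisely why this works for arbitrary convex $f$ whereas the operator version of \cref{lem:jensen-operator} demands operator convexity: under the trace we may diagonalize both $Y$ and each $X_i$ and argue with ordinary scalar convex combinations, so no genuinely non-commutative convexity enters. This is exactly the feature that makes the trace form, rather than the operator form, the natural workhorse in the later analysis.
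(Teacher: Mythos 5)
Your argument is correct. The paper itself states \cref{lem:jensen-trace} as a known preliminary fact and gives no proof, so there is nothing internal to compare against; what you have written is the standard self-contained derivation (the one going back to Hansen--Pedersen), reducing the trace inequality to two applications of scalar Jensen: first over the distribution $p_{i,k}=\bra{k}A_i^\dag A_i\ket{k}$ supplied by the normalization $\sum_i A_i^\dag A_i=\I$, and then over the spectral weights of each $X_i$ in the unit vector $A_i\ket{k}/\sqrt{p_{i,k}}$. Your handling of the degenerate case $p_{i,k}=0$ is right (the term vanishes on both sides and the remaining weights still sum to one), and your closing remark correctly identifies why no operator convexity is needed here, in contrast to \cref{lem:jensen-operator}: the trace lets you evaluate everything in eigenbases, so only scalar convexity ever enters. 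The one point worth stating explicitly if you were to write this out in full is the domain bookkeeping: each $\mu_{i,k}$ lies in the convex hull of the spectrum of $X_i$, and each $\lambda_k$ is a convex combination of the $\mu_{i,k}$, so both sides are well defined whenever $\dom{f}$ is an interval containing the spectra of all the $X_i$ --- which is the setting the paper works in.
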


\begin{lemma}[Golden-Thompson inequality]
  For Hermitian matrices $A$ and $B$, it holds that
  \begin{equation*}
    \tr \e^{A+B} \le \tr \bigl(\e^{A} \e^{B}\bigr).
  \end{equation*}
\end{lemma}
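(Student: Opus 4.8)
The plan is to derive the inequality from the Lie--Trotter product formula together with a trace power inequality for positive semidefinite matrices. Since $\e^{A+B} = \lim_{k \to \infty} \bigl( \e^{A/2^{k}} \e^{B/2^{k}} \bigr)^{2^{k}}$ and the trace is continuous, it suffices to bound $\tr \bigl[ \bigl( \e^{A/2^{k}} \e^{B/2^{k}} \bigr)^{2^{k}} \bigr]$ by $\tr(\e^{A} \e^{B})$ for every $k$ and then let $k \to \infty$. Setting $P = \e^{A/2^{k}}$ and $Q = \e^{B/2^{k}}$, both positive semidefinite with $P^{2^{k}} = \e^{A}$ and $Q^{2^{k}} = \e^{B}$, the statement reduces to proving $\tr \bigl[ (PQ)^{2^{k}} \bigr] \le \tr \bigl( P^{2^{k}} Q^{2^{k}} \bigr)$ for all positive semidefinite $P, Q$.

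First I would establish the \emph{doubling inequality} $\tr \bigl[ (PQ)^{2m} \bigr] \le \tr \bigl[ (P^{2} Q^{2})^{m} \bigr]$ for every positive semidefinite pair $P, Q$ and every integer $m \ge 1$, and then iterate it $k$ times, replacing $(P, Q)$ by $(P^{2}, Q^{2})$ at each stage, to reach $\tr \bigl[ (PQ)^{2^{k}} \bigr] \le \tr \bigl( P^{2^{k}} Q^{2^{k}} \bigr)$. To prove the doubling inequality, note that $PQ$ has the same spectrum as the positive semidefinite matrix $Q^{1/2} P Q^{1/2}$, so all its eigenvalues $\lambda_{i}(PQ)$ are real and nonnegative and $\tr \bigl[ (PQ)^{2m} \bigr] = \sum_{i} \lambda_{i}(PQ)^{2m}$. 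By Weyl's theorem the eigenvalue moduli are log-majorized by the singular values $\sigma_{i}(PQ)$, and since $s \mapsto \e^{2ms}$ is convex and increasing, this log-majorization upgrades to the weak majorization $\sum_{i} \lambda_{i}(PQ)^{2m} \le \sum_{i} \sigma_{i}(PQ)^{2m}$. Now $\sum_{i} \sigma_{i}(PQ)^{2m} = \tr \bigl[ \bigl( (PQ)^{\dagger}(PQ) \bigr)^{m} \bigr] = \tr \bigl[ (Q P^{2} Q)^{m} \bigr]$, and because $Q P^{2} Q = Q (P^{2} Q)$ and $P^{2} Q^{2} = (P^{2} Q) Q$ share the same nonzero spectrum, $\tr \bigl[ (Q P^{2} Q)^{m} \bigr] = \tr \bigl[ (P^{2} Q^{2})^{m} \bigr]$. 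Chaining these equalities and the majorization gives the doubling inequality. (As a warm-up, the case $m = 1$ can alternatively be seen directly from $\tr \bigl[ (PQ - QP)^{\dagger}(PQ - QP) \bigr] \ge 0$.)

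Finally, substituting $P = \e^{A/2^{k}}$ and $Q = \e^{B/2^{k}}$ gives $\tr \bigl[ \bigl( \e^{A/2^{k}} \e^{B/2^{k}} \bigr)^{2^{k}} \bigr] \le \tr(\e^{A} \e^{B})$ for every $k$; letting $k \to \infty$ and invoking the Lie--Trotter formula together with continuity of the trace yields $\tr \e^{A+B} \le \tr(\e^{A} \e^{B})$. I expect the main obstacle to be the trace power inequality, and within it the passage from eigenvalues to singular values via Weyl's log-majorization; the Lie--Trotter limit is a standard analytic fact, and because we deal only with a convergent sequence of matrices under the continuous trace on a finite-dimensional space, none of the discontinuity phenomena discussed earlier for $D_{f}$ intervene here.
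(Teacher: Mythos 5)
Your proof is correct. Note, however, that the paper does not actually supply a proof of this lemma at all: Golden--Thompson is stated in the preliminaries as a known fact, and the only proof-adjacent content is the remark after \cref{lem:carlen-lieb} that it is the special case of the Carlen--Lieb inequality obtained by taking $Y = \e^{A}$, $H = B$, and $Q = \e^{B}/\tr(\e^{A}\e^{B})$; indeed then $\tr(YQ)=1$ and $H - \ln Q = \ln\bigl(\tr(\e^{A}\e^{B})\bigr)\,\I$, so the right-hand side of \cref{eq:carlen-lieb} collapses to $\tr(\e^{A}\e^{B})$. Your route is the classical self-contained one: Lie--Trotter reduces the claim to $\tr\bigl[(PQ)^{2^{k}}\bigr] \le \tr\bigl(P^{2^{k}}Q^{2^{k}}\bigr)$ for positive semidefinite $P,Q$, and your doubling step is sound --- the eigenvalues of $PQ$ are real and nonnegative by similarity to $Q^{1/2}PQ^{1/2}$, Weyl's log-majorization of eigenvalue moduli by singular values upgrades to weak majorization under the convex increasing map $s \mapsto \e^{2ms}$, and the cyclic identity gives $\tr\bigl[(QP^{2}Q)^{m}\bigr] = \tr\bigl[(P^{2}Q^{2})^{m}\bigr]$. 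The trade-off is the expected one: your argument is elementary and self-contained but proves only Golden--Thompson, whereas the paper's logic runs in the opposite direction, importing the strictly stronger Carlen--Lieb inequality (needed anyway for \cref{lem:ell-strong-convex}, where Golden--Thompson alone is explicitly noted to be insufficient) and obtaining Golden--Thompson as a corollary.
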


\begin{lemma}[Carlen-Lieb inequality]\label{lem:carlen-lieb}
  Suppose $H$ is an Hermitian matrix, $Y \succ 0$.
  Then the following inequality holds
  \begin{equation}
    \label{eq:carlen-lieb}
    \tr \exp(\ln(Y) + H) \le \exp \bigl( \inf \bigl\{
    \lambda_{\max}(H-\ln Q) \, : \, Q \succ 0, \tr(YQ)=1
    \bigr\}\bigr).
  \end{equation}
\end{lemma}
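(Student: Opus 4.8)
The plan is to establish the bound pointwise in $Q$ and then take the infimum. Fix any $Q \succ 0$ with $\tr(YQ) = 1$; the feasible set is nonempty (for instance $Q = Y^{-1}/\tr(\I)$ satisfies $\tr(YQ) = 1$). It suffices to show $\tr \exp(\ln Y + H) \le \e^{\mu}$ with $\mu = \lambda_{\max}(H - \ln Q)$, since $\exp$ is increasing and the left-hand side is independent of $Q$, so taking the infimum over all admissible $Q$ of the right-hand side produces exactly \eqref{eq:carlen-lieb}.

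First I would normalize away the scalar $\mu$. By the definition of $\mu$ we have $H - \ln Q \preceq \mu \I$, equivalently $H - \mu \I \preceq \ln Q$. Since $\mu \I$ commutes with every matrix, $\exp(\ln Y + H - \mu \I) = \e^{-\mu}\exp(\ln Y + H)$, so the target inequality $\tr \exp(\ln Y + H) \le \e^{\mu}$ is equivalent to
\[
  \tr \exp\bigl(\ln Y + (H - \mu \I)\bigr) \le 1.
\]

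The heart of the argument is a monotonicity property of the trace exponential: for fixed Hermitian $L$ and Hermitian matrices $A \preceq B$, one has $\tr \exp(L + A) \le \tr \exp(L + B)$. To prove this I would set $g(t) = \tr \exp\bigl(L + A + t(B - A)\bigr)$ for $t \in [0,1]$ and differentiate using \eqref{eq:dif-trace} with $f = \exp$, which (using that $\exp$ of a Hermitian matrix is Hermitian) gives $g'(t) = \tr\bigl(\exp(L + A + t(B-A))\,(B-A)\bigr)$. As the trace of a product of two positive semidefinite matrices is nonnegative and $B - A \succeq 0$, we conclude $g'(t) \ge 0$, hence $g(0) \le g(1)$. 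Applying this with $L = \ln Y$, $A = H - \mu \I$, and $B = \ln Q$ — which is legitimate precisely because $H - \mu\I \preceq \ln Q$ — bounds the left-hand side above by $\tr \exp(\ln Y + \ln Q)$.

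Finally I would invoke the Golden-Thompson inequality to collapse the two logarithms, $\tr \exp(\ln Y + \ln Q) \le \tr\bigl(\e^{\ln Y}\e^{\ln Q}\bigr) = \tr(YQ) = 1$. Chaining the last two displays gives $\tr\exp(\ln Y + (H - \mu\I)) \le 1$, which is the normalized claim, and the infimum over $Q$ then completes the proof. The only delicate point is justifying the differentiation and the monotonicity step cleanly; beyond that the argument is a direct combination of the two named inequalities, and—in contrast to the convergence analysis elsewhere in the paper—no continuity or perturbation subtleties intervene here.
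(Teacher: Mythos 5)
Your proof is correct, but note that the paper itself offers no proof of this lemma: it is imported verbatim from~\cite{CL19} as a black box, with only the remark that it \emph{strengthens} Golden--Thompson (recovered by the particular choice $Q = \e^{B}/\tr(\e^{A}\e^{B})$). So there is no in-paper argument to match; what you have done is supply a self-contained derivation, and every step checks out. Fixing a feasible $Q$ and normalizing by $\mu = \lambda_{\max}(H - \ln Q)$ is sound; the monotonicity of $A \mapsto \tr\exp(L+A)$ in the Loewner order follows exactly as you say from \cref{eq:dif-trace} with $f = \exp$ together with $\tr(PQ)\ge 0$ for $P \succ 0$, $Q \succeq 0$; and Golden--Thompson then gives $\tr\exp(\ln Y + \ln Q) \le \tr(YQ) = 1$. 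Passing to the infimum over $Q$ is harmless since the bound holds pointwise. The one observation worth making explicit is that your argument, combined with the paper's remark, shows the one-sided inequality as stated in \cref{lem:carlen-lieb} is in fact \emph{equivalent} to Golden--Thompson via elementary manipulations: each implies the other. The genuinely nontrivial content of~\cite{CL19} lies beyond what the lemma records (their theorem identifies the right-hand side exactly, i.e., the inequality is saturated), but since the paper only ever invokes the $\le$ direction, your elementary derivation fully suffices for everything downstream, in particular for \cref{lem:ell-strong-convex}. You might just spell out the dimension in your feasibility example ($Q = Y^{-1}/d$ with $d = \dim\X$), and note that $\exp\bigl(\inf_{Q}\mu(Q)\bigr) = \inf_{Q}\e^{\mu(Q)}$ by monotone continuity of $\exp$, so the pointwise bounds do pass to the infimum as claimed.
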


We remark that Carlen-Lieb is a strengthening of Golden-Thompson as pointed out
in~\cite{CL19}.
This can be seen by choosing $Y = e^{A}$, $H = B$, and
$Q = e^{B} / \tr(e^{A}e^{B})$ in Carlen-Lieb.

\subsection{Bregman Divergence}\label{sec:bregman}

% Bregman divergence

Bregman divergence is an important quantity in convex analysis and information
theory.
In this section, we recall its definition, and discuss the definition of
Legendre functions for which the Bregman divergence behaves nicely.

\begin{definition}\label{def:breg-div}
  Let $\phi$ be a convex function such that $\phi$ is differentiable on
  $\intdom{\phi}$.
  The \emph{Bregman divergence}
  $D_{\phi} : \dom{\phi} \times \intdom{\phi} \rightarrow
  \interval[open right]{0}{+\infty}$
  is defined as
  \begin{equation}
    \label{eq:breg-div}
    D_{\phi}(x, y) = \phi(x) - \phi(y) - \nabla \phi (y) \cdot (x-y).
  \end{equation}
\end{definition}

% Geometric picture

The Bregman divergence is sometimes also known as the Bregman distance because
$D_{\phi}(x, y)$ is a natural measure of the distance between $x, y$ even though
it is not necessarily a distance in the sense of metric topology (e.g.,
$D_{\phi}(x, y)$ is in general not symmetric with respect to $x$, $y$).
For example, when $\phi(x) = {\norm{x}}^{2}$, $D_{\phi}(x, y)$ recovers the
squared Euclidean distance ${\norm{x-y}}^{2}$.
It also holds that $D_{\phi}(x, y) \ge 0$ and equality holds if and only if
$x = y$ for strictly convex $\phi$.

% Bregman projection

We now define the Bregman projection of a point to a convex set.
The definition is natural in the geometric picture that Bregman divergence
generalizes the squared Euclidean distance.

\begin{definition}[Bregman Projection]
  Let $C$ be a closed convex set in $\real^{m}$ such that $C \cap \dom{\phi}$ is
  not empty.
  Then the Bregman projection of $y$ to $C$ is defined as
  \begin{equation*}
    y^{*} = \argmin_{x \in C \cap \dom(\phi)} D_{\phi}(x, y).
  \end{equation*}
\end{definition}

% Legendre

Next, we will define an important family of functions called Legendre functions.
For this, we need the following technical definitions about convex functions.
A convex function is \emph{proper} if never takes the value $-\infty$ and takes
a finite value for at least one $x$.
Most of the time, we work with proper convex functions.
A convex function $\phi$ is \emph{closed} if its epigraph
$\epi{\phi} = \{(x, t)\in \real^{m+1} \mid x \in \dom{\phi}, \phi(x) \le t\}$ is
closed.
A proper convex function $\phi$ is \emph{essentially smooth} if it is everywhere
differentiable on the interior of the domain $\intdom{\phi}$ and if
$\norm{\nabla \phi(x_{j})}$ diverges for every sequence $(x_{j})$ in
$\intdom{\phi}$ converging to a point on the boundary of $\dom{\phi}$.

\begin{definition}\label{def:leg-type}
  A function $\phi$ is \emph{Legendre} if it is a proper closed convex function
  that is essentially smooth and strictly convex on the interior of its
  domain.
\end{definition}

A fundamental fact about Legendre convex functions is recorded below.

\begin{proposition}[Theorem 26.5 of~\cite{Roc70}]\label{pro:leg-dif}
  If $\phi$ is Legendre then
  $\nabla \phi : \intdom{\phi} \rightarrow \intdom{\phi^{*}}$ is a bijection,
  continuous in both directions, and $\nabla \phi^{*} = {(\nabla \phi)}^{-1}$.
\end{proposition}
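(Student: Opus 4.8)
The plan is to deduce the statement from the Fenchel--Young relationship between the subgradients of $\phi$ and of its conjugate $\phi^{*}$, combined with the conjugacy duality that interchanges strict convexity and essential smoothness. First I would record the subgradient inversion rule: since $\phi$ is proper closed convex we have $\phi^{**} = \phi$, and for such $\phi$ one has $y \in \partial\phi(x)$ if and only if $x \cdot y = \phi(x) + \phi^{*}(y)$, which in turn holds if and only if $x \in \partial\phi^{*}(y)$. This symmetric characterization is the engine that will manufacture the inverse map, and it is exactly what lets us pass from a gradient of $\phi$ to a gradient of $\phi^{*}$.

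Next I would use the two defining properties of a Legendre function to pin down the gradient and its domain. Because $\phi$ is essentially smooth, on $\intdom{\phi}$ the subdifferential is the singleton $\partial\phi(x) = \{\nabla\phi(x)\}$, while the gradient blow-up condition forces $\partial\phi(x) = \emptyset$ for every boundary point of $\dom{\phi}$; thus $\nabla\phi$ is defined on precisely $\intdom{\phi}$. Strict convexity on the interior makes the gradient strictly monotone, $(\nabla\phi(x_{1}) - \nabla\phi(x_{2})) \cdot (x_{1} - x_{2}) > 0$ whenever $x_{1} \neq x_{2}$, which immediately gives injectivity of $\nabla\phi$.

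The crux, and the step I expect to be the main obstacle, is showing that the range of $\nabla\phi$ is exactly the \emph{open} set $\intdom{\phi^{*}}$, i.e.\ that $\phi^{*}$ is itself Legendre. Here I would invoke the conjugacy duality between the two defining properties: essential smoothness of $\phi$ is equivalent to strict convexity of $\phi^{*}$ on its interior, and strict convexity of $\phi$ on $\intdom{\phi}$ is equivalent to essential smoothness of $\phi^{*}$. Granting this, $\phi^{*}$ is essentially smooth, so the same singleton/emptiness dichotomy shows its subgradients occur only on $\intdom{\phi^{*}}$. Feeding $y = \nabla\phi(x)$ into the inversion rule yields $x \in \partial\phi^{*}(y)$, a nonempty set, which forces $y \in \intdom{\phi^{*}}$ and $x = \nabla\phi^{*}(y)$; hence $\nabla\phi^{*} \circ \nabla\phi = \mathrm{id}$ on $\intdom{\phi}$. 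Interchanging the roles of $\phi$ and $\phi^{*}$ gives $\nabla\phi \circ \nabla\phi^{*} = \mathrm{id}$, so $\nabla\phi$ is a bijection of $\intdom{\phi}$ onto $\intdom{\phi^{*}}$ with inverse $\nabla\phi^{*}$.

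Finally, continuity in both directions is the soft part: a finite convex function that is differentiable throughout an open convex set is automatically continuously differentiable there, so $\nabla\phi$ is continuous on $\intdom{\phi}$ and $\nabla\phi^{*} = {(\nabla\phi)}^{-1}$ is continuous on $\intdom{\phi^{*}}$, making the map a homeomorphism. The genuinely delicate ingredient is the conjugacy correspondence of the two defining properties used in the previous paragraph; this is where I would concentrate the effort, since everything else reduces to bookkeeping once the inversion rule and the singleton characterization of subgradients are in hand.
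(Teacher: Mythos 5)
The paper does not prove this proposition; it simply cites it as Theorem 26.5 of Rockafellar, so there is no in-paper argument to compare against. Your reconstruction is the standard (and correct) proof from that source: the Fenchel--Young inversion rule $y \in \partial\phi(x) \Leftrightarrow x \in \partial\phi^{*}(y)$, the fact that essential smoothness makes $\partial\phi$ single-valued on $\intdom{\phi}$ and empty elsewhere, strict monotonicity for injectivity, and continuous differentiability of finite convex functions on open sets for the homeomorphism claim. The one caveat is that you correctly identify but do not prove the load-bearing ingredient, namely the conjugacy duality exchanging essential smoothness and essential strict convexity (Rockafellar's Theorem 26.3, which also needs $\dom{\partial\phi^{*}} = \intdom{\phi^{*}}$ to pass from ``essentially strictly convex'' to ``strictly convex on the interior''); granting that, the rest of your argument closes without gaps.
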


The Fenchel conjugate of a closed convex function is again a closed convex
function.
For proper closed convex function $\phi$, ${(\phi^{*})}^{*} = \phi$.
A convex function is Legendre if and only if its conjugate is.

% entropy example

As an example, consider the extended real function
$\phi_{\Shannon}: \real \rightarrow \real \cup \{\pm \infty\}$
\begin{equation*}
  \phi_{\Shannon}(x) =
  \begin{cases}
    x\ln x - x &\text{for}\ x > 0,\\
    0 &\text{for}\ x=0,\\
    +\infty &\text{otherwise}.
  \end{cases}
\end{equation*}
It is easy to verify that $\phi_{\Shannon}$ is Legendre and has domain
$\dom{\phi_{\Shannon}} = \interval[open right]{0}{+\infty}$.
The conjugate $\phi_{\Shannon}^{*}$ of $\phi_{\Shannon}$ is
$\phi_{\Shannon}^{*}(y) = \e^{y}$ with domain
$\dom{\phi_{\Shannon}^{*}} = \real$.
The derivative $\nabla \phi_{\Shannon}(x) = \ln(x)$ is a bijection between
$\interval[open]{0}{+\infty}$ and $\real$ and is the inverse function of
$\nabla \phi_{\Shannon}^{*}$.
More examples of Legendre functions, their conjugates and effective domains are
given in \cref{tab:examples}.

% A table of examples

\begin{table}[htb!]
  \centering
  \begin{tabular}{ccccc}
    \toprule
    $f$ & $\dom{f}$ & $f^{*}$ & $\dom{f^{*}}$ & Remarks\\
    \midrule
    $x^2/2$ & $\real$ & $y^{2}/2$ & $\real$ & Euclidean\\
    $-\sqrt{1-x^{2}}$ & $\interval{-1}{1}$
                    & $\sqrt{1+y^{2}}$ & $\real$ & Hellinger\\
    $x\ln{x} - x$ & $\interval[open right]{0}{\infty}$
                    & $\e^{y}$ & $\real$ & Boltzmann/Shannon\\
    $x\ln{x} + (1-x) \ln(1-x)$ & $\interval{0}{1}$
                    & $\ln(1+\e^{y})$ & $\real$ & Fermi/Dirac\\
    $-\ln x$ & $\interval[open]{0}{\infty}$
                    & $1-\ln(-y)$ & $\interval[open]{-\infty}{0}$ & Burg\\
    \bottomrule
  \end{tabular}
  \caption{Examples of Legendre convex functions}\label{tab:examples}
\end{table}

% Legendre rules

The importance of Legendre functions is highlighted in~\cite{BB97} where it is
shown that when $\phi$ is Legendre, the Bregman projection defined above exists,
is unique, and belongs to the interior $\intdom{\phi}$.
The fact that for Legendre functions, the Bregman projection is in
$\intdom{\phi}$ is crucial for iterative Bregman projection algorithms
considered in the literature and in this paper because we will project $y^{*}$
again and $D_{\phi}(x,y^{*})$ is only defined for $y^{*} \in \intdom{\phi}$.
This was previously guaranteed by the additional requirement called ``zone
consistency'' and the use of Legendre functions is a great simplification.

\begin{lemma}\label{lem:conjugate-form}
  Let $\phi$ be a Legendre convex function. Then for $y \in \intdom{\phi}$,
  \begin{equation*}
    \phi^{*}(y) = y \cdot \nabla \phi^{*}(y) - \phi( \nabla \phi^{*}(y)).
  \end{equation*}
\end{lemma}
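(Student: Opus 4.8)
The plan is to identify this identity as the equality case of the Fenchel--Young inequality and to locate the maximizer in the definition of $\phi^{*}$ using the bijection from \cref{pro:leg-dif}. Directly from \cref{eq:conjugate}, dropping the supremum already gives the Fenchel--Young inequality $\phi^{*}(y) \ge x \cdot y - \phi(x)$ for every $x$, so the asserted formula is precisely the statement that this lower bound is attained at the particular point $x = \nabla \phi^{*}(y)$.

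First I would form the concave function $g(x) = x \cdot y - \phi(x)$, whose supremum over $x$ is the definition of $\phi^{*}(y)$. Its gradient is $y - \nabla \phi(x)$, which vanishes exactly when $\nabla \phi(x) = y$. By \cref{pro:leg-dif}, for $y$ in the domain of $\nabla \phi^{*}$ the unique such point is $x^{*} = \nabla \phi^{*}(y) \in \intdom{\phi}$, since the relation $\nabla \phi^{*} = {(\nabla \phi)}^{-1}$ yields $\nabla \phi(\nabla \phi^{*}(y)) = y$. Concavity of $g$ then certifies that this stationary point is the global maximizer, so $\phi^{*}(y) = g(x^{*}) = \nabla \phi^{*}(y) \cdot y - \phi(\nabla \phi^{*}(y))$, which is the claimed identity after using symmetry of the dot product. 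An equivalent phrasing is to invoke the Fenchel--Young equality condition (equality holds iff $y = \nabla \phi(x)$ for differentiable $\phi$) and substitute $x = \nabla \phi^{*}(y)$.

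I do not expect a serious obstacle, as the computation is short once \cref{pro:leg-dif} is available. The only points requiring care are domain bookkeeping and attainment of the supremum: one must ensure that $\nabla \phi^{*}(y)$ is defined and lands in $\intdom{\phi}$, which is exactly the bijection $\nabla \phi^{*} : \intdom{\phi^{*}} \to \intdom{\phi}$ supplied by \cref{pro:leg-dif} (so the natural hypothesis is $y \in \intdom{\phi^{*}}$), and that the supremum defining $\phi^{*}$ is genuinely achieved at an interior critical point rather than only approached along a sequence tending to $\partial \dom{\phi}$. This last point is guaranteed by essential smoothness of the Legendre function $\phi$, which forces $\norm{\nabla \phi}$ to diverge at the boundary and hence keeps the maximizer in $\intdom{\phi}$.
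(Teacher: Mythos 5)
Your proposal is correct and follows essentially the same route as the paper: both identify $x^{*} = \nabla\phi^{*}(y)$ as the stationary point of the concave function $x \mapsto x \cdot y - \phi(x)$ via \cref{pro:leg-dif} and evaluate there. Your side remark that the natural hypothesis is $y \in \intdom{\phi^{*}}$ (the domain on which $\nabla\phi^{*}$ is defined) rather than $y \in \intdom{\phi}$ is a fair observation about the statement, but the argument itself matches the paper's.
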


\begin{proof}
  For fixed $y \in \intdom{\phi}$, the function $x \mapsto x \cdot y - \phi(x)$
  is concave and has zero gradient at $x^{*} = \nabla \phi^{*} (y)$ and,
  therefore, achieves the maximum at $x^{*}$.
  The lemma follows by evaluating the function $x \mapsto x \cdot y - \phi(x)$
  at $x^{*}$.
\end{proof}

% Legendre-Bregman conjugate and projection

We will also extensively use Legendre-Bregman conjugate and Legendre-Bregman
projections from~\cite{D-PD-PL02}.

\begin{definition}\label{def:leg-breg-conj-and-proj}
  For a convex function $\phi$ of Legendre type defined on $\real^{m}$, the
  \emph{Legendre-Bregman conjugate} of $\phi$ is
  $\ell_{\phi} : \intdom{\phi} \times \real^{m} \rightarrow
  \real \cup \{\pm \infty\}$
  \begin{equation}
    \label{eq:leg-breg-conj}
    \ell_{\phi} (y, \lambda) = \sup_{x \in \dom{\phi}} \bigl(
    \lambda \cdot x - D_{\phi}(x, y) \bigr).
  \end{equation}
  The \emph{Legendre-Bregman projection}
  $\L_{\phi} : \intdom{\phi} \times \real^{m} \rightarrow \dom{\phi}$ is
  \begin{equation}
    \label{eq:leg-breg-proj}
    \L_{\phi} (y, \lambda) = \argmax_{x \in \dom{\phi}} \bigl(
    \lambda \cdot x - D_{\phi}(x, y) \bigr).
  \end{equation}
\end{definition}

By the definition of $\ell_{\phi}(y, \lambda)$ as the conjugate of function
$D_{\phi}(\;\cdot\;, y)$, it is convex with respect to $\lambda$ for all fixed
$y$, a property useful in later analysis.

The following proposition is crucial for this work and, as $\nabla \phi$ and
$\nabla \phi^{*}$ are inverse to each other, the proposition gave
$\L_{\phi}(y, \lambda)$ the meaning of updating of the parameter by first mapping
$y$ to the parameter space and pulling it back after the update using
$\nabla \phi^{*}$.

\begin{proposition}[Proposition 2.6
  of~\cite{D-PD-PL02}]\label{pro:leg-breg-proj-exp}
  Let $\phi$ be a function of Legendre type.
  Then for $y \in \intdom{\phi}$ and
  $\lambda \in \intdom{\phi^{*}} - \nabla \phi(y)$,
  the Legendre-Bregman projection is given explicitly by
  \begin{equation}
    \label{eq:leg-breg-proj-exp}
    \L_{\phi}(y, \lambda) = (\nabla \phi^{*}) (\nabla \phi(y) + \lambda).
  \end{equation}
  Moreover, it can be written as a Bregman projection
  \begin{equation}
    \label{eq:leg-breg-proj-proj}
    \L_{\phi}(y, \lambda) = \argmin_{x \,\in\, \dom{\phi} \,\cap H} D_{\phi}(x, y)
  \end{equation}
  for the hyperplane
  $H = \{x \in \real^{m} \mid \lambda \cdot (x - \L_{\phi}(y, \lambda)) = 0 \}$.
\end{proposition}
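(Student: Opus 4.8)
The plan is to attack both claims via first-order optimality, exploiting that the objective defining $\L_{\phi}$ is concave while the Bregman divergence is convex in its first argument. For the explicit formula \eqref{eq:leg-breg-proj-exp}, I would consider the function $g(x) = \lambda \cdot x - D_{\phi}(x, y)$, whose maximizer over $\dom{\phi}$ is $\L_{\phi}(y, \lambda)$ by definition. Expanding the divergence gives $g(x) = \lambda \cdot x - \phi(x) + \phi(y) + \nabla\phi(y) \cdot (x - y)$, so that $\nabla g(x) = \lambda - \nabla\phi(x) + \nabla\phi(y)$. Since $\phi$ is convex, $g$ is concave, hence any stationary point is a global maximizer; strict convexity of $\phi$ on $\intdom{\phi}$ makes this maximizer unique. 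Setting $\nabla g(x) = 0$ yields the stationarity condition $\nabla\phi(x) = \nabla\phi(y) + \lambda$.

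The domain hypothesis $\lambda \in \intdom{\phi^{*}} - \nabla\phi(y)$ is precisely what guarantees $\nabla\phi(y) + \lambda \in \intdom{\phi^{*}}$, so I can invoke \cref{pro:leg-dif}: as $\phi$ is Legendre, $\nabla\phi$ is a bijection from $\intdom{\phi}$ onto $\intdom{\phi^{*}}$ with inverse $\nabla\phi^{*}$. Inverting the stationarity condition gives $x^{*} = (\nabla\phi^{*})(\nabla\phi(y) + \lambda) \in \intdom{\phi}$, which is \eqref{eq:leg-breg-proj-exp}. The fact that $x^{*}$ lands in the \emph{interior} (forced by the codomain of $\nabla\phi^{*}$ in \cref{pro:leg-dif}) is what legitimizes differentiating at $x^{*}$ in the first place, so essential smoothness is doing quiet work by excluding boundary maximizers.

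For the projection characterization \eqref{eq:leg-breg-proj-proj}, write $x^{*} = \L_{\phi}(y, \lambda)$ and note $x^{*} \in H$ trivially. I would minimize $D_{\phi}(\,\cdot\,, y)$ over the affine hyperplane $H$, whose normal direction is $\lambda$. Since $D_{\phi}(\,\cdot\,, y)$ is convex with $\nabla_{x} D_{\phi}(x, y) = \nabla\phi(x) - \nabla\phi(y)$, the first-order condition for a constrained minimum over $H$ is that this gradient be parallel to $\lambda$. Evaluating at $x^{*}$ and using the stationarity identity from the first part gives $\nabla_{x} D_{\phi}(x^{*}, y) = \nabla\phi(x^{*}) - \nabla\phi(y) = \lambda$, i.e. the Lagrange condition holds with multiplier one. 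Because the objective is convex and the constraint set is affine, this condition is sufficient, so $x^{*}$ is the global minimizer over $\dom{\phi} \cap H$.

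The main obstacle I anticipate is not the calculus but the bookkeeping of domains: one must confirm that $x^{*}$ genuinely lies in $\intdom{\phi}$ rather than merely $\dom{\phi}$, so that $\nabla\phi(x^{*})$ exists and the stationarity manipulations are valid, and that the supremum defining $\L_{\phi}$ is actually attained instead of being approached along a boundary sequence. Both points are delivered by the Legendre hypothesis through \cref{pro:leg-dif}, so the crux is to apply that proposition with the precise domain condition on $\lambda$ rather than to carry out any hard estimate.
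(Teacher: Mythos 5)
Your argument is correct. Note that the paper itself offers no proof of this proposition---it is imported verbatim as Proposition 2.6 of the cited reference---so there is no in-paper argument to compare against; your derivation (stationarity of the concave objective $\lambda \cdot x - D_{\phi}(x,y)$, inversion of $\nabla\phi$ via \cref{pro:leg-dif} under the hypothesis $\nabla\phi(y)+\lambda \in \intdom{\phi^{*}}$, and then the sufficiency of the first-order condition for the convex minimization over the affine set $H$) is the standard proof of this fact and handles the one genuinely delicate point, namely that the maximizer lies in $\intdom{\phi}$ so that all gradients exist.
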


An important corollary of the proposition and \cref{pro:leg-dif} is the
following.
\begin{corollary}\label{cor:additive}
  $\L_{\phi}(y, \lambda)$ is additive in $\lambda$.
  Namely, for
  $\lambda_{1}, \lambda_{1} + \lambda_{2} \in \intdom{\phi^{*}} - \nabla \phi(y)$,
  \begin{equation*}
    \L_{\phi}(\L_{\phi}(y, \lambda_{1}), \lambda_{2}) =
    \L_{\phi}(y, \lambda_{1} + \lambda_{2}).
  \end{equation*}
\end{corollary}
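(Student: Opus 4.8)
The plan is to reduce everything to the explicit formula for the Legendre-Bregman projection from \cref{pro:leg-breg-proj-exp} and then to exploit the fact that $\nabla \phi$ and $\nabla \phi^{*}$ are mutually inverse, as recorded in \cref{pro:leg-dif}. The computation itself is a one-line composition; the substance of the argument is checking that the domain hypotheses of \cref{pro:leg-breg-proj-exp} are met at each of the two applications of $\L_{\phi}$, so that the explicit formula is legitimately available both times.

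First I would set $y_{1} = \L_{\phi}(y, \lambda_{1})$ and observe that, since $\lambda_{1} \in \intdom{\phi^{*}} - \nabla \phi(y)$, we have $\nabla \phi(y) + \lambda_{1} \in \intdom{\phi^{*}}$, so \cref{pro:leg-breg-proj-exp} applies and gives $y_{1} = (\nabla \phi^{*})(\nabla \phi(y) + \lambda_{1})$. By the bijection $\nabla \phi^{*} : \intdom{\phi^{*}} \rightarrow \intdom{\phi}$ of \cref{pro:leg-dif}, the point $y_{1}$ lies in $\intdom{\phi}$, which is exactly what is needed for $\L_{\phi}(y_{1}, \lambda_{2})$ to be defined. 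I would then verify the second domain condition: applying $\L_{\phi}$ at $y_{1}$ with parameter $\lambda_{2}$ requires $\lambda_{2} \in \intdom{\phi^{*}} - \nabla \phi(y_{1})$. Using $\nabla \phi^{*} = (\nabla \phi)^{-1}$ we have $\nabla \phi(y_{1}) = \nabla \phi(y) + \lambda_{1}$, so this condition reads $\nabla \phi(y) + \lambda_{1} + \lambda_{2} \in \intdom{\phi^{*}}$, which is precisely the assumption $\lambda_{1} + \lambda_{2} \in \intdom{\phi^{*}} - \nabla \phi(y)$ in the hypothesis.

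With both applications justified, I would finish by the direct substitution
\begin{equation*}
  \L_{\phi}(y_{1}, \lambda_{2})
  = (\nabla \phi^{*})\bigl( \nabla \phi(y_{1}) + \lambda_{2} \bigr)
  = (\nabla \phi^{*})\bigl( \nabla \phi(y) + \lambda_{1} + \lambda_{2} \bigr)
  = \L_{\phi}(y, \lambda_{1} + \lambda_{2}),
\end{equation*}
where the middle equality is the key cancellation $\nabla \phi(y_{1}) = \nabla \phi(y) + \lambda_{1}$ coming from $\nabla \phi \circ \nabla \phi^{*} = \mathrm{id}$ on $\intdom{\phi^{*}}$.

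I do not expect a genuine obstacle here, since the strict-convexity and essential-smoothness packaged into the Legendre hypothesis are what make $\nabla \phi$ a two-sided-continuous bijection in \cref{pro:leg-dif}, and that bijection is all the algebra needs. The only point requiring care — and the one I would state explicitly rather than gloss over — is the bookkeeping of the conjugate-domain conditions, to confirm that the intermediate point $y_{1}$ and the shifted parameters stay inside $\intdom{\phi}$ and $\intdom{\phi^{*}}$ so that \cref{pro:leg-breg-proj-exp} is invoked only where it is valid.
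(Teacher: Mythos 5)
Your proposal is correct and follows exactly the route the paper intends: the paper states this as an immediate corollary of \cref{pro:leg-breg-proj-exp} and \cref{pro:leg-dif}, i.e.\ compose the explicit formula $\L_{\phi}(y,\lambda) = (\nabla\phi^{*})(\nabla\phi(y)+\lambda)$ with the identity $\nabla\phi\circ\nabla\phi^{*} = \mathrm{id}$ on $\intdom{\phi^{*}}$, which is precisely your cancellation $\nabla\phi(y_{1}) = \nabla\phi(y)+\lambda_{1}$. Your explicit bookkeeping of the domain conditions is a welcome addition that the paper leaves implicit.
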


\begin{lemma}\label{lem:boundary-effect}
  Suppose $f$ is a Legendre real convex function with domain
  $\Delta \subseteq \real$ and the domain $\dom{f^{*}}$ is open.
  Let $a$ be a real number in $\cl{\Delta} \setminus \Delta$ and $(y_{t})$ be a
  sequence in $\Delta_{\lint}$ that converges to $a$.
  Let $(x_{t})$ be a sequence in $\Delta$ that converges to $x \in \Delta$.
  Then
  \begin{equation*}
    \lim_{t \to \infty}D_{f}(x_{t}, y_{t}) = +\infty.
  \end{equation*}
\end{lemma}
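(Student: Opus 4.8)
The plan is to sidestep any use of continuity of $D_{f}$ (exactly what fails in the matrix setting this lemma is meant to support) and instead reduce the divergence to the blow-up of the dual variable $p_{t} \defeq f'(y_{t})$, controlled by a single convexity estimate. First I would rewrite the divergence in conjugate form. Since $f$ is Legendre and $y_{t} \in \Delta_{\lint} = \intdom{f}$, \cref{pro:leg-dif} gives $p_{t} \in \intdom{f^{*}}$ with $\nabla f^{*}(p_{t}) = y_{t}$, and the Fenchel--Young equality holds at the differentiability point $y_{t}$, namely $f(y_{t}) + f^{*}(p_{t}) = y_{t}\, p_{t}$. Substituting into \cref{def:breg-div} yields
\[ D_{f}(x_{t}, y_{t}) = f(x_{t}) + f^{*}(p_{t}) - x_{t}\, p_{t}. \]
Because $a \in \cl{\Delta} \setminus \Delta$ is a boundary point of $\dom{f}$ and $y_{t} \to a$ from within $\intdom{f}$, essential smoothness of the Legendre function $f$ forces $\abs{p_{t}} \to +\infty$; monotonicity of $f'$ fixes the sign, so $p_{t} \to -\infty$ when $a = \inf \Delta$ and $p_{t} \to +\infty$ when $a = \sup \Delta$.

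The heart of the argument is then a one-sided bound on $f^{*}(p_{t})$. I would fix a reference point $w_{0} \in \Delta_{\lint}$ lying strictly between $a$ and $x$; such $w_{0}$ exists because $x \in \Delta$ forces $x \neq a$ with $x$ on the side of $a$ interior to $\Delta$, and $\Delta_{\lint}$ accumulates at $a$. Setting $p_{0} = f'(w_{0}) \in \intdom{f^{*}}$, so that $w_{0} = \nabla f^{*}(p_{0})$, convexity of $f^{*}$ gives the supporting-line inequality $f^{*}(p_{t}) \ge f^{*}(p_{0}) + w_{0}(p_{t} - p_{0})$, whence
\[ D_{f}(x_{t}, y_{t}) \ge f(x_{t}) + (w_{0} - x_{t})\, p_{t} + \bigl( f^{*}(p_{0}) - w_{0}\, p_{0} \bigr). \]
I would bound $f(x_{t})$ from below by a global affine minorant of the proper convex function $f$, so that $f(x_{t})$ stays bounded as $x_{t} \to x$. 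By the choice of $w_{0}$, the factor $w_{0} - x_{t} \to w_{0} - x$ is a nonzero constant whose sign agrees with that of $p_{t}$ (both negative at the left endpoint, both positive at the right endpoint); hence $(w_{0} - x_{t})\, p_{t} \to +\infty$, the remaining two terms are bounded below, and $D_{f}(x_{t}, y_{t}) \to +\infty$ follows.

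The main obstacle, and the reason the conjugate identity is the right move, is that a direct term-by-term limit is hopeless: both $-f(y_{t})$ and $f^{*}(p_{t})$ diverge, so \cref{def:breg-div} presents an indeterminate $\infty - \infty$. The conjugate identity performs that cancellation exactly, leaving the single genuinely quantitative input, the essentially-smooth blow-up of $f'(y_{t})$ together with its sign. I expect the two delicate points to be (i) deriving $\abs{p_{t}} \to +\infty$ and its correct sign cleanly from the Legendre hypotheses (here the openness of $\dom{f^{*}}$ guarantees that the finite missing endpoint $a$ is carried by $f'$ to an infinite end of $\dom{f^{*}}$), and (ii) confirming that a reference point $w_{0}$ strictly separating $a$ from $x$ inside $\Delta_{\lint}$ always exists. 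Note that the estimate uses only a lower bound on $f(x_{t})$, so it never requires continuity or differentiability of $f$ at the (possibly boundary) limit point $x$ — which is precisely what makes the approach robust to the discontinuity phenomenon the lemma is designed to overcome.
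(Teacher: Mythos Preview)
Your argument is correct, but it takes a genuinely different route from the paper's. The paper stays entirely in the primal: for the representative case $\Delta=(a,\infty)$ it fixes a point $y\in(a,x)$, uses the secant bound $f'(y_{t})\le \frac{f(y_{t})-f(y)}{y_{t}-y}$ to eliminate $f'(y_{t})$, and after a short algebraic rearrangement obtains
\[
D_{f}(x_{t},y_{t}) \ge f(x_{t}) - f(y)\,\frac{y_{t}-x_{t}}{y_{t}-y} + f(y_{t})\,\frac{y-x_{t}}{y_{t}-y},
\]
where the first two terms converge and the last diverges because $f(y_{t})\to+\infty$ (since $a\notin\Delta$) against a coefficient that tends to a positive constant. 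So the paper's driving divergence is $f(y_{t})\to+\infty$, whereas yours is $f'(y_{t})\to\pm\infty$ via essential smoothness; you then control the conjugate by a supporting line at a separating point $w_{0}$. Your version is a bit more conceptual---it makes the Legendre hypothesis (essential smoothness) do explicit work and isolates a single clean blowup term $(w_{0}-x_{t})\,p_{t}$---while the paper's version is more elementary, avoiding $f^{*}$ altogether and needing only convexity of $f$ and closedness at the missing endpoint. Both handle the potential $\infty-\infty$ indeterminacy equally well; neither actually needs the openness of $\dom f^{*}$ beyond what is already implied by $a\notin\Delta$ and the Legendre assumption.
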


\begin{proof}
  As $f$ is a real convex function, it is possible to perform a case study for
  the domain $\Delta$.
  We will prove the claim for $\Delta = \interval[open]{a}{\infty}$ and other
  cases can be dealt with similarly.
  Fix any $y \in \interval[open]{a}{x}$, then eventually it will hold that
  $y_{t} < y$ for sufficiently large $t$.
  Thus,
  \begin{equation*}
    f'(y_{t}) \le \frac{f(y_{t}) - f(y)}{y_{t} - y}.
  \end{equation*}
  Hence, we have
  \begin{equation*}
    \begin{split}
      D_{f}(x_{t}, y_{t}) & = f(x_{t}) - f(y_{t}) + f'(y_{t})(y_{t} - x_{t})\\
      & \ge f(x_{t}) - f(y_{t}) + \frac{f(y_{t})-f(y)}{y_{t}-y} (y_{t} - x_{t})\\
      & = f(x_{t}) - f(y) \frac{y_{t}-x_{t}}{y_{t}-y} + f(y_{t})\frac{y-x_{t}}{y_{t}-y},
    \end{split}
  \end{equation*}
  where the first two terms converge and the third term goes to $+\infty$ when
  $t$ goes to $\infty$.
\end{proof}

\section{Bregman Divergence for Hermitian Operators}\label{sec:breg-mat}

Suppose $\X$ is a finite dimensional Hilbert space and $f$ is an extended real
convex function.
In this section, we will always use $\Delta$ to denote the domain $\dom{f}$ of
$f$, the interval on which $f$ takes finite values.
Then $f$ extends to all Hermitian operators in $\Herm_{\Delta}(\X)$ as
\begin{equation*}
  f(X) = \sum_{k} f(\lambda_{k}) \Pi_{k}
\end{equation*}
where $X = \sum_{k} \lambda_{k} \Pi_{k}$ is the spectral decomposition of $X$.
In this paper, we focus on convex functions of the form $\phi = \tr \circ\, f$.
Denote the interior and boundary of $\Delta$ as $\Delta_{\lint}$ and
$\Delta_{\lbd} = \Delta \setminus \Delta_{\lint}$ respectively.
It is easy to see that the domain of $\phi$ is
$\dom{\phi} = \Herm_{\Delta}(\X)$, and the interior of the domain
$\interior{\dom{\phi}} = \Herm_{\Delta_{\lint}}(\X)$.

In this case, the Bregman divergence becomes
\begin{equation}
  \label{eq:breg-div-mat}
  D_{\phi}(X, Y) = \tr(f(X) - f(Y) - f'(Y)(X-Y)),
\end{equation}
defined for $X\in \Herm_{\Delta}(\X)$ and $Y \in \Herm_{\Delta_{\lint}}(\X)$.
The Legendre-Bregman projection can be written explicitly as
\begin{equation}
  \label{eq:leg-breg-proj-mat}
  \L_{\phi}(Y, \Lambda) = (f^{*})' \bigl( f'(Y) + \Lambda \bigr)
\end{equation}
for $Y \in \Herm_{\Delta_{\lint}}(\X)$ and
$\Lambda \in \Herm_{\Delta^{*}_{\lint}}(\X) - f'(Y)$.
Slightly abusing the notation, we use $D_{f}$, $\ell_{f}$ and $\L_{f}$ to denote
$D_{\tr \circ\, f}$, $\ell_{\tr \circ\, f}$, and $\L_{\tr \circ\, f}$
respectively.

\subsection{Domain Extension}\label{sec:extension}

In the previous discussion, $D_{f}(X, Y)$, $\ell_{f}(Y, \Lambda)$, and
$\L_{f}(Y, \Lambda)$, are only defined for $Y \in \Herm_{\Delta_{\lint}}(\X)$.
For later discussions, it is necessary to extend the definition of them to all
operators $Y \in \Herm_{\Delta}(\X)$, allowing the functions to take infinite
values sometimes.

% restrictions

We first introduce several notations used in defining the extensions.
For an operator $A \in \Herm_{\Delta}(\X)$, define $\X_{\lint(A)}$ and
$\X_{\lbd(A)}$ as the spans of eigenspaces of $A$ corresponding to eigenvalues
in $\Delta_{\lint}$ and $\Delta_{\lbd}$ respectively.
Then the Hilbert space $\X$ has decomposition
$\X = \X_{\lint(A)} \oplus \X_{\lbd(A)}$ and the operator $A$ has decomposition
$A = A_{\lint} \oplus A_{\lbd}$ for $A_{\lint} \in \Herm(\X_{\lint(A)})$ and
$A_{\lbd} \in \Herm(\X_{\lbd(A)})$ by grouping its eigenspaces depending on
whether the corresponding eigenvalues are in $\Delta_{\lint}$ or in
$\Delta_{\lbd}$ respectively.
For operator $B \in \Herm_{\Delta}(\X)$, we write $B_{\lint(A)}$ as the
restriction of $B$ to $\X_{\lint(A)}$.
For operators $A, B \in \Herm_{\Delta}(\X)$, define $\X_{\lbd(A \land B)}$ as
the span of common eigenvectors of $A, B$ with the same eigenvalues in
$\Delta_{\lbd}$.
Define $\X_{\lint(A \lor B)}$ as the orthonormal complement of
$\X_{\lbd(A \land B)}$ in $\X$.
For $A, B \in \Herm_{\Delta}(\X)$ and all $t \in \interval[open]{0}{1}$, the
restriction of $(1-t) A + t B$ to $\X_{\lint(A \lor B)}$ has all eigenvalues
contained in $\Delta_{\lint}$.
The support $\supp(A)$ of an operator $A$ is defined to be $\X_{\lint(A)}$.
For a convex set $C \subseteq \Herm_{\Delta}(\X)$, it follows from the convexity
of $C$ that there is an operator $A \in C$ whose support contains the support of
all other operators in $C$.
This is defined to be the support $\supp(C)$ of the convex set $C$.

\begin{definition}
  Let $f$ be a real convex function with domain $\Delta$.
  A pair of operators $X, Y \in \Herm_{\Delta}(\X)$ is said to be
  \emph{admissible} (with respect to $f$), written as $X \triangleright Y$, if
  for all eigenvalues of $Y$ in $\Delta \setminus \Delta_{\lint}$, the
  corresponding eigenspace of $Y$ is contained in the eigenspace of $X$ of the
  same eigenvalue.
  Equivalently, $X \triangleright Y$ if and only if
  $X = X_{\lint(Y)} \oplus Y_{\lbd}$.
\end{definition}

% extension

The definition of $D_{f}(X, Y)$ is extended to all
$X, Y \in \dom{\phi} = \Herm_{\Delta}(\X)$ as follows.

\begin{definition}
  Let $f$ be real convex function of Legendre type whose domain is $\Delta$.
  The (extended) Bregman divergence $D_{f}(X, Y)$ for all
  $X, Y \in \Herm_{\Delta}(\X)$ is defined as
  \begin{equation}
    \label{eq:breg-div-ext}
    D_{f}(X, Y) =
    \begin{cases}
      D_{f}(X_{\lint(Y)}, Y_{\lint})\ & \text{if}\ X \triangleright Y,\\
      +\infty\ & \text{otherwise}.
    \end{cases}
  \end{equation}
\end{definition}

This extension follows the convention in information theory where for
$f(x) = x \ln x - x$, $0 \cdot f'(0) = 0 \cdot \ln 0$ is defined to be $0$ so
that relative entropy $D(p \Vert q)$ is defined as long as the support of $p$ is
contained in that of $q$.
The convention makes sense for general real convex functions of Legendre type as
it is not hard to verify that
$\lim\limits_{x \to a} \bigl( (x-a)\, f'(x) \bigr) = 0$ for $x$ converging in
$\Delta_{\lint}$ to $a \in \Delta \setminus \Delta_{\lint}$.

Similarly, we can extend the definition of the Legendre-Bregman projection
$\L_{f}(Y, \Lambda)$ to $Y \in \Herm_{\Delta}(\X)$.

\begin{definition}
  The (extended) Legendre-Bregman projection $\L_{f}(Y, \Lambda)$ is
  \begin{equation}
    \label{eq:leg-breg-proj-ext}
    \L_{f}(Y, \Lambda) = \L(Y_{\lint}, \Lambda_{\lint(Y)}) \oplus Y_{\lbd},
  \end{equation}
  which is defined for all $Y \in \Herm_{\Delta}(\X)$ and $\Lambda$ such that
  \begin{equation}
    \label{eq:Lambda-admissible}
    \Lambda_{\lint(Y)} \in \Herm_{\Delta^{*}_{\lint}}(\X) - f'(Y_{\lint}).
  \end{equation}
  For convenience, we say that $\Lambda$ is \emph{admissible} (with respect to
  $Y$) for $\L_{f}$ if \cref{eq:Lambda-admissible} holds.
\end{definition}

The definition can be justified as follows.
\begin{align}
  \L_{f}(Y, \Lambda)
  & = \argmax_{X \in \Herm_{\Delta}(\X)} \bigl( \ip{\Lambda}{X} -
    D_{f}(X, Y) \bigr)\nonumber\\
  & = \argmax_{\substack{X \in \Herm_{\Delta}(\X)\\ X \triangleright Y }}
  \bigl( \ip{\Lambda}{X} - D_{f}(X, Y) \bigr)\tag{By \cref{eq:breg-div-ext}}\\
  & = \argmax_{\substack{X \in \Herm_{\Delta}(\X)\\
  X = X_{\lint(Y)} \oplus Y_{\lbd}}} \bigl(
  \ip{\Lambda_{\lint(Y)}}{X_{\lint(Y)}} -
  D_{f}(X_{\lint(Y)}, Y_{\lint}) \bigr)\nonumber\\
  & = \L_{f}(Y_{\lint}, \Lambda_{\lint(Y)}) \oplus Y_{\lbd}.
\end{align}
Hence, the extended definition of $\L_{f}$ is consistent with the requirement
that it is the maximizer of $\ip{\Lambda}{X} - D_{f}(X, Y)$.
This also implies that $\supp(\L_{f}(Y, \Lambda)) = \supp(Y)$ by
\cref{pro:leg-dif}.

\subsection{Basic Properties}\label{sec:properties}

% Legendre

\begin{lemma}[Corollaries 3.2 and 3.3 of~\cite{Lew96}]\label{lem:trf-leg}
  Convex function $\phi = \tr \circ f$ is Legendre if and only if $f$ is
  Legendre.
\end{lemma}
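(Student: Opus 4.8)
The plan is to recognize $\phi = \tr \circ f$ as a \emph{spectral function} and reduce the claim to Lewis's transfer theory for convex functions of the eigenvalues. Write $n = \dim \X$ and let $\lambda : \Herm(\X) \to \real^{n}$ send a Hermitian operator to its vector of eigenvalues (say, in nonincreasing order). Define the separable symmetric function $g : \real^{n} \to \real \cup \{+\infty\}$ by $g(x) = \sum_{i=1}^{n} f(x_{i})$. Since $\tr f(X) = \sum_{k} f(\lambda_{k}(X))$, we obtain the factorization $\phi = g \circ \lambda$, so $\phi$ is exactly the spectral function associated with the symmetric function $g$; its effective domain $\dom{\phi} = \Herm_{\Delta}(\X)$ corresponds to $\dom{g} = \Delta^{n}$.

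First I would pass from the matrix statement to the statement about $g$ on $\real^{n}$. Lewis's analysis of convex spectral functions (Corollaries 3.2 and 3.3 of~\cite{Lew96}) shows that each defining ingredient of ``Legendre type'' transfers between a symmetric function and its associated spectral function: $g \circ \lambda$ is proper, closed, essentially smooth, and essentially strictly convex on the interior of its domain precisely when $g$ is. Consequently $\phi = g \circ \lambda$ is Legendre if and only if $g$ is Legendre. This is where the genuine difficulty resides, but it is exactly what the cited corollaries supply, so I would simply invoke them.

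The remaining step is the elementary \emph{separability reduction}: $g(x) = \sum_{i} f(x_{i})$ is Legendre if and only if $f$ is. I would check each property of \cref{def:leg-type} coordinatewise. Properness and closedness (lower semicontinuity) of a finite sum of identical summands hold iff they hold for the single summand $f$. For strict convexity on the interior, if $f$ is strictly convex on $\Delta_{\lint}$ then for distinct $x, y \in (\Delta_{\lint})^{n}$ at least one coordinate differs, forcing a strict inequality in that summand (and $\le$ in the others by convexity), hence in the sum; conversely, restricting $g$ to a single coordinate recovers strict convexity of $f$. For essential smoothness, note $\nabla g(x) = (f'(x_{1}), \ldots, f'(x_{n}))$, so $g$ is differentiable on $\interior{\dom{g}} = (\Delta_{\lint})^{n}$ iff $f$ is differentiable on $\Delta_{\lint}$, and $\norm{\nabla g}$ diverges along sequences approaching $\partial(\Delta^{n})$ iff $\abs{f'}$ diverges along sequences approaching $\partial \Delta$.

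Combining the two steps gives $\phi$ Legendre $\iff$ $g$ Legendre $\iff$ $f$ Legendre. The only point in the self-contained part that demands care is the essential-smoothness equivalence, where one must track that a point of $\Delta^{n}$ tending to the boundary forces at least one coordinate to leave $\Delta_{\lint}$, so that the divergence of $\abs{f'}$ at $\partial\Delta$ indeed yields divergence of $\norm{\nabla g}$; the genuinely hard content—transferring these properties across the eigenvalue map—is precisely what Lewis's corollaries handle for us.
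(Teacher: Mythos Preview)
Your proposal is correct and follows essentially the same approach as the paper: factor $\phi$ through the eigenvalue map as $g\circ\lambda$ with $g(x)=\sum_i f(x_i)$, invoke Lewis's Corollaries~3.2 and~3.3 to transfer the Legendre property between $g$ and $\phi$, and then reduce $g$ Legendre to $f$ Legendre. The only difference is that the paper dispatches the separability step in one line (``$\hat{\phi}$ as the sum of Legendre functions is Legendre''), whereas you spell out the coordinatewise verification of each clause of \cref{def:leg-type}.
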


\begin{proof}
  This follows from the fact that $\phi = \tr \circ f$ is the composition of
  $\hat{\phi}(x) = \sum_{i}f(x_{i})$ and the spectrum map $\eig$ that returns
  the ordered tuple of eigenvalues of Hermitian operators.
  The proposition follows from Corollaries 3.2 and 3.3 of~\cite{Lew96} as
  $\hat{\phi}$ as the sum of Legendre functions is Legendre.
\end{proof}

% coercive
%

\begin{lemma}\label{lem:coercive}
  Suppose $f$ is a real convex of Legendre type and the domain of the conjugate
  $\dom{f^{*}}$ is open.
  Then $D_{f}(X, \;\cdot\;)$ is coercive for all $X \in \Herm_{\Delta}(\X)$.
\end{lemma}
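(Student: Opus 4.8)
The plan is to reduce this operator statement to a one-dimensional coercivity fact and then isolate the single place where the hypothesis that $\dom{f^*}$ is open is needed. Fix $X \in \Herm_\Delta(\X)$ and a constant $c$; it suffices to show that $D_f(X, Y_t) \to \infty$ along any sequence with $\norm{Y_t} \to \infty$. Since $D_f(X, Y) < \infty$ forces $X \triangleright Y$, in which case $D_f(X, Y) = D_f(X_{\lint(Y)}, Y_{\lint})$ with the boundary eigenvalues of $Y$ confined to the finite set $\Delta_{\lbd}$, the only way $\norm{Y_t}$ can diverge is that some eigenvalue of the interior block $Y_{\lint}$ escapes to $\pm\infty$. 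So it is enough to control the interior block.

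The first step is an exact decomposition of $D_f$ into scalar Bregman divergences. Writing $\Lambda = f'(Y_{\lint})$ and using the spectral Fenchel identity $f'(Y_{\lint})\, Y_{\lint} - f(Y_{\lint}) = f^*(\Lambda)$ (valid because these operators commute), I would first rewrite $D_f(X_{\lint(Y)}, Y_{\lint}) = \tr f(X_{\lint(Y)}) - \tr[\Lambda X_{\lint(Y)}] + \tr f^*(\Lambda)$. Letting $\{w_k\}$ be an eigenbasis of $Y_{\lint}$ with eigenvalues $\mu_k \in \Delta_{\lint}$ and setting $x_k = \bra{w_k} X \ket{w_k}$, which lies in the numerical range $[\lambda_{\min}(X), \lambda_{\max}(X)] \subseteq \Delta$, one evaluates the two trace terms in this basis and uses $f^*(f'(\mu_k)) = f'(\mu_k)\mu_k - f(\mu_k)$ to obtain
\begin{equation*}
  D_f(X_{\lint(Y)}, Y_{\lint}) = \Bigl(\tr f(X_{\lint(Y)}) - \sum_k f(x_k)\Bigr) + \sum_k D_f(x_k, \mu_k).
\end{equation*}
Both summands are nonnegative: the first because pinching $X_{\lint(Y)}$ into the basis $\{w_k\}$ can only decrease $\tr f$ (this is exactly \cref{lem:jensen-trace}, equivalently Schur-Horn majorization), and the second because each scalar Bregman divergence is nonnegative. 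In particular $D_f(X, Y) \ge D_f(x_k, \mu_k)$ for every $k$, with $x_k$ trapped in the fixed compact interval $[\lambda_{\min}(X), \lambda_{\max}(X)]$.

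The second step is scalar coercivity, uniform over $x_0$ in that compact interval: I claim $D_f(x_0, \mu) \to \infty$ as $\mu \to +\infty$ (and symmetrically as $\mu \to -\infty$). Using $D_f(x_0, \mu) = f(x_0) + f^*(\nu) - \nu x_0$ with $\nu = f'(\mu) \to \beta_+ := \sup \intdom{f^*}$, there are two cases. If $\beta_+ = +\infty$, then choosing a fixed $x' \in \Delta_{\lint}$ with $x' > \lambda_{\max}(X)$ (possible since $\mu\to+\infty$ forces $\sup\Delta = \infty$) and applying Fenchel's inequality $f^*(\nu) \ge \nu x' - f(x')$ gives $f^*(\nu) - \nu x_0 \ge \nu(x'-x_0) - f(x') \to +\infty$ uniformly. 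If $\beta_+ < \infty$, this is where the hypothesis enters: openness of $\dom{f^*}$ means $\beta_+ \notin \dom{f^*}$, and since $f^*$ is a closed convex function (the conjugate of the closed function $f$), lower semicontinuity together with $f^*(\beta_+) = +\infty$ forces $f^*(\nu) \to +\infty$ as $\nu \to \beta_+^-$, while $\nu x_0$ stays bounded; hence $D_f(x_0,\mu)\to\infty$, again uniformly in $x_0$.

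Finally I would combine the pieces: along a sequence with $\norm{Y_t}\to\infty$, pick for each $t$ an interior eigenvalue $\mu_{k_t}$ of $(Y_t)_{\lint}$ with $|\mu_{k_t}| \to \infty$; then $D_f(X, Y_t) \ge D_f(x_{k_t}, \mu_{k_t}) \to \infty$ by the uniform scalar coercivity, contradicting $D_f(X, Y_t) \le c$. I expect the main obstacle to be the second step, and specifically the subcase $\beta_+ < \infty$: this is the only point where the openness of $\dom{f^*}$ is genuinely used, and it captures exactly the phenomenon (illustrated by the Burg function $-\ln x$, where $f' \to 0^-$ yet $f^*$ blows up at the conjugate boundary) that would fail without it. The remaining care is bookkeeping around the extended divergence and the compression to $\X_{\lint(Y)}$, which must be handled so that the numerical-range bound $x_k \in [\lambda_{\min}(X), \lambda_{\max}(X)]$ stays valid on the varying subspaces.
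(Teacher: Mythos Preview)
Your proof is correct and follows essentially the same route as the paper: both diagonalize $Y$, set $x_k = \bra{w_k} X \ket{w_k}$, derive the identical decomposition $D_f(X,Y) = \bigl(\tr f(X) - \sum_k f(x_k)\bigr) + \sum_k D_f(x_k,\mu_k)$, and then reduce to scalar coercivity. The only real difference is that the paper cites Theorem~5.8 of~\cite{BB97} for the scalar fact, whereas you prove it directly via $D_f(x_0,\mu)=f(x_0)+f^*(f'(\mu))-f'(\mu)x_0$ and a case split on $\beta_+=\sup\intdom{f^*}$; your version is also more explicit than the paper about the pinching bound making the first term nonnegative and about the needed uniformity in $x_0\in[\lambda_{\min}(X),\lambda_{\max}(X)]$.
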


\begin{proof}
  Theorem 5.8 of~\cite{BB97} states that for \emph{real} convex functions, the
  condition that $\dom{f^{*}}$ is open is equivalent to BL0 and BL1 defined
  there, and is therefore also equivalent to $f$ being coercive for all
  $x\in \dom{f}$ by the discussion in Remark 5.3 in that paper.
  This shows that $D_{f}(x, \;\cdot\;)$ is coercive for all $x\in \Delta$.
  It remains to show that the coercive property remains true when lifting $f$ to
  $\tr \circ f$.
  In other words, we need to prove that the set
  \begin{equation*}
    \{ Y \mid D_{f}(X, Y) \le c \}
  \end{equation*}
  is bounded for all $c$.

  It suffices to consider the case where $\supp(Y) = \X$ as the general case
  reduces to it by the definition of extended $D_{f}$.
  Let $Y = \sum_{j} y_{j} \ket{\psi_{j}}\bra{\psi_{j}}$ be the spectrum
  decomposition of $Y$ where $\{\ket{\psi_{j}}\}$ is an orthonormal basis of
  $\X$.
  Define
  \begin{equation*}
    \tilde{x}_{j} = \bra{\psi_{j}} X \ket{\psi_{j}},
  \end{equation*}
  the diagonal elements of $X$ in the basis $\ket{\psi_{j}}$ and
  \begin{equation*}
    \tilde{X} = \sum_{j} \tilde{x}_{j} \ket{\psi_{j}} \bra{\psi_{j}}.
  \end{equation*}
  It is obvious that $\tilde{x}_{j} \in \Delta$ for all $j$.

  By the definition of $D_{f}(X, Y)$, we have
  \begin{equation*}
    \begin{split}
      D_{f}(X, Y) & = \tr \bigl( f(X) - f(Y) - f'(Y) (X - Y)\bigr)\\
      & = \tr f(X) - \sum_{j} f(y_{j}) - \sum_{j} f'(y_{j})
      (\tilde{x}_{j} - y_{j})\\
      & = \tr f(X) - \tr f (\tilde{X}) + \sum_{j} D_{f}(\tilde{x}_{j}, y_{j}).
    \end{split}
  \end{equation*}
  Hence, $D_{f}(X, Y) \le c$ implies that for all $j$
  \begin{equation*}
    D_{f}(\tilde{x}_{j}, y_{j}) \le c + \tr f(\tilde{X}) - \tr f(X),
  \end{equation*}
  which further implies that $y_{j}$ is bounded for all $j$ as
  $D_{f}(\tilde{x}_{j}, \;\cdot\;)$ is coercive.
\end{proof}

\begin{lemma}\label{lem:proj-conti}
  For all Legendre $f$, the Legendre-Bregman projection $\L_{f}(Y, \Lambda)$ is
  continuous on its domain.
\end{lemma}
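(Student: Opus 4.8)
The plan is to verify continuity sequentially: fix a point $(Y, \Lambda)$ in the domain of $\L_{f}$ and an arbitrary sequence ${(Y_{t}, \Lambda_{t})}_{t}$ in the domain converging to $(Y, \Lambda)$, and show $\L_{f}(Y_{t}, \Lambda_{t}) \to \L_{f}(Y, \Lambda)$. When $Y \in \Herm_{\Delta_{\lint}}(\X)$ has all eigenvalues in the interior, the argument is immediate: for large $t$ the spectrum of $Y_{t}$ also lies in the open set $\Delta_{\lint}$, the admissibility region is open, and on it $\L_{f}(Y, \Lambda) = (f^{*})'(f'(Y) + \Lambda)$ is a composition of the continuous functional-calculus maps $f'$, $(f^{*})'$ with matrix addition, hence continuous. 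The whole difficulty is therefore concentrated at limit points $Y$ possessing eigenvalues on the boundary $\Delta_{\lbd}$, where $f'(Y)$ formally has infinite eigenvalues and the extended formula \cref{eq:leg-breg-proj-ext} splits off a preserved boundary block.

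For such a boundary limit I would first split $\X = \X_{\lint(Y)} \oplus \X_{\lbd(Y)}$ and note that the boundary eigenvalues of $Y$ (the finitely many endpoints of $\Delta$) are separated from its interior eigenvalues by a positive gap; since $Y_{t} \to Y$, the corresponding spectral projections of $Y_{t}$ converge to those of $Y$, and the part of $Y_{t}$ that already sits exactly on $\Delta_{\lbd}$ is preserved verbatim by $\L_{f}$ and converges trivially. It thus remains to analyze $M_{t} = f'(Y_{t}) + \Lambda_{t}$ on the interior block. In the eigenbasis of $Y_{t}$ the diagonal operator $f'(Y_{t})$ has a block structure adapted to whether an eigenvalue converges to an interior point or to an endpoint of $\Delta$: by essential smoothness of $f$, the blocks corresponding to the left and right endpoints have eigenvalues diverging to $-\infty$ and $+\infty$ respectively, while the interior block stays bounded, and $\Lambda_{t}$ contributes only bounded diagonal and off-diagonal corrections. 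This is exactly the separated block-matrix situation of \cref{pro:eigenvalue-perturb,pro:eigenvector-perturb} with a gap $\eta_{t} \to \infty$. Then \cref{pro:eigenvalue-perturb} shows the eigenvalues of $M_{t}$ stay within ${\norm{\Lambda_{t}}}^{2}/\eta_{t} \to 0$ of those of the block-diagonal part, so they organize into one bounded cluster and two clusters diverging to $\pm\infty$; and \cref{pro:eigenvector-perturb} applied to each diverging cluster (one-sidedly separated with gap $\eta_{t} \to \infty$) shows its spectral projection converges to the projection onto the corresponding endpoint eigenspace of $Y$, whence the projection onto the bounded cluster converges to $\Pi_{\X_{\lint(Y)}}$.

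Finally I would reassemble $(f^{*})'(M_{t}) = \sum_{k} (f^{*})'(\mu_{k}^{(t)}) \Pi_{k}^{(t)}$ cluster by cluster. On the bounded cluster the eigenvalues converge to the spectrum of $f'(Y_{\lint}) + \Lambda_{\lint(Y)}$, which lies in $\Delta^{*}_{\lint}$ by admissibility, so continuity of $(f^{*})'$ there yields convergence to $\L(Y_{\lint}, \Lambda_{\lint(Y)})$. On each diverging cluster, $(f^{*})' = {(f')}^{-1}$ is defined on a neighbourhood of $\pm\infty$ and has a finite limit equal to the corresponding endpoint of $\Delta$ (precisely because $f'$ blows up to $\pm\infty$ there), so $(f^{*})'$ of the diverging eigenvalues converges to that endpoint value while the cluster projection converges to the endpoint eigenspace, reproducing the preserved boundary part. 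Summing the pieces gives $\L(Y_{\lint}, \Lambda_{\lint(Y)}) \oplus Y_{\lbd} = \L_{f}(Y, \Lambda)$, as required. I expect the main obstacle to be this boundary analysis together with the block perturbation: one must confirm that the unbounded growth of $f'(Y_{t})$ genuinely produces a diverging spectral gap in $M_{t}$ dominating the bounded off-diagonal coupling from $\Lambda_{t}$, and must handle the two endpoints diverging in opposite directions, so that \cref{pro:eigenvalue-perturb,pro:eigenvector-perturb} can be combined to control eigenvalues and eigenprojections simultaneously.
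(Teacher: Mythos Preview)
Your proposal is correct and follows essentially the same strategy as the paper: block-decompose $f'(Y_{t}) + \Lambda_{t}$ according to whether eigenvalues of $Y_{t}$ converge to interior or boundary points of $\Delta$, invoke \cref{pro:eigenvalue-perturb,pro:eigenvector-perturb} with the diverging spectral gap to control eigenvalues and eigenprojections, and then reassemble via $(f^{*})'$ using that $(f^{*})' = (f')^{-1}$ sends the diverging clusters back to the boundary values. You are in fact slightly more careful than the paper in explicitly separating the two possible endpoints diverging to $-\infty$ and $+\infty$, whereas the paper folds these into a single block $\Gamma_{1}$ whose eigenvalues go to $\pm\infty$.
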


\begin{proof}
  Using \cref{eq:leg-breg-proj-mat}, we have
  $\L_{f}(Y, \Lambda) = (f^{*})' \bigl( f'(Y) + \Lambda \bigr)$.
  As $f$ is Legendre, both $(f^{*})'$ and $f'$ are continuous in the interior of
  the domain $\Delta_{\lint}$ by \cref{pro:leg-dif}.
  Let $(Y^{(n)}, \Lambda^{(n)})$ be a sequence that converges to $(Y, \Lambda)$.
  If $Y$ is in the interior $\Herm_{\Delta_{\lint}}(\X)$, the claim follows from
  the continuity of $f'$ and $(f^{*})'$.
  Otherwise, suppose $Y = Y_{\lint} \oplus Y_{\lbd}$.
  Without loss of generality, assume $Y^{(t)} \in \Herm_{\Delta_{\lint}}(\X)$ and write
  \begin{equation*}
    Y^{(t)} = Y^{(t)}_{\li} \oplus Y^{(t)}_{\lb},
  \end{equation*}
  such that $\lim_{t \to \infty} Y^{(t)}_{\li} = Y_{\lint}$ and
  $\lim_{t \to \infty} Y^{(t)}_{\lb} = Y_{\lbd}$.
  In the basis of the decomposition $Y^{(t)} = Y^{(t)}_{\li} \oplus Y^{(t)}_{\lb}$, write
  $\Lambda^{(t)} =
  \begin{pmatrix}
    \Lambda^{(t)}_{00} & \Lambda^{(t)}_{01}\\
    \Lambda^{(t)}_{10} & \Lambda^{(t)}_{11}
  \end{pmatrix}
  $.
  Hence,
  \begin{equation*}
    \begin{split}
      f' \bigl( Y^{(t)} \bigr) + \Lambda^{(t)}
      & =
      \begin{pmatrix}
        f' \bigl( Y^{(t)}_{\li} \bigr) + \Lambda^{(t)}_{00} & \Lambda^{(t)}_{01}\\
        \Lambda^{(t)}_{10} & f' \bigl( Y^{(t)}_{\lb} \bigr) + \Lambda^{(t)}_{11}
      \end{pmatrix}\\
      & \defeq
      \begin{pmatrix}
        A_{0} & B\\
        B^{\dag} & A_{1}
      \end{pmatrix},
    \end{split}
  \end{equation*}
  where we have omitted the dependence of $A_{0}$, $A_{1}$, and $B$ on $t$ for
  simplicity.
  Let $V$ be the unitary that diagonalizes the matrix
  \begin{equation}
    \label{eq:continuity-L-decomp}
    \begin{pmatrix}
      A_{0} & B\\
      B^{\dag} & A_{1}
    \end{pmatrix}
    = V
    \begin{pmatrix}
      \Gamma_{0} & 0 \\
      0 & \Gamma_{1}
    \end{pmatrix}
    V^{\dag}
  \end{equation}
  so that $\Gamma_{0}$ is a diagonal matrix that has all the small
  eigenvalues of $
  \begin{pmatrix}
    A_{0} & B\\
    B^{\dag} & A_{1}
  \end{pmatrix}
  $ on its diagonal.
  Write $V =
  \begin{pmatrix}
    C_{0} & -S_{1}\\
    S_{0} & C_{1}
  \end{pmatrix}
  $ and expand the block matrix multiplication we have
  \begin{equation}
    \label{eq:continuity-L-AB}
    \begin{split}
      A_{0} & = C_{0} \Gamma_{0} C_{0}^{\dag} + S_{1} \Gamma_{1} S_{1}^{\dag},\\
      B^{\dag} & = S_{0} \Gamma_{0} C_{0}^{\dag} - C_{1} \Gamma_{1} S_{1}^{\dag}.
    \end{split}
  \end{equation}
  The unitarity of $V$ implies
  \begin{equation}
    \label{eq:continuity-L-unit}
    \begin{split}
      C_{0}^{\dag}C_{0} + S_{0}^{\dag}S_{0} & = \I,\\
      C_{1}^{\dag}C_{1} + S_{1}^{\dag}S_{1} & = \I,\\
      S_{1}^{\dag}C_{0} - C_{1}^{\dag}S_{0} & = 0.
    \end{split}
  \end{equation}
  Using the above \cref{eq:continuity-L-AB,eq:continuity-L-unit},
  \begin{equation}
    \label{eq:continuity-L-AB2}
    \begin{split}
      & S_{1}^{\dag}A_{0} - C_{1}^{\dag} B^{\dag}\\
      =\; & \bigl( S_{1}^{\dag}C_{0} - C_{1}^{\dag}S_{0} \bigr)
      \Gamma_{0} C_{0}^{\dag} + \bigl(S_{1}^{\dag} S_{1} +
      C_{1}^{\dag} C_{1} \bigr) \Gamma_{1} S_{1}^{\dag}\\
      =\; & \Gamma_{1} S_{1}^{\dag}.
    \end{split}
  \end{equation}

  By the essential smoothness of the function $f$ and
  \cref{pro:eigenvalue-perturb}, $\Gamma_{1}$ has eigenvalues that all go to
  $\pm \infty$ when $t$ goes to $\infty$.
  And by \cref{pro:eigenvector-perturb}, the norm $\norm{S_{0}}$ (and
  $\norm{S_{1}}$) go to $0$ as they are bounded by $\norm{B}/\eta$ where $\eta$
  is the gap between the eigenvalues of $A_{0}$ and $\Gamma_{1}$ ($\Gamma_{0}$
  and $A_{1}$, respectively), a quantity that goes to $+\infty$ when $t$ goes
  to $\infty$.

  Using \cref{eq:continuity-L-AB,eq:continuity-L-AB2}, we expand $A_{0}$ as
  \begin{equation*}
    A_{0} = C_{0} \Gamma_{0} C_{0}^{\dag} + S_{1} \Gamma_{1} S_{1}^{\dag}
    = C_{0} \Gamma_{0} C_{0}^{\dag} + S_{1} S_{1}^{\dag} A_{0}
    - S_{1} C_{1}^{\dag} B^{\dag},
  \end{equation*}
  and hence
  \begin{equation*}
    \lim_{t \to \infty} A_{0}  = \lim_{t \to \infty} C_{0} \Gamma_{0} C_{0}^{\dag},
  \end{equation*}
  because $A_{0}$ and $B^{\dagger}$ are finite and $\norm{S_{1}}$ goes to $0$ as
  $t$ goes to $\infty$.
  Define $L_{0} = (f^{*})'(\Gamma_{0})$ and $L_{1} = (f^{*})'(\Gamma_{1})$.
  It is easy to check that
  \begin{equation*}
    \begin{split}
      & \lim_{t \to \infty} C_{0} L_{0} C_{0}^{\dag}\\
      =\; & \lim_{t \to \infty} (f^{*})'( C_{0} \Gamma_{0} C_{0}^{\dag})\\
      =\; & (f^{*})' \bigl( f'(Y_{\lint}) + \Lambda_{\lint(Y)} \bigr)\\
      =\; & \L_{f}(Y_{\lint}, \Lambda_{\lint(Y)}),
    \end{split}
  \end{equation*}
  and similarly
  \begin{equation*}
    \lim_{t \to \infty} C_{1} L_{1} C_{1}^{\dag}
    = \lim_{t \to \infty }(f^{*})'(f'(Y_{\lb}^{(t)})) = Y_{\lbd}.
  \end{equation*}

  Finally, it follows from the definition of $\L_{f}$ that
  \begin{equation*}
    \begin{split}
      \L_{f}(Y^{(t)}, \Lambda^{(t)}) & = V
      \begin{pmatrix}
        (f^{*})'(\Gamma_{0}) & 0\\
        0 & (f^{*})'(\Gamma_{1})
      \end{pmatrix}
      V^{\dag}\\
      & =
      \begin{pmatrix}
        C_{0} & -S_{1}\\
        S_{0} & C_{1}
      \end{pmatrix}
      \begin{pmatrix}
        L_{0} & 0\\
        0 & L_{1}
      \end{pmatrix}
      \begin{pmatrix}
        C_{0}^{\dag} & S_{0}^{\dag}\\
        -S_{1}^{\dag} & C_{1}^{\dag}
      \end{pmatrix}
      \\
      & =
      \begin{pmatrix}
        C_{0}L_{0}C_{0}^{\dag} +
        S_{1}L_{1}S_{1}^{\dag} &
        C_{0}L_{0}S_{0}^{\dag} -
        S_{1}L_{1}C_{1}^{\dag} \\
        S_{0}L_{0}C_{0}^{\dag} -
        C_{1}L_{1}S_{1}^{\dag} &
        S_{0}L_{0}S_{0}^{\dag} +
        C_{1}L_{1}C_{1}^{\dag}
      \end{pmatrix},
    \end{split}
  \end{equation*}
  and hence,
  \begin{equation*}
    \begin{split}
      \lim_{t \to \infty}\L_{f}(Y^{(t)}, \Lambda^{(t)}) & = \lim_{t \to \infty}
      \begin{pmatrix}
        C_{0}L_{0}C_{0} & 0\\
        0 & C_{1}L_{1}C_{1}
      \end{pmatrix}\\
      & = \L_{f}(Y_{\lint}, \Lambda_{\lint(Y)}) \oplus Y_{\lbd}\\
      & = \L_{f}(Y, \Lambda),
    \end{split}
  \end{equation*}
  which completes the proof.
\end{proof}

\begin{lemma}\label{lem:div-proj-diff}
  Suppose $f$ is a Legendre convex function.
  For all $X, Y \in \Herm_{\Delta}(\X)$ such that $X \triangleright Y$, and all
  admissible $\Lambda$, the following two identities hold
  \begin{align}
    \label{eq:div-proj-diff-1}
    D_{f}(X, Y) - D_{f}(X, \L_{f}(Y, \Lambda))
    & = \ip{\Lambda}{X} - \ell_{f}(Y, \Lambda),\\
    \label{eq:div-proj-diff-2}
    D_{f}(X, Y) - D_{f}(X, \L_{f}(Y, \Lambda))
    & = D_{f}(\L_{f}(Y, \Lambda), Y) + \ip{\Lambda}{X - \L_{f}(Y, \Lambda)}.
  \end{align}
\end{lemma}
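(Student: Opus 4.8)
The plan is to collapse both identities onto a single direct trace computation carried out in the interior, and then to remove the restriction to the interior by exploiting the block structure of the extended definitions. Throughout I abbreviate $Y' = \L_f(Y,\Lambda)$.

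First I would record the relationship between $\ell_f$ and $\L_f$. By the defining optimizations of \cref{def:leg-breg-conj-and-proj} (which, as justified after \cref{eq:leg-breg-proj-ext}, continue to characterize the extended $\L_f$ as the maximizer of $Z \mapsto \ip{\Lambda}{Z} - D_f(Z,Y)$ over $Z \in \Herm_\Delta(\X)$), the supremum defining $\ell_f$ is attained at $Z = Y'$, so that
\[
  \ell_f(Y,\Lambda) = \ip{\Lambda}{Y'} - D_f(Y', Y).
\]
Substituting this into the right-hand side of \cref{eq:div-proj-diff-1} rewrites it as $\ip{\Lambda}{X - Y'} + D_f(Y',Y)$, which is exactly the right-hand side of \cref{eq:div-proj-diff-2}. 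Since the two identities share the same left-hand side, it suffices to establish \cref{eq:div-proj-diff-2}; \cref{eq:div-proj-diff-1} then follows immediately.

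For \cref{eq:div-proj-diff-2} in the interior case $Y \in \Herm_{\Delta_\lint}(\X)$, I would invoke the explicit form $\L_f(Y,\Lambda) = (f^*)'(f'(Y)+\Lambda)$ from \cref{eq:leg-breg-proj-mat}. Since $f$ is Legendre, $f'$ and $(f^*)'$ are mutually inverse on the interiors of the domains by \cref{pro:leg-dif}; applying $f'$ to $Y' = (f^*)'(f'(Y)+\Lambda)$ therefore yields the key relation $\Lambda = f'(Y') - f'(Y)$. Expanding both sides of \cref{eq:div-proj-diff-2} with $D_f(A,B) = \tr\bigl(f(A) - f(B) - f'(B)(A-B)\bigr)$ and this relation, the $\tr f(X)$ terms cancel in the difference $D_f(X,Y) - D_f(X,Y')$, while on the other side the two factors of $f'(Y)$ recombine through $\tr\bigl(f'(Y)(Y'-Y)\bigr) + \tr\bigl(f'(Y)(X-Y')\bigr) = \tr\bigl(f'(Y)(X-Y)\bigr)$; the two sides then agree term by term. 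This uses only linearity and cyclicity of the trace, so the non-commutativity of $X$, $Y$, $Y'$ causes no difficulty — crucially, $f'(Y)$ and $f'(Y')$ are never multiplied together.

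The remaining and only delicate step is to lift the identity from $\Herm_{\Delta_\lint}(\X)$ to all $Y \in \Herm_\Delta(\X)$ through the extended definitions \cref{eq:breg-div-ext,eq:leg-breg-proj-ext}. Here I would split $\X = \X_{\lint(Y)} \oplus \X_{\lbd(Y)}$ and note that the hypothesis $X \triangleright Y$ gives $X = X_{\lint(Y)} \oplus Y_\lbd$, while \cref{eq:leg-breg-proj-ext} gives $Y' = \L_f(Y_\lint, \Lambda_{\lint(Y)}) \oplus Y_\lbd$; thus $X$, $Y$, and $Y'$ all share the boundary block $Y_\lbd$. Consequently $X \triangleright Y'$ and $Y' \triangleright Y$, so all three divergences are finite and, by \cref{eq:breg-div-ext}, reduce to their compressions onto $\X_{\lint(Y)}$. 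The one point that genuinely needs checking is the inner-product term: because $X - Y'$ is supported on $\X_{\lint(Y)}$, only the diagonal block $\Lambda_{\lint(Y)}$ of $\Lambda$ survives in $\ip{\Lambda}{X-Y'}$, so the off-diagonal and boundary blocks drop out. With these observations, \cref{eq:div-proj-diff-2} for a general $Y$ collapses exactly to the interior identity already proved for the triple $(X_{\lint(Y)}, Y_\lint, \Lambda_{\lint(Y)})$ on $\X_{\lint(Y)}$. I expect this boundary bookkeeping — rather than the algebra — to be the main obstacle, since one must verify that $\ell_f$, $\L_f$, and $D_f$ all restrict compatibly to $\X_{\lint(Y)}$ and that no residual boundary term survives.
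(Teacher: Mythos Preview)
Your proposal is correct and follows essentially the same approach as the paper: both arguments hinge on the identity $f'(Y') = f'(Y) + \Lambda$ from \cref{pro:leg-dif}, carry out a direct trace computation in the interior case, and then lift to general $Y$ by restricting to $\X_{\lint(Y)}$ via the extended definitions. The only cosmetic difference is the order of presentation---the paper's chain of equalities derives \cref{eq:div-proj-diff-2} and then \cref{eq:div-proj-diff-1} in one pass, whereas you first reduce \cref{eq:div-proj-diff-1} to \cref{eq:div-proj-diff-2} via $\ell_f(Y,\Lambda) = \ip{\Lambda}{Y'} - D_f(Y',Y)$ and then prove the latter---but the substance is identical.
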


\begin{proof}
  We first prove the lemma for the case when $\supp(Y) = \X$.
  By expanding the definition of $D_{f}$ and abbreviating $\L_{f}(Y, \Lambda)$
  as $L$, we have
  \begin{align}
    & D_{f}(X, Y) - D_{f}(X, L)\nonumber\\
    =\, & \tr \bigl( f(L) - f(Y) +
          f'(L)(X - L) - f'(Y)(X - Y) \bigr)\nonumber\\
    =\, & \tr \bigl( f(L) - f(Y) +
          (f'(Y) + \Lambda)(X - L) - f'(Y)(X - Y)\bigr)\tag{\cref{pro:leg-dif}}\\
    =\, & \tr \bigl( \Lambda(X - L) +
          f(L) - f(Y) - f'(Y)(L - Y) \bigr)\nonumber\\
    =\, & \tr \bigl( \Lambda(X - L) + D_{f}(L, Y)\bigr)\nonumber\\
    =\, & \ip{\Lambda}{X} - \ell_{f}(Y, \Lambda).\label{eq:div-proj-diff-3}
  \end{align}

  In general, the definition of $D_{f}$ and $\L_{f}$ on the extended domain
  implies
  \begin{equation*}
    D_{f}(X, Y) - D_{f}(X, \L_{f}(Y, \Lambda))
    = D_{f}(\hat{X}, \hat{Y}) -
    D_{f}(\hat{X}, \L_{f}(\hat{Y}, \hat{\Lambda})),
  \end{equation*}
  where $\hat{X}$, $\hat{Y}$, and $\hat{\Lambda}$ are restrictions of $X$, $Y$,
  and $\Lambda$ to $\supp(Y)$.
  By the calculation in \cref{eq:div-proj-diff-3}, this further simplifies to
  \begin{equation*}
    \bigl\langle \hat{\Lambda}, \hat{X} \bigr\rangle
    - \ell_{f}(\hat{Y}, \hat{\Lambda}) =
    \bigl\langle \hat{\Lambda}, \hat{X} \bigr\rangle
    - \ell_{f}(Y, \Lambda) + \ip{\Lambda}{Y_{\lbd}} =
    \ip{\Lambda}{X} - \ell_{f}(Y, \Lambda).
  \end{equation*}
  This proves \cref{eq:div-proj-diff-1}.
  \Cref{eq:div-proj-diff-2} follows from \cref{eq:div-proj-diff-1} and the fact
  that $\L_{f}(Y, \Lambda)$ is the minimizer for $\ell_{f}(Y, \Lambda)$.
\end{proof}

\begin{lemma}\label{lem:div-leg-dif}
  For Legendre convex function $f$ and $X, Y \in \Herm_{\Delta}(\X)$ such that
  $X \triangleright Y$, the mapping $t \mapsto D_{f}(X, \L_{f}(Y, t \Lambda))$
  is differentiable at $t = 0$ with derivative
  \begin{equation}
    \label{eq:div-leg-dif}
    \odv*{D_{f}(X, \L_{f}(Y, t \Lambda))}{t} \at[\Big]{t=0}
    = \ip{\Lambda}{Y-X}.
  \end{equation}
\end{lemma}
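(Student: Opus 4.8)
The plan is to reduce the claim to differentiating the Legendre--Bregman conjugate $\ell_f(Y,\cdot)$ at the origin, using the identity already proved in \cref{lem:div-proj-diff}. Substituting $t\Lambda$ for $\Lambda$ in \cref{eq:div-proj-diff-1} and rearranging gives
\begin{equation*}
  D_f(X, \L_f(Y, t\Lambda)) = D_f(X, Y) - t\ip{\Lambda}{X} + \ell_f(Y, t\Lambda),
\end{equation*}
valid for all $t$ small enough that $t\Lambda$ is admissible. The first summand is constant in $t$ and the second is linear, so everything hinges on showing that $t \mapsto \ell_f(Y, t\Lambda)$ is differentiable at $t=0$ with derivative $\ip{\Lambda}{Y}$. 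Granting this, the derivative of the left-hand side at $t=0$ is $-\ip{\Lambda}{X} + \ip{\Lambda}{Y} = \ip{\Lambda}{Y-X}$, which is exactly the claim.

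For the key step I would invoke the conjugate/envelope structure of $\ell_f$. By \cref{def:leg-breg-conj-and-proj}, $\ell_f(Y,\cdot)$ is the Fenchel conjugate of the convex map $X \mapsto D_f(X, Y)$, and the supremum is attained at the unique point $\L_f(Y, \Lambda)$ since $\tr\circ f$ is strictly convex (\cref{lem:trf-leg}). The standard duality fact then yields $\nabla_\Lambda \ell_f(Y, \Lambda) = \L_f(Y, \Lambda)$ wherever $\ell_f$ is differentiable in $\Lambda$. Differentiability in a neighborhood of $\Lambda = 0$ follows because the conjugate of a Legendre function is again Legendre, hence continuously differentiable on the interior of its domain (\cref{pro:leg-dif}); and $\Lambda = 0$ lies in that interior when $\supp(Y)=\X$, because then $f'(Y)$ has all eigenvalues in the open set $\Delta^*_{\lint}$, so admissibility holds for all small $\Lambda$. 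Evaluating at $\Lambda=0$ and using $\L_f(Y,0) = Y$ gives $\frac{d}{dt}\ell_f(Y, t\Lambda)\big|_{t=0} = \ip{\nabla_\Lambda \ell_f(Y, 0)}{\Lambda} = \ip{Y}{\Lambda} = \ip{\Lambda}{Y}$.

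It remains to treat the boundary, i.e.\ the case where $Y$ has eigenvalues in $\Delta_{\lbd}$. Here I would pass to the support. Since $X \triangleright Y$, the extended definitions give $D_f(X, \L_f(Y, t\Lambda)) = D_f(X_{\lint(Y)}, \L_f(Y_{\lint}, t\Lambda_{\lint(Y)}))$, which is precisely the interior computation above carried out on $\X_{\lint(Y)}$ with $\hat X = X_{\lint(Y)}$, $\hat Y = Y_{\lint}$, $\hat\Lambda = \Lambda_{\lint(Y)}$, producing derivative $\ip{\hat\Lambda}{\hat Y - \hat X}$. Finally, $X \triangleright Y$ forces $Y - X$ to be supported on $\X_{\lint(Y)}$ (its boundary block vanishes), so $\ip{\Lambda}{Y-X} = \ip{\hat\Lambda}{\hat Y - \hat X}$ and the two expressions agree.

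I expect the main obstacle to be the differentiability assertion for $\ell_f$ at the origin rather than the algebra: one must confirm that $\Lambda=0$ is genuinely interior to the admissible set and that the conjugate-gradient identity $\nabla_\Lambda \ell_f = \L_f$ survives in the non-commutative matrix setting. A more computational alternative bypasses $\ell_f$ entirely: differentiate $D_f(X, L(t))$ directly using the explicit form $L(t) = (f^*)'(f'(Y) + t\Lambda)$ and the trace-derivative rule \cref{eq:dif-trace}; after the $\tr(f'(L)\dot L)$ contributions cancel, one is left with $-\tr\bigl(Df'[Y](\dot L(0))(X-Y)\bigr)$, and the chain-rule identity $Df'[Y]\circ D(f^*)'[f'(Y)] = \mathrm{id}$ (from $(f^*)' = (f')^{-1}$) collapses $Df'[Y](\dot L(0))$ to $\Lambda$, again giving $\ip{\Lambda}{Y-X}$. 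The drawback of that route is the delicate bookkeeping of Fréchet derivatives of matrix functions when $X$ and $Y$ do not commute.
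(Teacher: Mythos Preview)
Your proposal is correct. Both arguments start from \cref{lem:div-proj-diff} and differentiate, but you use identity~\eqref{eq:div-proj-diff-1} and the envelope/conjugate-gradient fact $\nabla_{\Lambda}\ell_{f}(Y,\Lambda)=\L_{f}(Y,\Lambda)$ to get $\odv*{\ell_{f}(Y,t\Lambda)}{t}\at{t=0}=\ip{\Lambda}{Y}$, whereas the paper uses identity~\eqref{eq:div-proj-diff-2} and computes directly via \cref{eq:dif-trace} that $\odv*{D_{f}(\L_{f}(Y,t\Lambda),Y)}{t}\at{t=0}=0$. These are two faces of the same relation, since $\ell_{f}(Y,\Lambda)=\ip{\Lambda}{\L_{f}(Y,\Lambda)}-D_{f}(\L_{f}(Y,\Lambda),Y)$. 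Your route trades the explicit trace-derivative calculation for a standard convex-analysis fact (which the paper in effect proves later as \cref{lem:conjugate-derivative}), and you are more explicit than the paper about reducing the boundary case $Y\notin\Herm_{\Delta_{\lint}}(\X)$ to the interior via the support restriction; that boundary reduction is correct as written.
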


\begin{proof}
  We first compute the derivative
  \begin{align}
    & \odv*{D_{f}(\L_{f}(Y, t \Lambda), Y)}{t} \at[\Big]{t=0}\nonumber\\
    =\, & \odv*{\tr \Bigl( f(\L_{f}(Y, t \Lambda)) - f'(Y) \L_{f}(Y, t \Lambda)
          \Bigr)}{t} \at[\Big]{t=0}\nonumber\\
    =\, & \ip{f'(\L_{f}(Y, t \Lambda)) - f'(Y)}{\odv*{\L_{f}(Y, t \Lambda)}{t}}
          \at[\bigg]{t=0}\nonumber\\
    =\, & 0,\label{eq:div-proj-0}
  \end{align}
  where the third line uses \cref{eq:dif-trace} and the last line follows from
  $\L_{f}(Y, 0) = Y$.
  Using \cref{eq:div-proj-diff-2} of \cref{lem:div-proj-diff}, we have
  \begin{align*}
    & \odv*{D_{f}(X, \L_{f}(Y, t \Lambda))}{t} \at[\Big]{t=0}\\
    =\, & \odv* {\Bigl(
          \ip{t \Lambda}{\L_{f}(Y, t \Lambda) - X} -
          D_{f}(\L_{f}(Y, t \Lambda), Y) \Bigr)}{t} \at[\Big]{t=0}\\
    =\, & \Bigl( \ip{\Lambda}{\L_{f}(Y, t \Lambda) - X} -
          \odv*{D_{f}(\L_{f}(Y, t \Lambda), Y)}{t}\Bigr) \at[\Big]{t=0}\\
    =\, & \ip{\Lambda}{Y - X}.\tag{By \cref{eq:div-proj-0}}
  \end{align*}
\end{proof}

\begin{lemma}\label{lem:conjugate-explicit}
  Let $f$ be a Legendre function and $\Lambda$ is admissible.
  $\ell_{f}(Y, \Lambda)$ has the following explicit form
  \begin{equation}
    \label{eq:conjugate-explicit}
    \ell_{f}(Y, \Lambda) = \tr f^{*}(f'(Y) + \Lambda) - \tr f^{*}(f'(Y)).
  \end{equation}
\end{lemma}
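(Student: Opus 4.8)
The plan is to evaluate $\ell_{f}$ directly at the point where its defining supremum is attained. By \cref{def:leg-breg-conj-and-proj} the quantity $\ell_{f}(Y, \Lambda) = \sup_{X}\bigl(\ip{\Lambda}{X} - D_{f}(X, Y)\bigr)$ is a supremum whose maximizer is $L \defeq \L_{f}(Y, \Lambda)$, so that $\ell_{f}(Y, \Lambda) = \ip{\Lambda}{L} - D_{f}(L, Y)$. I would first establish the identity in the generic case $\supp(Y) = \X$, that is $Y \in \Herm_{\Delta_{\lint}}(\X)$, where $L$ also lies in $\Herm_{\Delta_{\lint}}(\X)$ and every term below is finite, and then reduce the general admissible case to it.

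For the full-support case, the explicit form \cref{eq:leg-breg-proj-mat} gives $f'(L) = f'(Y) + \Lambda$, hence $\Lambda = f'(L) - f'(Y)$. Substituting this into $\ip{\Lambda}{L}$ and expanding $D_{f}(L, Y) = \tr f(L) - \tr f(Y) - \ip{f'(Y)}{L - Y}$, the two occurrences of the cross term $\ip{f'(Y)}{L}$ cancel (they appear under a single trace, so no commutativity between $Y$ and $L$ is needed), leaving $\ell_{f}(Y, \Lambda) = \bigl(\ip{f'(L)}{L} - \tr f(L)\bigr) - \bigl(\ip{f'(Y)}{Y} - \tr f(Y)\bigr)$.

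The crux is then the trace Fenchel-Young identity $\tr f^{*}(f'(X)) = \ip{f'(X)}{X} - \tr f(X)$, valid for $X \in \Herm_{\Delta_{\lint}}(\X)$. I would obtain it either by applying \cref{lem:conjugate-form} to the Legendre function $\phi = \tr \circ f$, using that $\phi^{*} = \tr \circ f^{*}$ and $\nabla \phi^{*} = (f^{*})'$, or more elementarily by diagonalizing $X = \sum_{k} \lambda_{k}\Pi_{k}$ and summing the scalar equality $f^{*}(f'(\lambda_{k})) = \lambda_{k} f'(\lambda_{k}) - f(\lambda_{k})$ over the spectrum (note $f'(X)$ and $X$ commute, so each trace splits as a spectral sum). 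Applying this identity to $L$ and to $Y$, and recalling $f'(L) = f'(Y) + \Lambda$, turns the two bracketed groups into $\tr f^{*}(f'(Y) + \Lambda)$ and $\tr f^{*}(f'(Y))$ respectively, which is exactly \cref{eq:conjugate-explicit}.

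It remains to treat a general admissible $\Lambda$ when $Y$ has boundary eigenvalues. Here I would invoke the extended definitions from \cref{sec:extension}, decomposing along $\X = \X_{\lint(Y)} \oplus \X_{\lbd(Y)}$ exactly as in the proof of \cref{lem:div-proj-diff}: the supremum defining $\ell_{f}(Y, \Lambda)$ ranges only over $X \triangleright Y$, so the boundary block of $X$ is frozen at $Y_{\lbd}$ and factors out, reducing the computation to $\ell_{f}(Y_{\lint}, \Lambda_{\lint(Y)})$ on the support, to which the finite-case identity applies. The main obstacle is precisely this boundary bookkeeping: on $\X_{\lbd(Y)}$ the derivative $f'(Y_{\lbd})$ diverges by essential smoothness, so the right-hand side of \cref{eq:conjugate-explicit} must be read on $\supp(Y)$; checking that the frozen boundary contribution is consistent with the support-restricted formula is the one step requiring care, and it is handled by the same admissibility conventions that justify the extended $D_{f}$ and $\L_{f}$.
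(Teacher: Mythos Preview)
Your proposal is correct and follows essentially the same route as the paper: evaluate $\ell_{f}$ at the maximizer $L=\L_{f}(Y,\Lambda)$, expand using $f'(L)=f'(Y)+\Lambda$, and finish with the Fenchel--Young identity from \cref{lem:conjugate-form}. The only cosmetic difference is the starting expression: the paper first invokes \cref{eq:div-proj-diff-1} with $X=Y$ to write $\ell_{f}(Y,\Lambda)=\ip{\Lambda}{Y}+D_{f}(Y,L)$, whereas you use the direct evaluation $\ell_{f}(Y,\Lambda)=\ip{\Lambda}{L}-D_{f}(L,Y)$; the subsequent algebra and the boundary reduction to $\supp(Y)$ are the same in both.
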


\begin{proof}
  Taking $X = Y$ in \cref{eq:div-proj-diff-1} and letting $R = f'(Y) + \Lambda$
  and $L = \L_{f}(Y, \Lambda) = (f^{*})'(R)$, we have
  \begin{equation*}
    \begin{split}
      \ell_{f}(Y, \Lambda) & = \ip{\Lambda}{Y} + D_{f}(Y, \L_{f}(Y, \Lambda))\\
      & = \ip{\Lambda}{Y} + \tr \bigl( f(Y) - f(L) - f'(L)(Y-L) \bigr)\\
      & = \ip{\Lambda}{Y} + \tr \bigl( f(Y) -f((f^{*})'(R)) - R\, (Y-L)\bigr)\\
      & = \tr \bigl( R\, (f^{*})'(R) - f((f^{*})'(R)) \bigr)
      - \tr \bigl( f'(Y)\,Y - f(Y)\bigr)\\
      & = \tr f^{*}(f'(Y) + \Lambda) - \tr f^{*}(f'(Y)),
    \end{split}
  \end{equation*}
  where the last line follows from \cref{lem:conjugate-form}.
\end{proof}

\begin{lemma}\label{lem:conjugate-derivative}
  For Legendre convex function $f$ and a list of Hermitian matrices
  $F = (F_{1}, F_{2}, \ldots, F_{k})$,
  \begin{equation*}
    \pdv*{\ell_{f}(Y, \lambda \cdot F)}{\lambda_{j}}
    = \ip{F_{j}}{\L_{f}(Y, \lambda \cdot F)}.
  \end{equation*}
\end{lemma}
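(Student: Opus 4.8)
The plan is to reduce the entire computation to the closed-form expression for $\ell_{f}$ already obtained in \cref{lem:conjugate-explicit} and then differentiate it term by term. Writing $\Lambda = \lambda \cdot F = \sum_{i} \lambda_{i} F_{i}$, that lemma gives
\[
  \ell_{f}(Y, \lambda \cdot F) = \tr f^{*}\bigl(f'(Y) + \lambda \cdot F\bigr) - \tr f^{*}\bigl(f'(Y)\bigr).
\]
The second term does not depend on $\lambda$, so only the first contributes to $\pdv*{}{\lambda_{j}}$. I would regard the first term as $\tr g(A(\lambda))$ for the smooth function $g = f^{*}$ and the matrix-valued affine map $A(\lambda) = f'(Y) + \sum_{i} \lambda_{i} F_{i}$, whose partial derivative in $\lambda_{j}$ is simply $\pdv*{A}{\lambda_{j}} = F_{j}$.

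First I would apply the trace-differentiation identity \cref{eq:dif-trace}, holding the remaining coordinates $\lambda_{i}$ ($i \ne j$) fixed so that the single-variable formula applies, to obtain
\[
  \pdv*{\tr f^{*}\bigl(f'(Y) + \lambda \cdot F\bigr)}{\lambda_{j}} = \ip{(f^{*})'\bigl(f'(Y) + \lambda \cdot F\bigr)}{F_{j}}.
\]
Next I would identify the operator on the left of the inner product with the Legendre-Bregman projection: by the explicit form \cref{eq:leg-breg-proj-mat}, $(f^{*})'\bigl(f'(Y) + \lambda \cdot F\bigr) = \L_{f}(Y, \lambda \cdot F)$. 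Finally, since both $\L_{f}(Y, \lambda \cdot F)$ and $F_{j}$ are Hermitian, the Hilbert--Schmidt pairing $\ip{A}{B} = \tr(A^{\dagger}B) = \tr(AB)$ is symmetric in its two arguments, so that $\ip{\L_{f}(Y, \lambda \cdot F)}{F_{j}} = \ip{F_{j}}{\L_{f}(Y, \lambda \cdot F)}$, which is exactly the claimed identity.

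The only points requiring care --- and where I expect the main (though modest) obstacle to lie --- are the regularity conditions needed to invoke \cref{eq:dif-trace}. One must check that $\lambda \cdot F$ stays admissible in a neighbourhood of the point of interest, so that $f'(Y) + \lambda \cdot F$ has spectrum inside $\Delta^{*}_{\lint}$, where $f^{*}$ is smooth; this smoothness is guaranteed by $f^{*}$ being Legendre together with \cref{pro:leg-dif}, and admissibility is an open condition, so it persists on a neighbourhood and legitimizes differentiating under the trace. The argument is cleanest and immediate when $\supp(Y) = \X$, i.e.\ $Y \in \Herm_{\Delta_{\lint}}(\X)$; in the general case I would first pass to the restrictions on $\supp(Y)$ dictated by the extended definitions of $\ell_{f}$ and $\L_{f}$, after which the computation above applies verbatim on the support.
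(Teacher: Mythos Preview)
Your proposal is correct and follows essentially the same approach as the paper: invoke \cref{lem:conjugate-explicit}, differentiate the $\tr f^{*}(f'(Y)+\lambda\cdot F)$ term via \cref{eq:dif-trace}, and identify the result with $\L_{f}$ using \cref{eq:leg-breg-proj-mat}. The paper's proof is a terse three-line version of exactly this; your additional remarks on the symmetry of the inner product and on admissibility/regularity are reasonable elaborations but not departures from the method.
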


\begin{proof}
  By \cref{lem:conjugate-explicit}, we can write
  \begin{align}
    \pdv*{ \ell_{f}(Y, \lambda \cdot F)}{\lambda_{j}}
    & = \pdv*{\tr f^{*}(f'(Y) + \lambda \cdot F)}{\lambda_{j}} \notag\\
    & = \ip{F_{j}}{(f^{*})'(f'(Y) + \lambda \cdot F)}
          \tag{By \cref{eq:dif-trace}}\\
    & = \ip{F_{j}}{\L_{f}(Y, \lambda \cdot F)}.\notag\qedhere
  \end{align}
\end{proof}

\begin{lemma}\label{lem:bounded-has-limit}
  Suppose $f$ is a Legendre convex function with domain $\Delta$ and the domain
  $\dom{f^{*}}$ is open.
  Let $c$ be a constant and a sequence $\bigl( Y^{(t)} \bigr)$ in
  $\Herm_{\Delta_{\lint}}(\X)$ satisfies that
  $D_{f} \bigl( X, Y^{(t)} \bigr) \le c$.
  Then $\bigl (Y^{(t)} \bigr)$ has a limiting point $\hat{Y}$ in
  $\Herm_{\Delta}(\X)$.
\end{lemma}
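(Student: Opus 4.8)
The plan is to combine the coercivity established in \cref{lem:coercive} with a finite-dimensional compactness argument, using \cref{lem:boundary-effect} to control the behaviour at the ``bad'' boundary $\cl{\Delta}\setminus\Delta$. First I would invoke \cref{lem:coercive}: since $\dom{f^{*}}$ is open, $D_{f}(X, \;\cdot\;)$ is coercive, so the sublevel set $\{Y \mid D_{f}(X, Y) \le c\}$ is bounded. As the sequence $\bigl(Y^{(t)}\bigr)$ lies in this set, it is a bounded sequence of Hermitian operators, and by Bolzano--Weierstrass (finite dimensionality of $\Herm(\X)$) it has a subsequence $\bigl(Y^{(t_{k})}\bigr)$ converging to some $\hat{Y} \in \Herm(\X)$. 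Since eigenvalues depend continuously on the operator and each $Y^{(t_{k})}$ has spectrum in $\Delta_{\lint}$, the eigenvalues of $\hat{Y}$ lie in $\cl{\Delta_{\lint}} = \cl{\Delta}$. It then remains only to rule out that $\hat{Y}$ has an eigenvalue in $\cl{\Delta}\setminus\Delta$; this is the one place where the discontinuity of the matrix Bregman divergence could interfere, and it is the crux of the argument.

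To handle the boundary, I would argue by contradiction using the scalar decomposition that underlies \cref{lem:coercive}. Let $m = \lambda_{\min}(X)$ and $M = \lambda_{\max}(X)$; since the spectrum of the fixed operator $X$ lies in $\Delta$ and $\Delta$ is convex, the segment $[m, M]$ is a compact subset of $\Delta$, and in particular is disjoint from $\cl{\Delta}\setminus\Delta$. For any orthonormal eigenbasis $\bigl(\ket{\psi_{j}}\bigr)$ of a given $Y^{(t)}$ with eigenvalues $y_{j}$, writing $\tilde{x}_{j} = \bra{\psi_{j}} X \ket{\psi_{j}} \in [m, M]$ and $\tilde{X} = \sum_{j} \tilde{x}_{j}\ket{\psi_{j}}\bra{\psi_{j}}$, the computation in \cref{lem:coercive} gives
\begin{equation*}
  D_{f}(X, Y^{(t)}) = \tr f(X) - \tr f(\tilde{X}) + \sum_{j} D_{f}(\tilde{x}_{j}, y_{j}).
\end{equation*}
Because $\tilde{X}$ is the pinching of $X$ in the $\ket{\psi_{j}}$ basis, \cref{lem:jensen-trace} (with $A_{j} = \ket{\psi_{j}}\bra{\psi_{j}}$ and all $X_{j}=X$) gives $\tr f(\tilde{X}) \le \tr f(X)$, and since each Bregman term is nonnegative we obtain $D_{f}(\tilde{x}_{j}, y_{j}) \le D_{f}(X, Y^{(t)}) \le c$ for every $j$ and every $t$.

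Now suppose, for contradiction, that $\hat{Y}$ has an eigenvalue $a \in \cl{\Delta}\setminus\Delta$. By continuity of the spectrum, for each $k$ there is an eigenvalue $y^{(t_{k})}$ of $Y^{(t_{k})}$ with $y^{(t_{k})} \to a$; let $\ket{\psi^{(t_{k})}}$ be a corresponding unit eigenvector and set $\tilde{x}^{(t_{k})} = \bra{\psi^{(t_{k})}} X \ket{\psi^{(t_{k})}} \in [m, M]$. Passing to a further subsequence, $\tilde{x}^{(t_{k})} \to x^{*} \in [m, M] \subseteq \Delta$. The sequences $\bigl(\tilde{x}^{(t_{k})}\bigr)$ in $\Delta$ and $\bigl(y^{(t_{k})}\bigr)$ in $\Delta_{\lint}$ then satisfy exactly the hypotheses of \cref{lem:boundary-effect}, with limits $x^{*} \in \Delta$ and $a \in \cl{\Delta}\setminus\Delta$, so $D_{f}\bigl(\tilde{x}^{(t_{k})}, y^{(t_{k})}\bigr) \to +\infty$, contradicting the uniform bound $D_{f}\bigl(\tilde{x}^{(t_{k})}, y^{(t_{k})}\bigr) \le c$ from the previous step. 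Hence every eigenvalue of $\hat{Y}$ lies in $\Delta$, i.e.\ $\hat{Y} \in \Herm_{\Delta}(\X)$, giving the desired limiting point. The main obstacle throughout is precisely this last, boundary-control step: coercivity alone secures boundedness and hence a convergent subsequence, but only the scalar reduction combined with \cref{lem:boundary-effect} prevents eigenvalues from collapsing onto the excluded boundary where $D_{f}$ blows up.
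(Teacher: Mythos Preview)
Your proof is correct and follows essentially the same route as the paper: invoke \cref{lem:coercive} for boundedness, extract a convergent subsequence, use the scalar decomposition $D_{f}(X, Y^{(t)}) = \tr f(X) - \tr f(\tilde{X}) + \sum_{j} D_{f}(\tilde{x}_{j}, y_{j})$ from that lemma's proof, and then derive a contradiction via \cref{lem:boundary-effect} if an eigenvalue of $\hat{Y}$ falls in $\cl{\Delta}\setminus\Delta$. Your version is in fact more carefully written than the paper's: you explicitly pass to a further subsequence so that the scalar $\tilde{x}^{(t_{k})}$ converges in $\Delta$, and you use \cref{lem:jensen-trace} to make the bound $D_{f}(\tilde{x}_{j}, y_{j}) \le c$ uniform, whereas the paper simply asserts that the identity plus \cref{lem:boundary-effect} forces $D_{f}(X, Y^{(t_{i})}) \to +\infty$ (implicitly relying on $\tr f(X^{(i)})$ being bounded because the eigenvalues of $X^{(i)}$ lie in the compact set $[m,M]\subset\Delta$).
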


\begin{proof}
  As the domain $\dom{f^{*}}$ is open, \cref{lem:coercive} guarantees that
  $D_{f}(X, \;\cdot\;)$ is coercive for all $X \in \Herm_{\Delta}(\X)$.
  That is, the sequence $\bigl (Y^{(t)} \bigr)$ is bounded.
  Hence, there must be a subsequence $\bigl( Y^{(t_{i})} \bigr)$ of
  $\bigl( Y^{(t)} \bigr)$ that converges to $\hat{Y}$.
  We will show that $\hat{Y}$ is in $\Herm_{\Delta}(\X)$.
  Assume to the contrary that $\hat{Y}$ is not in $\Herm_{\Delta}(\X)$ and it
  has eigenvalues in $\cl{\Delta} \setminus \Delta$.
  Let $a \in \real$ be such an eigenvalue.
  $Y^{(t_{i})} = \sum_{j} y^{(i)}_{j} \bigket{\psi^{(i)}_{j}}\bigbra{\psi^{(i)}_{j}}$
  be the spectrum decomposition of $Y^{t_{i}}$.
  Define $x^{(i)}_{j} = \bigbra{\psi^{(i)}_{j}} X \bigket{\psi^{(i)}_{j}}$ and
  \begin{equation*}
    X^{(i)} = \sum_{j} x^{(i)}_{j} \bigket{\psi^{(i)}_{j}}\bigbra{\psi^{(i)}_{j}}.
  \end{equation*}
  As in the proof of \cref{lem:coercive}, we have
  \begin{equation*}
    D_{f}\bigl( X, Y^{(t_{i})} \bigr) = \tr f(X) - \tr f \bigl( X^{(i)} \bigr) +
    \sum_{j} D_{f} \bigl( x^{(i)}_{j}, y^{(i)}_{j} \bigr).
  \end{equation*}
  Then we have $D_{f} \bigl( X, Y^{(t_{i})} \bigr)$ goes to $+\infty$ by
  \cref{lem:boundary-effect}, a contradiction with the assumption that
  $D_{f} \bigl( X, Y^{(t)} \bigr) \le c$.
\end{proof}

\subsection{Linear Families and Legendre-Bregman Projection Families}%
\label{sec:families}

\begin{definition}
  Given $X_{0} \in \Herm_{\Delta}(\X)$ and a tuple $F={(F_{j})}_{j=1}^{k}$ of
  Hermitian matrices $F_{j}$, the \emph{linear family} for $X_{0}$ and $F$ is
  defined by
  \begin{equation*}
    \lf(X_{0}, F) = \bigl\{ X \in \Herm_{\Delta}(\X) \mid \ip{F_{j}}{X} =
      \ip{F_{j}}{X_{0}}, j=1, 2, \ldots, k \bigr\}.
  \end{equation*}
  For $Y_{0} \in \Herm_{\Delta}$, the \emph{Legendre-Bregman projection family}
  for $Y_{0}$ and $F$ is defined by
  \begin{equation*}
    \pf(Y_{0}, F) = \bigl\{ Y \in \Herm_{\Delta}(\X) \mid Y =
    \L_{f}(Y_{0}, \lambda \cdot F) \bigr\}.
  \end{equation*}
\end{definition}
We usually simply denote the two families of operators as $\lf$ and $\pf$ when
$X_{0}$, $Y_{0}$, and $F$ is obvious from the context.

\begin{lemma}\label{lem:nonempty}
  Suppose $f$ is a real convex function of Legendre type and $\dom{f^{*}}$ is
  open.
  If $D_{f}(X_{0}, Y_{0}) < \infty$, then $\lf \cap \clpf$ is nonempty.
\end{lemma}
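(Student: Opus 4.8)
The plan is to produce a point of $\lf \cap \clpf$ explicitly, as a limit of Legendre-Bregman projections obtained by minimizing a scalar convex potential in the parameters $\lambda$. Define $G(\lambda) = \ell_{f}(Y_{0}, \lambda \cdot F) - \ip{\lambda \cdot F}{X_{0}}$, which is convex in $\lambda$ since $\ell_{f}(Y_{0}, \;\cdot\;)$ is a Fenchel conjugate (noted after \cref{def:leg-breg-conj-and-proj}) and the second term is linear. Taking $X = X_{0}$, $Y = Y_{0}$, $\Lambda = \lambda \cdot F$ in \cref{eq:div-proj-diff-1} of \cref{lem:div-proj-diff} (legitimate because $D_{f}(X_{0},Y_{0})<\infty$ forces $X_{0} \triangleright Y_{0}$) yields the key identity $D_{f}(X_{0}, \L_{f}(Y_{0}, \lambda \cdot F)) = D_{f}(X_{0}, Y_{0}) + G(\lambda)$. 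Since $D_{f} \ge 0$ and $D_{f}(X_{0},Y_{0})<\infty$, the function $G$ is proper (finite at $\lambda=0$) and bounded below by $-D_{f}(X_{0},Y_{0})$. Furthermore \cref{lem:conjugate-derivative} gives $\pdv*{G(\lambda)}{\lambda_{j}} = \ip{F_{j}}{\L_{f}(Y_{0}, \lambda \cdot F)} - \ip{F_{j}}{X_{0}}$, so any parameter at which $\nabla G$ vanishes produces a projection lying in $\lf$; this is the stationarity condition I want to realize in the limit.

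Next I would select a minimizing sequence $(\lambda^{(t)})$ for $G$ \emph{along which the gradient also tends to zero}. Such a sequence exists for any convex function bounded below — for instance via Ekeland's variational principle, giving points with $G \le \inf G + 1/t$ and $\norm{\nabla G} \le 1/\sqrt{t}$, or by following a subgradient descent path. Here the hypothesis that $\dom{f^{*}}$ is open enters through \cref{lem:coercive}: it makes $D_{f}(X_{0}, \;\cdot\;)$ coercive, so the associated projections $Y^{(t)} = \L_{f}(Y_{0}, \lambda^{(t)} \cdot F) \in \pf$, which satisfy $D_{f}(X_{0}, Y^{(t)}) = D_{f}(X_{0}, Y_{0}) + G(\lambda^{(t)}) \to D_{f}(X_{0}, Y_{0}) + \inf G < \infty$, are confined to a bounded region. (I would first reduce to $\supp(Y_{0}) = \X$, so that each $Y^{(t)} \in \Herm_{\Delta_{\lint}}(\X)$, carrying the fixed boundary part of $Y_{0}$ along in general.) Then \cref{lem:bounded-has-limit} provides a limiting point $\hat{Y} \in \Herm_{\Delta}(\X)$; passing to a subsequence $Y^{(t_{i})} \to \hat{Y}$, and since every $Y^{(t_{i})} \in \pf$ we conclude $\hat{Y} \in \clpf$.

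It then remains to show $\hat{Y} \in \lf$, i.e. $\ip{F_{j}}{\hat{Y}} = \ip{F_{j}}{X_{0}}$ for each $j$, and this is exactly where the careful choice of sequence pays off: the quantity $\ip{F_{j}}{Y^{(t_{i})}} - \ip{F_{j}}{X_{0}}$ is the $j$-th component of $\nabla G(\lambda^{(t_{i})})$, which tends to $0$, while $Y \mapsto \ip{F_{j}}{Y}$ is a \emph{continuous} linear functional on the finite-dimensional space $\Herm(\X)$; passing to the limit gives $\ip{F_{j}}{\hat{Y}} = \ip{F_{j}}{X_{0}}$, so $\hat{Y} \in \lf \cap \clpf$. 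The main obstacle is precisely the phenomenon the paper emphasizes: the infimum of $G$ need not be attained at any finite $\lambda$, so $(\lambda^{(t)})$ may diverge and $\hat{Y}$ may acquire eigenvalues in $\Delta \setminus \Delta_{\lint}$, hence fail to lie in $\pf$ itself. Because $D_{f}(X_{0}, \;\cdot\;)$ is generally discontinuous at such boundary points, one cannot simply assert that $\hat{Y}$ minimizes $D_{f}(X_{0}, \;\cdot\;)$ over $\clpf$ and extract stationarity from the derivative formula \cref{lem:div-leg-dif} (indeed, near-minimizers of a convex function can have large gradients, so boundedness of $D_{f}(X_{0}, Y^{(t)})$ alone does not force the first-order condition). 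The resolution is to transfer the first-order information onto the linear functionals $\ip{F_{j}}{\;\cdot\;}$, which remain continuous irrespective of boundary behaviour, and to lean on the openness of $\dom{f^{*}}$ to prevent the bounded $Y^{(t)}$ from escaping even as $\lambda^{(t)} \to \infty$; arranging $\nabla G \to 0$ (rather than taking an arbitrary minimizing sequence) is the technical crux that lands the limit in $\lf$.
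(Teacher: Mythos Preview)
Your argument is correct and takes a genuinely different route from the paper's.  The paper works in $Y$-space: it minimizes $D_{f}(X_{0},\;\cdot\;)$ directly over the compact set $\cl{\pf \cap \mathscr{R}}$ to obtain a minimizer $Y^{*}\in\clpf$, and then uses the continuity of the extended Legendre--Bregman projection (\cref{lem:proj-conti}, the heavy matrix-perturbation lemma) to show that $\clpf$ is invariant under $\L_{f}(\;\cdot\;,\lambda\cdot F)$, so that optimality at $Y^{*}$ together with the derivative formula \cref{lem:div-leg-dif} forces $\ip{F_{j}}{Y^{*}-X_{0}}=0$.  You instead work in $\lambda$-space: you minimize the scalar convex potential $G(\lambda)$, invoke Ekeland's principle to manufacture a minimizing sequence along which $\nabla G\to 0$, push the sequence through $\L_{f}(Y_{0},\;\cdot\;)$ to land in $\pf$, and then transport the vanishing first-order condition to the limit via the \emph{continuous} linear functionals $\ip{F_{j}}{\;\cdot\;}$.

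The trade-off is clean.  Your approach entirely bypasses \cref{lem:proj-conti}, which is one of the paper's main technical contributions, and replaces the derivative formula \cref{lem:div-leg-dif} by the simpler \cref{lem:conjugate-derivative}; the cost is importing Ekeland's variational principle (or an equivalent device).  The paper's approach is self-contained but must pay for the continuity of $\L_{f}$ across boundary strata.  One small point to keep tidy: Ekeland produces a point in the effective domain of $G$, and you need $G$ to be differentiable there---this is guaranteed because $\dom{f^{*}}$ open makes the admissible set of $\lambda$ open, on which \cref{lem:conjugate-derivative} supplies the gradient.
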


\begin{proof}
  By \cref{lem:coercive}, $D_{f}(X_{0}, \;\cdot\;)$ is coercive, meaning that
  \begin{equation*}
    \mathscr{R} = \bigl\{ Y \mid D_{f}(X_{0}, Y) \le D_{f}(X_{0}, Y_{0}) \bigr\}
  \end{equation*}
  is bounded.
  Hence, the minimization
  \begin{equation*}
    \argmin\limits_{Y \in \clpf} D_{f}(X_{0}, Y) =
    \argmin\limits_{Y \in \cl{\pf \cap \mathscr{R}}} D_{f}(X_{0}, Y),
  \end{equation*}
  is obtained at a point $Y^{*}$ (not necessarily unique) in
  $\cl{\pf \cap \mathscr{R}} \subseteq \clpf$.
  We will prove that $Y^{*} \in \lf$.

  Let $\overline{Y} \in \clpf$ be such that
  $\overline{Y} = \lim_{j \to \infty} \L_{f}(Y_{0}, \mu_{j} \cdot F)$.
  Then by the continuity of the $\L_{f}$ proved in \cref{lem:proj-conti},
  \begin{equation*}
    \begin{split}
      \L_{f}(\overline{Y}, \lambda \cdot F)
      & = \lim_{j \to \infty} \L_{f}(\L_{f}(Y_{0}, \mu_{j} \cdot F),
      \lambda \cdot F)\\
      & = \lim_{j \to \infty} \L_{f}(Y_{0},
      (\mu_{j} + \lambda) \cdot F) \in \clpf.
    \end{split}
  \end{equation*}
  Thus, for the limiting point $Y^{*} \in \clpf$,
  $\L_{f}(Y^{*}, \lambda \cdot F)$ is in $\clpf$ for all admissible $\lambda$.
  By the optimality of $Y^{*}$, $D_{f}(X_{0}, \L_{f}(Y^{*}, \lambda \cdot F))$
  achieves a minimum at $\lambda = 0$.
  By \cref{lem:div-leg-dif}, we conclude that
  $\ip{F_{j}}{Y^{*}} = \ip{F_{j}}{X_{0}}$ and $Y^{*} \in \lf$.

\end{proof}

\begin{lemma}\label{lem:pythagorean}
  The Pythagorean identity
  \begin{equation}
    \label{eq:pythagorean}
    D_{f}(X, Y) = D_{f}(X, Y^{*}) + D_{f}(Y^{*}, Y)
  \end{equation}
  holds for all $X \in \lf$, $Y \in \clpf$ and $Y^{*} \in \lf \cap \clpf$.
\end{lemma}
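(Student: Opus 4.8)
The plan is to prove the three-point identity
\[
  D_{f}(X, Y) - D_{f}(X, Y^{*}) - D_{f}(Y^{*}, Y)
  = \ip{f'(Y^{*}) - f'(Y)}{X - Y^{*}},
\]
so that \eqref{eq:pythagorean} becomes equivalent to the vanishing of the right-hand cross term. For operators in the interior this is a direct expansion: writing out the three divergences, the scalar terms $\tr f(Y)$ and $\tr f(Y^{*})$ cancel and the first-order terms collapse to $\tr\bigl((f'(Y^{*})-f'(Y))(X-Y^{*})\bigr)$. The entire proof then hinges on showing that $f'(Y^{*})-f'(Y)$ only ever points along the $F$-directions, since $X,Y^{*}\in\lf$ force $\ip{F_{j}}{X-Y^{*}}=0$ for every $j$.

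First I would dispatch the interior case $Y,Y^{*}\in\pf$. Here the explicit form of $\L_{f}$ gives $f'(Y)=f'(Y_{0})+\alpha\cdot F$ and $f'(Y^{*})=f'(Y_{0})+\beta\cdot F$, so $f'(Y^{*})-f'(Y)=(\beta-\alpha)\cdot F$ lies in the span of the $F_{j}$; the cross term is $\sum_{j}(\beta-\alpha)_{j}\ip{F_{j}}{X-Y^{*}}=0$, which is \eqref{eq:pythagorean}. This is the same conclusion one reaches by writing $Y^{*}=\L_{f}(Y,(\beta-\alpha)\cdot F)$ via \cref{cor:additive} and invoking \eqref{eq:div-proj-diff-2} of \cref{lem:div-proj-diff}, whose cross term is exactly $\ip{(\beta-\alpha)\cdot F}{X-Y^{*}}$.

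The real work is the boundary case, where $Y$ or $Y^{*}$ lies in $\clpf\setminus\pf$ and $f'$ is undefined on the whole operator. If either $D_{f}(X,Y)$ or $D_{f}(Y^{*},Y)$ is $+\infty$ the identity holds trivially, so I assume all terms finite; then $X\triangleright Y$ and $Y^{*}\triangleright Y$, whence $\supp(X),\supp(Y^{*})\subseteq\supp(Y)$, and the extended definition of $D_{f}$ lets me pass to the supports, where the operators become interior and the expansion above applies with $f'(Y)$ replaced by $f'(Y_{\lint})$ and $X,Y^{*}$ by their compressions. The cross term then reduces to inner products of the form $\ip{f'(Z_{\lint})-P f'(Y_{0})P}{P(X-Y^{*})P}$, with $P$ the projector onto $\supp(Z)$ and $Z\in\{Y,Y^{*}\}$; since $P(X-Y^{*})P$ is still orthogonal to each $P F_{j}P$, it suffices to show that for a boundary point $Z\in\clpf$ one has $f'(Z_{\lint})-P f'(Y_{0})P\in\operatorname{span}\{P F_{j}P\}$.

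This membership is where the difficulty concentrates, and I expect it to be the main obstacle. Taking $Z_{j}=\L_{f}(Y_{0},\gamma_{j}\cdot F)\to Z$, every compression $P f'(Z_{j})P=P f'(Y_{0})P+\sum_{l}(\gamma_{j})_{l}\,P F_{l}P$ already lies in the closed affine space $P f'(Y_{0})P+\operatorname{span}\{P F_{l}P\}$, so the claim reduces to the single limit $\lim_{j}P f'(Z_{j})P=f'(Z_{\lint})$. This is precisely the point at which the discontinuity of $D_{f}$ (the phenomenon quoted from~\cite{BB97}) must be tamed: the boundary eigenvalues of $f'(Z_{j})$ diverge while their eigenvectors leak into $\X_{\lint(Z)}$ only by a vanishing amount, and one must show the divergent contribution cancels. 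I would run the same matrix-perturbation argument as in the proof of \cref{lem:proj-conti}: block-decompose $f'(Z_{j})$ against $\X_{\lint(Z)}\oplus\X_{\lbd(Z)}$, use \cref{pro:eigenvalue-perturb} to lock the small eigenvalues onto the interior block, and use \cref{pro:eigenvector-perturb} to bound the off-diagonal rotation by $\norm{B}/\eta$ with $\eta\to\infty$, so that the term $S_{1}\Gamma_{1}S_{1}^{\dag}$ vanishes in the limit despite $\Gamma_{1}\to\infty$. With the compression limit in hand the cross term vanishes, completing \eqref{eq:pythagorean}.
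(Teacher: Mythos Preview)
Your approach is sound, and for the interior case you already note it coincides with the route through \cref{cor:additive} and \cref{eq:div-proj-diff-2}. The paper's proof is a variant of that same idea but packaged more economically: it applies \cref{eq:div-proj-diff-1} of \cref{lem:div-proj-diff} \emph{twice}, once with first argument $Y^{*}$ and once with first argument $X$, and subtracts. The $\ell_{f}$ terms cancel identically, leaving the four-point identity
\[
D_{f}(X_{1},Y_{1})-D_{f}(X_{2},Y_{1})-D_{f}(X_{1},Y_{2})+D_{f}(X_{2},Y_{2})
=\ip{\lambda\cdot F}{X_{1}-X_{2}},
\]
which vanishes because $X_{1}=Y^{*}$ and $X_{2}=X$ both lie in $\lf$; specializing $Y_{1}=Y^{*}$, $Y_{2}=Y$ then gives \cref{eq:pythagorean} directly. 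The gain over your direct expansion is that one never writes a bare $f'(Y)$ or $f'(Y^{*})$: all the boundary bookkeeping (admissibility, restriction to $\supp(Y)$) is inherited from the already-completed proof of \cref{lem:div-proj-diff} rather than being redone.

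Your perturbation program for the boundary case is plausible---the product of a divergent $f'$-eigenvalue with the square of the eigenvector leakage does tend to zero, essentially because $\lim_{x\to a}(x-a)f'(x)=0$ at boundary points $a$---but it is heavier machinery than the paper invokes. Both routes, incidentally, rely on writing one of $Y,Y^{*}$ as $\L_{f}$ of the other along an $F$-direction, which is immediate from \cref{cor:additive} when both lie in $\pf$ but is not literally available when one of them sits in $\clpf\setminus\pf$ (since $\L_{f}$ preserves the boundary part). Neither your proposal nor the paper's one-line specialization fully spells out the passage to the closure, though your sketch comes closer to confronting it.
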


\begin{proof}
  Suppose that $X_{1}, X_{2}, Y_{1}, Y_{2} \in \Herm_{\Delta}(\X)$ satisfying
  $\X_{j} \triangleright Y_{j}$ for $j=1, 2$ and
  $Y_{2} = \L_{f}(Y_{1}, \lambda \cdot F)$.
  By \cref{eq:div-proj-diff-1}, we have
  \begin{equation*}
    D_{f}(X_{1}, Y_{1}) - D_{f}(X_{1}, Y_{2}) = \ip{\lambda \cdot F}{X_{1}} -
    \ell_{f}(Y_{1}, \lambda \cdot F)
  \end{equation*}
  and
  \begin{equation*}
    D_{f}(X_{2}, Y_{1}) - D_{f}(X_{2}, Y_{2}) = \ip{\lambda \cdot F}{X_{2}} -
    \ell_{f}(Y_{1}, \lambda \cdot F)
  \end{equation*}
  Taking the difference of the above two equations, we have
  \begin{equation*}
    D_{f}(X_{1}, Y_{1}) - D_{f}(X_{2}, Y_{1}) - D_{f}(X_{1}, Y_{2}) +
    D_{f}(X_{2}, Y_{2}) = 0.
  \end{equation*}
  The lemma follows by choosing $X_{1} = Y_{1} = Y^{*}$.

\end{proof}

\subsection{Duality Theorem}\label{sec:duality}

\begin{theorem}\label{thm:duality}
  Suppose $f$ is a real convex function of Legendre type and $\dom{f^{*}}$ is
  open.
  Let $\Delta$ be the domain of $f$ and $X_{0}, Y_{0} \in \Herm_{\Delta}(\X)$ be
  two Hermitian operators satisfying $D_{f}(X_{0}, Y_{0}) < \infty$.
  Let $\lf$ and $\pf$ be the linear family of $X_{0}$ and the Legendre-Bregman
  projection family of $Y_{0}$ with respect to $F = (F_{j})$.
  Then there is a unique $Y^{*} \in \Herm_{\Delta}(\X)$ such that
  \begin{enumerate}
    \item\label{enu:in} $Y^{*} \in \lf \cap \clpf$,
    \item\label{enu:pythagorean} $D_{f}(X, Y) = D_{f}(X, Y^{*}) + D_{f}(Y^{*}, Y)$
      for any $X \in \lf$ and $Y \in \clpf$,
    \item\label{enu:lfp} $Y^{*} = \argmin\limits_{X \in \lf} D_{f}(X, Y_{0})$,
    \item\label{enu:pfp} $Y^{*} = \argmin\limits_{Y \in \clpf} D_{f}(X_{0}, Y)$.
  \end{enumerate}
  Moreover any one of these four conditions determines $Y^{*}$ uniquely.
\end{theorem}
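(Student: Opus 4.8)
The plan is to assemble the four statements from the machinery already in place, using the Pythagorean identity as the central tool. First I would invoke \cref{lem:nonempty}: since $D_{f}(X_{0}, Y_{0}) < \infty$ and $\dom{f^{*}}$ is open, the intersection $\lf \cap \clpf$ is nonempty, so I fix some $Y^{*} \in \lf \cap \clpf$, which is precisely statement~(\ref{enu:in}). Statement~(\ref{enu:pythagorean}) is then immediate from \cref{lem:pythagorean}, which gives the Pythagorean identity for every $X \in \lf$, every $Y \in \clpf$, and every intersection point. Before going further I would record the two memberships used repeatedly below: $X_{0} \in \lf$ (its own constraints hold with equality), and $Y_{0} = \L_{f}(Y_{0}, 0 \cdot F) \in \pf \subseteq \clpf$ (taking $\lambda = 0$ in the explicit form of $\L_{f}$, which returns $Y_{0}$).

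Next I would derive the two variational characterizations from the identity together with nonnegativity of the divergence. For statement~(\ref{enu:lfp}), take any $X \in \lf$ and apply (\ref{enu:pythagorean}) with $Y = Y_{0} \in \clpf$ to get $D_{f}(X, Y_{0}) = D_{f}(X, Y^{*}) + D_{f}(Y^{*}, Y_{0})$; since $D_{f}(X, Y^{*}) \ge 0$, this yields $D_{f}(X, Y_{0}) \ge D_{f}(Y^{*}, Y_{0})$, so $Y^{*}$ is a minimizer. For statement~(\ref{enu:pfp}), take any $Y \in \clpf$ and apply (\ref{enu:pythagorean}) with $X = X_{0} \in \lf$ to get $D_{f}(X_{0}, Y) = D_{f}(X_{0}, Y^{*}) + D_{f}(Y^{*}, Y) \ge D_{f}(X_{0}, Y^{*})$, where $D_{f}(X_{0}, Y^{*}) \le D_{f}(X_{0}, Y_{0}) < \infty$ by the same identity with $Y = Y_{0}$, so the minimum is finite. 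In both cases equality requires the cross term to vanish.

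The crux, and where I expect the one genuinely delicate point to lie, is the fact that the extended divergence satisfies $D_{f}(A, B) = 0$ if and only if $A = B$. Finiteness forces $A \triangleright B$, so $D_{f}(A, B)$ reduces to the interior divergence $D_{f}(A_{\lint(B)}, B_{\lint})$; since $\phi = \tr \circ f$ is strictly convex on $\Herm_{\Delta_{\lint}}(\X)$ by \cref{lem:trf-leg}, this interior divergence vanishes iff the interior parts coincide, and admissibility already forces the boundary parts to coincide, giving $A = B$. Granting this, uniqueness of the intersection point follows by taking two points $Y_{1}^{*}, Y_{2}^{*} \in \lf \cap \clpf$ and applying (\ref{enu:pythagorean}) with $X = Y = Y_{2}^{*}$ and center $Y_{1}^{*}$, which gives $0 = D_{f}(Y_{2}^{*}, Y_{2}^{*}) = D_{f}(Y_{2}^{*}, Y_{1}^{*}) + D_{f}(Y_{1}^{*}, Y_{2}^{*})$; nonnegativity forces both terms to zero, hence $Y_{1}^{*} = Y_{2}^{*}$.

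Finally I would argue that each of the four conditions pins down $Y^{*}$ on its own. Condition~(\ref{enu:in}) does so by the uniqueness of the intersection just shown. For~(\ref{enu:pythagorean}), if some $Y'$ satisfies the decomposition for all admissible $X, Y$, I specialize to $X = Y = Y^{*}$ (legitimate since $Y^{*} \in \lf \cap \clpf$) to obtain $0 = D_{f}(Y^{*}, Y') + D_{f}(Y', Y^{*})$, hence $Y' = Y^{*}$. For~(\ref{enu:lfp}) and~(\ref{enu:pfp}), the strictness of the cross-term inequality shows the respective minimizers are unique, so any operator attaining the argmin equals $Y^{*}$. Beyond this bookkeeping, the only real work is verifying the ``zero iff equal'' property of the extended divergence and checking admissibility so that every invocation of the Pythagorean identity involves finite quantities.
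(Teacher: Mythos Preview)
Your proposal is correct and follows essentially the same route as the paper's proof: invoke \cref{lem:nonempty} for existence, \cref{lem:pythagorean} for the identity, then read off the two $\argmin$ statements and all uniqueness claims from nonnegativity plus ``$D_{f}=0$ iff equal.'' If anything you are more careful than the paper in explicitly recording that $X_{0}\in\lf$, $Y_{0}\in\clpf$, and in justifying the ``zero iff equal'' property of the extended divergence, all of which the paper silently assumes.
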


\begin{proof}
  Choose any point $Y^{*}$ in $\lf \cap \clpf$, whose existence is guaranteed by
  \cref{lem:nonempty}.
  It satisfies \cref{enu:in} by definition, \cref{enu:pythagorean} by
  \cref{lem:pythagorean}.
  As a consequence of \cref{enu:pythagorean}, it also satisfies
  \cref{enu:lfp,enu:pfp}.
  More specifically, \cref{enu:lfp} holds because for all $X \in \lf$, it follows
  from \cref{enu:pythagorean} that
  \begin{equation*}
    D_{f}(X, Y_{0}) = D_{f}(X, Y^{*}) + D_{f}(Y^{*}, Y_{0}) \ge D_{f}(Y^{*}, Y_{0}),
  \end{equation*}
  and equality holds if and only if $D_{f}(X, Y^{*}) = 0$, that is $X = Y^{*}$.
  Similarly, for all $Y \in \clpf$, \cref{enu:pythagorean} implies that
  \begin{equation*}
    D_{f}(X_{0}, Y) = D_{f}(X_{0}, Y^{*}) + D_{f}(Y^{*}, Y) \ge D_{f}(X_{0}, Y^{*}),
  \end{equation*}
  and equality holds if and only if $D_{f}(Y^{*}, Y) = 0$, or equivalently
  $Y = Y^{*}$. This proves \cref{enu:pfp}.

  It remains to show that each of the four
  \cref{enu:in,enu:pythagorean,enu:lfp,enu:pfp} determines $Y^{*}$ uniquely.
  In other words, if $\hat{Y}$ is an operator in $\Herm_{\Delta}(\X)$
  satisfying any of the four Items, then $\hat{Y} = Y^{*}$.
  Suppose first $\hat{Y} \in \lf \cap \clpf$.
  It follows from \cref{enu:pythagorean} that
  \begin{equation*}
    D_{f}(\hat{Y}, Y^{*}) + D_{f}(Y^{*}, \hat{Y}) =
    D_{f}(\hat{Y}, \hat{Y}) = 0,
  \end{equation*}
  which guarantees that $\hat{Y} = Y^{*}$.
  If $\hat{Y}$ satisfies \cref{enu:pythagorean}, the same argument with the role
  of $\hat{Y}$ and $Y^{*}$ reversed proves that $\hat{Y} = Y^{*}$.
  If $\hat{Y}$ is a minimizer for \cref{enu:lfp}, we have by
  \cref{enu:pythagorean} that
  \begin{equation*}
    D_{f}(X_{0}, Y^{*}) \ge D_{f}(X_{0}, \hat{Y}) =
    D_{f}(X_{0}, Y^{*}) + D_{f}(Y^{*}, \hat{Y}),
  \end{equation*}
  which implies $D_{f}(Y^{*}, \hat{Y}) \le 0$ and $\hat{Y} = Y^{*}$.
  Similarly, if $\hat{Y}$ is a minimizer for \cref{enu:pfp}, we have by
  \cref{enu:pythagorean} that
  \begin{equation*}
    D_{f}(Y^{*}, Y_{0}) \ge D_{f}(\hat{Y}, Y_{0}) =
    D_{f}(\hat{Y}, Y^{*}) + D_{f}(Y^{*}, Y_{0}).
  \end{equation*}
  This implies that $D_{f}(\hat{Y}, Y^{*}) \le 0$ and $\hat{Y} = Y^{*}$.
  \qedhere
\end{proof}

\section{Iterative Algorithms}\label{sec:algorithms}

In this section, we present several iterative algorithms that are based on
Bregman's projection method.
We start with an exact projection algorithm given in \cref{alg:bregman}.
The fact this is indeed a Bregman projection algorithm will become obvious when
we analyze the algorithm.

\subsection{Exact Bregman Projection Algorithm}

\begin{metaalgorithm}[hbt!]
  \centering
  \begin{algspec}
    \textbf{Require:} $f$ is an extended real function with domain
    $\Delta \subseteq \real$ and $\dom{f^{*}}$ is open.
    $X_{0}, Y_{0} \in \Herm_{\Delta}(\X)$ such that
    $D_{f}(X_{0}, Y_{0}) < \infty$.\\
    \textbf{Input:} $F=(F_{1}, F_{2}, \ldots, F_{k}) \in {\Herm(\X)}^{k}$.\\
    \textbf{Output:} $\lambda^{(1)}, \lambda^{(2)}, \cdots$ such that
    \begin{equation*}
      \lim_{t\to\infty} D_{f} \bigl( X_{0}, \L_{f}(Y_{0}, \lambda^{(t)} \cdot F)
      \bigr) = \inf_{\lambda \in \real^{k}} D_{f} \bigl( X_{0}, \L_{f}(Y_{0},
      \lambda \cdot F) \bigr).
    \end{equation*}
    \begin{algorithmic}[1]
      \State{Initialize $\lambda^{(1)} = (0, 0, \ldots, 0)$.} %
      \For{$t = 1, 2, \ldots, $} %
      \State{Compute $Y^{(t)} = \L_{f}(Y_{0}, \lambda^{(t)} \cdot F)$.} %
      \State{Compute $j_{t} = \argmax_{j} \abs{\ip{F_{j}}{Y^{(t)} - X_{0}}}$.} %
      \State{Compute the unique solution for $\delta^{(t)}_{j} \in \real$ of
        equation\label{al:bregman-eq}
        \begin{equation*}
          \ip{F_{j_{t}}}{\L_{f}(Y^{(t)},  \delta^{(t)} F_{j_{t}}) - X_{0}} = 0.
        \end{equation*}} %
      \State{Update parameter
        $\lambda^{(t+1)}_{j} =
        \begin{cases}
          \lambda^{(t)}_{j} + \delta^{(t)}_{j} & \text{if } j = j_{t},\\
          \lambda^{(t)}_{j} & \text{otherwise.}
        \end{cases}
        $} %
      \EndFor{}
    \end{algorithmic}
  \end{algspec}
  \caption{Bregman's Exact Iterative Projection Algorithm.}%%
  \label{alg:bregman}
\end{metaalgorithm}

\begin{theorem}\label{thm:bregman-algorithm}
  \Cref{alg:bregman} outputs a sequence $\lambda^{(1)}$, $\lambda^{(2)}$,
  $\ldots$ such that
  \begin{equation*}
    \lim_{t\to\infty} D_{f} \bigl( X_{0}, \L_{f}(Y_{0}, \lambda^{(t)} \cdot F)
    \bigr) = \inf_{\lambda \in \real^{k}} D_{f} \bigl( X_{0}, \L_{f}(Y_{0},
    \lambda \cdot F) \bigr).
  \end{equation*}
\end{theorem}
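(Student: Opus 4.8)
The plan is to read \cref{alg:bregman} as exact, greedy (Gauss--Southwell) coordinate descent on the convex objective $\Phi(\lambda) \defeq D_{f}\bigl(X_{0}, \L_{f}(Y_{0}, \lambda\cdot F)\bigr)$ and to combine its monotone decrease with the duality theorem \cref{thm:duality}. First I would record the structural facts. By \cref{cor:additive} each iterate satisfies $Y^{(t+1)} = \L_{f}(Y^{(t)}, \delta^{(t)} F_{j_{t}})$, so every $Y^{(t)}$ lies in $\pf$; by \cref{eq:div-proj-diff-1} together with the convexity of $\ell_{f}$ in its second argument, $\Phi$ is convex in $\lambda$; and by \cref{lem:conjugate-derivative} its partial derivatives are $\partial_{\lambda_{j}}\Phi(\lambda^{(t)}) = \ip{F_{j}}{Y^{(t)} - X_{0}}$. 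Thus $j_{t}$ is exactly the coordinate of largest-magnitude gradient, and $\delta^{(t)}$, being the root of $\ip{F_{j_{t}}}{\L_{f}(Y^{(t)}, \delta F_{j_{t}}) - X_{0}} = 0$, is the exact one-dimensional minimizer of $\Phi$ along that coordinate.

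Next I would prove monotone decrease. Applying \cref{eq:div-proj-diff-2} with $Y = Y^{(t)}$, $\Lambda = \delta^{(t)}F_{j_{t}}$, and $X = X_{0}$, the defining equation of $\delta^{(t)}$ forces the inner-product term to vanish, leaving $\Phi^{(t)} - \Phi^{(t+1)} = D_{f}(Y^{(t+1)}, Y^{(t)}) \ge 0$, where $\Phi^{(t)} \defeq \Phi(\lambda^{(t)})$. Hence $\Phi^{(t)}$ is non-increasing and bounded below by $0$, so it converges to some limit $L$, and the per-step decreases $D_{f}(Y^{(t+1)}, Y^{(t)})$ tend to $0$. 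Since $\Phi^{(t)} \le \Phi^{(1)} = D_{f}(X_{0}, Y_{0}) < \infty$, \cref{lem:coercive} and \cref{lem:bounded-has-limit} show that $(Y^{(t)})$ (restricted to $\supp(Y_{0})$) is bounded and has a limit point $\hat{Y} \in \clpf$.

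I would then show that every limit point is feasible, i.e.\ $\hat{Y} \in \lf$. Passing to a subsequence $Y^{(t_{i})} \to \hat{Y}$ and a further one on which the steepest index is a fixed value $j^{*}$, optimality of $\delta^{(t_{i})}$ and \cref{eq:div-proj-diff-2} give, for every step $s$, the lower bound $\Phi^{(t_{i})} - \Phi^{(t_{i}+1)} \ge s\,\ip{F_{j^{*}}}{X_{0} - \L_{f}(Y^{(t_{i})}, s F_{j^{*}})}$, where I have discarded the nonnegative divergence term. Letting $i \to \infty$ and using continuity of $\L_{f}$ (\cref{lem:proj-conti}) yields $0 \ge s\,\ip{F_{j^{*}}}{X_{0} - \L_{f}(\hat{Y}, s F_{j^{*}})}$ for all $s$; dividing by $s$ of either sign and letting $s \to 0$ gives $\ip{F_{j^{*}}}{\hat{Y} - X_{0}} = 0$. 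As $j^{*}$ is the largest-violation coordinate, the maximal violation at $Y^{(t_{i})}$ tends to $0$, so \emph{all} constraints are satisfied at $\hat{Y}$ and $\hat{Y} \in \lf \cap \clpf$. By the uniqueness in \cref{thm:duality}, $\hat{Y} = Y^{*}$; since every limit point of the bounded sequence equals $Y^{*}$, in fact $Y^{(t)} \to Y^{*}$ and $\ip{F_{j}}{Y^{(t)} - X_{0}} \to 0$ for every $j$.

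The remaining step, which I expect to be the main obstacle, is the convergence of the \emph{value}. By the Pythagorean identity (\cref{lem:pythagorean}) with $X_{0} \in \lf$ and $Y^{(t)}, Y_{\lambda} \in \clpf$, one has $\Phi^{(t)} = D_{f}(X_{0}, Y^{*}) + D_{f}(Y^{*}, Y^{(t)})$ and $\inf_{\lambda}\Phi(\lambda) = D_{f}(X_{0}, Y^{*}) + \inf_{Y \in \pf} D_{f}(Y^{*}, Y)$, so the claim reduces to $\lim_{t} D_{f}(Y^{*}, Y^{(t)}) = \inf_{Y \in \pf} D_{f}(Y^{*}, Y)$ (the inequality ``$\ge$'' being automatic). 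When $Y^{*}$ is interior --- full support with spectrum in $\Delta_{\lint}$ --- continuity of $D_{f}(Y^{*}, \,\cdot\,)$ at $Y^{*}$ and $Y^{(t)} \to Y^{*}$ make both sides equal to $0$. The hard case is $Y^{*}$ on the boundary: then $\lambda^{(t)}$ may diverge and $D_{f}(Y^{*}, \,\cdot\,)$ is discontinuous at $Y^{*}$ (precisely the phenomenon of \cite[Example 7.29]{BB97} flagged in the introduction), so the limit cannot be taken naively. I would handle it exactly as in \cref{lem:proj-conti}: split the spectrum of $f'(Y^{(t)})$ into a block staying finite and a block diverging toward $\partial\dom{f^{*}}$, use the eigenvalue and eigenprojection perturbation bounds \cref{pro:eigenvalue-perturb,pro:eigenvector-perturb} to show the coupling between the blocks is asymptotically negligible, and thereby control $D_{f}(Y^{*}, Y^{(t)})$ closely enough to rule out any gap between $L$ and $\inf_{\lambda}\Phi$. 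This boundary bookkeeping, where the coercivity of \cref{lem:coercive} and matrix perturbation theory must be combined, is where the non-commutative difficulty of the problem is concentrated.
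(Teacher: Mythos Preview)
Your plan coincides with the paper's proof up to and including the identification $\hat{Y}=Y^{*}$. The paper packages the monotone-decrease step through an auxiliary function $\aux(Y,\delta)=\ip{\delta F_{j^{*}}}{X_{0}}-\ell_{f}(Y,\delta F_{j^{*}})$ and \cref{eq:div-proj-diff-1} (where you use \cref{eq:div-proj-diff-2}; these are the same identity), extracts a limit point via \cref{lem:bounded-has-limit}, shows feasibility by proving $\max_{\delta}\aux(\hat{Y},\delta)=0$ from continuity of $\aux$ and then deriving a contradiction (where you instead pass the optimality inequality to the limit directly using \cref{lem:proj-conti}), and finally invokes \cref{thm:duality}. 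These are cosmetic differences; the skeleton is identical.

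The divergence is in your last paragraph. The paper's proof simply \emph{ends} once $\hat{Y}=Y^{*}$ is established: it does not carry out any separate ``value convergence'' argument, and in particular nothing resembling the block-matrix perturbation analysis you anticipate for a boundary $Y^{*}$. Your concern that $D_{f}(Y^{*},\,\cdot\,)$ may be discontinuous at such a point is legitimate in principle, but the paper does not engage with it, so the machinery you outline is work beyond what the paper presents. If your aim is to reconstruct the paper's argument, stop where it does; if your aim is a fully watertight proof, your caution is reasonable, though you may find a shorter route than replaying \cref{lem:proj-conti} in full.
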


\begin{proof}
  We consider the $t$-th iteration of the algorithm.
  The matrix $Y^{(t)}$ is our current estimate of the solution and the
  algorithms computes an update in \cref{al:bregman-eq} which can be understood
  as a Bregman's projection as follows.
  Let $j_{t}$ be the index used in the algorithm and define the linear family
  \begin{equation*}
    \lf_{j_{t}} = \bigl\{ X \in \Herm_{\Delta}(\X) \mid \ip{F_{j_{t}}}{X} =
    \ip{F_{j_{t}}}{X_{0}} \bigr\},
  \end{equation*}
  and Legendre-Bregman projection family
  \begin{equation*}
    \pf_{j_{t}} = \bigl\{ Y \in \Herm_{\Delta}(\X) \mid Y =
    \L_{f}(Y^{(t)}, \delta^{(t)} \, F_{j_{t}})\}.
  \end{equation*}

  The projection of $Y^{(t)}$ onto $\lf_{j_{t}}$ is in the intersection of $\lf_{j_{t}}$
  and $\pf_{j_{t}}$, and therefore it can be written as
  $\L_{f}(Y^{(t)}, \delta \, F_{j_{t}})$ for some real parameter $\delta$.
  By the fact that is also in the linear family $\lf_{j_{t}}$, we have that
  $\delta = \delta^{(t)}$ uniquely determined by the equation in
  \cref{al:bregman-eq} of \cref{alg:bregman}.

  Define an auxiliary function
  \begin{equation*}
    \aux(Y, \delta) = \ip{\delta F_{j^{*}}}{X_{0}} -
    \ell_{f} \bigl( Y, \delta F_{j^{*}} \bigr)
  \end{equation*}
  for $j^{*} = \argmax_{j} \abs{\ip{F_{j}}{Y - X_{0}}}$.
  As $\ell_{f} \bigl(Y, \delta F_{j^{*}} \bigr)$ is convex in $\delta$,
  $\aux(Y, \delta)$ is concave in $\delta$.
  In each step the algorithm makes progress measured by the quantity
  \begin{align}
    & D_{f}\bigl( X_{0}, Y^{(t)} \bigr) - D_{f} \bigl( X_{0}, Y^{(t+1)} \bigr)
    \notag\\
    =\; & D_{f} \bigl( X_{0}, Y^{(t)} \bigr) -
          D_{f} \bigl( X_{0}, \L_{f} \bigl( Y^{(t)},
          \delta^{(t)} F_{j_{t}} \bigr) \bigr)\notag\\
    =\; & \ip{\delta^{(t)} F_{j_{t}}}{X_{0}} -
          \ell_{f} \bigl( Y^{(t)}, \delta^{(t)} F_{j_{t}} \bigr)
          \tag{By \cref{eq:div-proj-diff-1}}\\
    =\; & \aux \bigl( Y^{(t)}, \delta^{(t)} \bigr)\notag.
  \end{align}
  It is easy to verify that $\aux(Y, 0) = 0$.
  From \cref{lem:conjugate-derivative} and the concavity of $\aux$ in $\delta$,
  the choice of $\delta^{(t)}$ in the algorithm maximizes
  $\aux\bigl( Y^{(t)}, \delta \bigr)$ over $\delta$ and we have
  \begin{equation*}
    \aux \bigl( Y^{(t)}, \delta^{(t)} \bigr)
    \ge \aux \bigl( Y^{(t)}, 0 \bigr) = 0.
  \end{equation*}
  Adding the above displayed equation together for $t=1, 2, \ldots, T-1$ gives
  \begin{equation*}
    D_{f} \bigl( X_{0}, Y^{(1)} \bigr) - D_{f} \bigl(X_{0}, Y^{(T)} \bigr)
    = \sum_{t=1}^{T-1} \aux \bigl( Y^{(t)}, \delta^{(t)} \bigr).
  \end{equation*}
  Since $D_{f}\bigl( X_{0}, Y^{(1)}\bigr) = D_{f}(X_{0}, Y_{0}) < \infty$ and
  $\aux \bigl(Y^{(t)}, \delta^{(t)} \bigr) \ge 0$ for all $t$, it follows that
  $D_{f} \bigl(X_{0}, Y^{(t)} \bigr)$ is bounded and, by
  \cref{lem:bounded-has-limit}, the sequence
  ${\bigl( Y^{(t)} \bigr)}_{t=1}^{\infty}$ has a subsequence
  $\bigl( Y^{(t_{i})} \bigr)$ converging to $\hat{Y} \in \Herm_{\Delta}(\X)$.

  We claim that for the limiting point $\hat{Y}$,
  $\max_{\delta}\aux(\hat{Y}, \delta)=0$.
  Otherwise, assume there exist a $\delta$ and an $\varepsilon > 0$ such that
  $\aux(\hat{Y}, \delta) = \varepsilon > 0$.
  %TODO: Where does the continuity come from?
  By the continuity of $\aux$ with respect to $Y$, it follows that there is an
  integer $m > 0$ such that for all $i \ge m$,
  $\aux(Y^{(t_{i})}, \delta^{(t_{i})}) \ge \aux(Y^{(t_{i})}, \delta) \ge \varepsilon /2$.
  This is a contradiction with the fact that
  ${(\aux(Y^{(t)}, \delta^{(t)}))}_{t}$ converges to $0$.

  Finally, we show that the condition $\max_{\delta} \aux(\hat{Y}, \delta) = 0$
  implies that $\hat{Y}$ is in $\lf$.
  Assume on the other hand that $\hat{Y}$ is not in $\lf$ and, hence, for
  \begin{equation*}
   j^{*} = \argmax_{j} \abs{\ip{F_{j}}{\hat{Y} - X_{0}}},
  \end{equation*}
  we must have
  \begin{equation}
    \label{eq:bregman-algorithm-0}
    \ip{F_{j^{*}}}{\hat{Y} - X_{0}} \ne 0.
  \end{equation}
  As
  \begin{equation*}
    \odv*{\aux(Y,\delta)}{\delta}
    = \ip{F_{j^{*}}}{X_{0}} - \ip{F_{j^{*}}}{\L_{f}(\hat{Y}, \delta F_{j^{*}})},
  \end{equation*}
  the maximizer $\delta^{*}$ of $\max_{\delta} \aux(\hat{Y}, \delta)$ satisfies
  \begin{equation}
    \label{eq:bregman-algorithm-1}
    \ip{F_{j^{*}}}{X_{0} - \L_{f}(Y, \delta^{*} F_{j^{*}})} = 0.
  \end{equation}
  Therefore,
  \begin{equation*}
    \begin{split}
      \max_{\delta} \aux(\hat{Y}, \delta)
      & = \aux(\hat{Y}, \delta^{*})\\
      & = \ip{\delta^{*} F_{j^{*}}}{X_{0}} -
      \ell_{f}(\hat{Y}, \delta^{*} F_{j^{*}})\\
      & = \ip{\delta^{*} F_{j^{*}}}{X_{0} - \L_{f}(\hat{Y}, \delta^{*} F_{j^{*}})}
      + D_{f}(\hat{Y}, \L_{f}(\hat{Y}, \delta^{*} F_{j^{*}}))\\
      & = D_{f} \bigl( \hat{Y}, \L_{f}(\hat{Y}, \delta^{*} F_{j^{*}}) \bigr),
    \end{split}
  \end{equation*}
  where the third line uses the definition of $\ell_{f}$ and the fourth uses
  \cref{eq:bregman-algorithm-1}.
  Now, $\max_{\delta} \aux(\hat{Y}, \delta) = 0$ implies
  $D_{f} \bigl( \hat{Y}, \L_{f}(\hat{F}, \delta^{*} F_{j^{*}}) \bigr) = 0$ and
  $\hat{Y} = \L_{f}(\hat{F}, \delta^{*} F_{j^{*}})$.
  But then \cref{eq:bregman-algorithm-1} becomes
  \begin{equation*}
    \ip{F_{j^{*}}}{X_{0} - \hat{Y}} = 0,
  \end{equation*}
  contradicting \cref{eq:bregman-algorithm-0}.
  Therefore $\hat{Y} \in \lf$ and by definition $\hat{Y}$ is also in $\clpf$.
  By the duality theorem in \cref{thm:duality}, $\hat{Y}$ is the unique
  projection of $X_{0}$ to the projection family $\pf$.
  %TODO: Check again what is needed at all, maybe draw a diagram of all lemmas used.
\end{proof}

% choice of algorithm?

\subsection{Approximate Bregman Projection Algorithms}%
\label{sec:approximate-projection-algorithms}

Most of the time, the exact Bregman projection equation (\cref{al:bregman-eq} in
\cref{alg:bregman}) could be hard to solve.
In the literature, a class of approximate projection algorithms are known whose
corresponding equations are usually much easier to solve and sometimes have
simple explicit formulas.
We follow the auxiliary function approach~\cite{D-PD-PL02,CSS02} to derive our
approximate projection algorithms.
Like in~\cite{CSS02}, both a parallel update algorithm (\cref{alg:parallel}) and
a sequential update algorithm (\cref{alg:sequential}) are considered.
Thanks to the generality of the auxiliary function method, we are able to prove
the convergence for them in a uniform manner.

The equation in \cref{al:parallel-eq} of \cref{alg:parallel} may seem
complicated at first glance, but as we will see later in special examples, it is
a simpler equation to solve.
In the parallel projection algorithm, the parameters corresponding to $F_{j}$
are updated simultaneously in each iteration.
In this case, we require that $\sum_{j=1}^{k} \abs{F_{j}} \le \I$, a technical
condition for the convergence proof to work in the \emph{parallel} update
algorithms.

Unlike the classical case where the convergence of similar approximate
projection algorithms always holds, in the non-commutative case, we require a
\emph{strong convex} condition of $\ell_{f}$ to hold.

\begin{definition}\label{def:strong-convex}
  The Legendre-Bregman conjugate $\ell_{f}(Y, \Lambda)$ is strongly convex if
  for operators $F_{j} \succeq 0$ and $\sum_{j}F_{j} = \I$,
  \begin{equation*}
    \ell_{f}(Y, \delta \cdot F) \le \tr \biggl( \sum_{j=1}^{k}
    \hat{\ell}_{f}(Y, \delta_{j}) F_{j} \biggr),
  \end{equation*}
  where $\hat{\ell}_{f}(Y, \delta_{j})$ is the application of
  $\ell_{f}(\;\cdot\;, \delta_{j})$ to $Y$.
\end{definition}

\begin{lemma}
  The Legendre-Bregman conjugate is strongly convex if and only if
  \begin{equation*}
    \tr f^{*} \bigl( f'(Y) + \delta \cdot F \bigr) \le
    \tr \sum_{j=1}^{k} \Bigl( f^{*} \bigl( f'(Y) +
    \delta_{j} \bigr) F_{j} \Bigr).
  \end{equation*}
\end{lemma}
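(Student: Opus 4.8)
The plan is to reduce the claimed equivalence to a direct substitution using the explicit formula for $\ell_{f}$ established in \cref{lem:conjugate-explicit}, namely $\ell_{f}(Y, \Lambda) = \tr f^{*}(f'(Y) + \Lambda) - \tr f^{*}(f'(Y))$. Applying this identity with $\Lambda = \delta \cdot F = \sum_{j} \delta_{j} F_{j}$ rewrites the left-hand side of the strong convexity condition in \cref{def:strong-convex} as $\tr f^{*}(f'(Y) + \delta \cdot F) - \tr f^{*}(f'(Y))$, which already isolates the first term appearing in the target inequality.

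For the right-hand side I would first unwind the notation $\hat{\ell}_{f}(Y, \delta_{j})$. Since $\delta_{j}$ is a scalar, $\ell_{f}(\;\cdot\;, \delta_{j})$ is the one-variable function $y \mapsto f^{*}(f'(y) + \delta_{j}) - f^{*}(f'(y))$, and lifting it to the operator $Y$ through the spectral calculus yields $\hat{\ell}_{f}(Y, \delta_{j}) = f^{*}(f'(Y) + \delta_{j} \I) - f^{*}(f'(Y))$. This identity holds because $f'(Y) + \delta_{j}\I$ and $f'(Y)$ share the eigenprojections of $Y$, so $f^{*}$ distributes eigenvalue by eigenvalue and the functional calculus produces exactly $f^{*}(f'(Y) + \delta_{j})$. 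Substituting this into $\tr\bigl(\sum_{j} \hat{\ell}_{f}(Y, \delta_{j}) F_{j}\bigr)$ and using the hypothesis $\sum_{j} F_{j} = \I$ to collect the constant contribution, the right-hand side becomes $\tr \sum_{j} f^{*}(f'(Y) + \delta_{j}) F_{j} - \tr f^{*}(f'(Y))$.

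The final step is to observe that both sides now carry a common summand $-\tr f^{*}(f'(Y))$, which cancels. What remains is precisely the asserted inequality $\tr f^{*}(f'(Y) + \delta \cdot F) \le \tr \sum_{j} f^{*}(f'(Y) + \delta_{j}) F_{j}$; since every manipulation above is an equality, the strong convexity condition and this inequality are equivalent in both directions, establishing the lemma.

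There is no genuine obstacle here: the lemma is essentially a bookkeeping restatement of \cref{def:strong-convex} once \cref{lem:conjugate-explicit} is invoked. The only point requiring minor care is the correct reading of $\hat{\ell}_{f}(Y, \delta_{j})$ as the scalar function $\ell_{f}(\;\cdot\;, \delta_{j})$ applied to the matrix $Y$, together with the observation that because $f'(Y)$ and $\delta_{j}\I$ commute, the functional calculus gives $f^{*}(f'(Y) + \delta_{j})$ rather than anything more delicate. The hypothesis $\sum_{j} F_{j} = \I$ is what makes the constant terms match so that the cancellation is exact.
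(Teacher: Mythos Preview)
Your proposal is correct and follows essentially the same approach as the paper, which simply states that the lemma ``directly follows from \cref{lem:conjugate-explicit}.'' You have spelled out in detail the substitution and cancellation that the paper leaves implicit, including the correct reading of $\hat{\ell}_{f}(Y,\delta_{j})$ via functional calculus and the use of $\sum_{j}F_{j}=\I$ to cancel the common term $-\tr f^{*}(f'(Y))$.
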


\begin{proof}
  This directly follows from \cref{lem:conjugate-explicit}.
\end{proof}

If $f'(Y)$ and $F_{j}$'s are commuting, the inequality for strong convexity is
always true and follows from Jensen's trace inequality as long as $f^{*}$ is
convex.
This is not the case however in the non-commutative case.

\begin{metaalgorithm}[hbt!]
  \centering
  \begin{algspec}
    \textbf{Require:} $f$ is an extended real function with domain
    $\Delta \subseteq \real$ and the domain $\Delta^{*}$ of $f^{*}$ is open.
    $X_{0}, Y_{0} \in \Herm_{\Delta}(\X)$ such that
    $D_{f}(X_{0}, Y_{0}) < \infty$.\\
    \textbf{Input:} $F=(F_{1}, F_{2}, \ldots, F_{k}) \in {\Herm(\X)}^{k}$
    and $\sum_{j=1}^{k} \abs{F_{j}} \le \I$.\\
    \textbf{Output:} $\lambda^{(1)}, \lambda^{(2)}, \cdots$ such that
    \begin{equation*}
      \lim_{t\to\infty} D_{f} \bigl( X_{0}, \L_{f}(Y_{0}, \lambda^{(t)} \cdot F)
      \bigr) = \inf_{\lambda \in \real^{k}} D_{f} \bigl( X_{0}, \L_{f}(Y_{0},
      \lambda \cdot F) \bigr).
    \end{equation*}
    \begin{algorithmic}[1]
      \State{Initialize $\lambda^{(1)} = (0, 0, \ldots, 0)$.} %
      \For{$t = 1, 2, \ldots, $} %
      \State{Compute $Y^{(t)} = \L_{f}(Y_{0}, \lambda^{(t)} \cdot F)$.}
      \For{$j = 1, 2, \ldots, k$} %
      \State{Solve the following equation of
        $\delta^{(t)}_{j}$:
        \begin{equation*}
          \tr \Bigl( F_{j}^{+} \L_{f} (Y^{(t)}, \delta^{(t)}_{j}) -
          F_{j}^{-} \L_{f}(Y^{(t)}, -\delta^{(t)}_{j}) \Bigr) =
          \ip{F_{j}}{X_{0}}.
        \end{equation*}\label{al:parallel-eq}} %
      \EndFor{} %
      \State{Update parameters
        $\lambda^{(t+1)} = \lambda^{(t)} + \delta^{(t)}$.} %
      \EndFor{}
    \end{algorithmic}
  \end{algspec}
  \caption{Parallel Approximate Projection Algorithm.}%%
  \label{alg:parallel}
\end{metaalgorithm}

\begin{theorem}\label{thm:parallel}
  Suppose $\ell_{f}$ is strongly convex.
  Then \cref{alg:parallel} outputs a sequence
  $\lambda^{(1)}, \lambda^{(2)}, \ldots$ such that
  \begin{equation*}
    \lim_{t\to\infty} D_{f} \bigl( X_{0}, \L_{f}(Y_{0}, \lambda^{(t)} \cdot F)
    \bigr) = \inf_{\lambda \in \real^{k}} D_{f} \bigl( X_{0}, \L_{f}(Y_{0},
    \lambda \cdot F) \bigr).
  \end{equation*}
\end{theorem}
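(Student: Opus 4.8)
The plan is to follow the blueprint of the proof of \cref{thm:bregman-algorithm}, replacing the single-coordinate auxiliary function by one adapted to the simultaneous (parallel) update, where the strong convexity of $\ell_{f}$ takes over the role played classically by Jensen's inequality. Write $F_{j} = F_{j}^{+} - F_{j}^{-}$ for the decomposition into positive and negative parts and set $F_{0} = \I - \sum_{j} \abs{F_{j}} \succeq 0$, which is positive semidefinite precisely because of the input condition $\sum_{j} \abs{F_{j}} \le \I$. The collection $\{F_{0}\} \cup \{F_{j}^{+}\}_{j} \cup \{F_{j}^{-}\}_{j}$ then consists of positive semidefinite operators summing to $\I$, so it is a legitimate instance for \cref{def:strong-convex}. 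I define the auxiliary function
\begin{equation*}
  \aux(Y, \delta) = \ip{\delta \cdot F}{X_{0}} - \tr \Bigl( \sum_{j} \hat{\ell}_{f}(Y, \delta_{j}) F_{j}^{+} + \sum_{j} \hat{\ell}_{f}(Y, -\delta_{j}) F_{j}^{-} \Bigr),
\end{equation*}
assigning the scalar parameters $\delta_{j}$, $-\delta_{j}$, and $0$ to $F_{j}^{+}$, $F_{j}^{-}$, and $F_{0}$ respectively. Since $\hat{\ell}_{f}(Y, 0) = 0$, the slack term $F_{0}$ contributes nothing, which is why it never appears in the formula.

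The first substantive step is the per-iteration progress bound. Writing $Y^{(t+1)} = \L_{f}(Y^{(t)}, \delta^{(t)} \cdot F)$ via the additivity in \cref{cor:additive}, \cref{eq:div-proj-diff-1} gives the exact progress $D_{f}(X_{0}, Y^{(t)}) - D_{f}(X_{0}, Y^{(t+1)}) = \ip{\delta^{(t)} \cdot F}{X_{0}} - \ell_{f}(Y^{(t)}, \delta^{(t)} \cdot F)$. Applying the strong convexity hypothesis to the padded collection above bounds $\ell_{f}(Y^{(t)}, \delta^{(t)} \cdot F)$ from above by the trace term in $\aux$, so the progress is at least $\aux(Y^{(t)}, \delta^{(t)})$. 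I then record that $\aux(Y, 0) = 0$; that $\aux(Y, \cdot)$ is concave and separable across the coordinates $\delta_{j}$ (each $\hat{\ell}_{f}(Y, \delta_{j})$ is convex in $\delta_{j}$ and is traced against the positive semidefinite $F_{j}^{\pm}$); and, differentiating with the help of \cref{lem:conjugate-derivative}, that the stationarity equation $\partial_{\delta_{j}}\aux(Y^{(t)}, \delta) = 0$ is exactly the update equation solved in \cref{al:parallel-eq}. Hence $\delta^{(t)}$ maximizes $\aux(Y^{(t)}, \cdot)$ and $\aux(Y^{(t)}, \delta^{(t)}) = \max_{\delta} \aux(Y^{(t)}, \delta) \ge \aux(Y^{(t)}, 0) = 0$.

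From here the argument runs parallel to \cref{thm:bregman-algorithm}. Telescoping the nonnegative progress shows $D_{f}(X_{0}, Y^{(t)})$ is non-increasing and bounded, so $\sum_{t} \aux(Y^{(t)}, \delta^{(t)}) < \infty$ and $\aux(Y^{(t)}, \delta^{(t)}) \to 0$; \cref{lem:bounded-has-limit} supplies a subsequence $Y^{(t_{i})} \to \hat{Y} \in \Herm_{\Delta}(\X)$. Continuity of $\L_{f}$ (\cref{lem:proj-conti}), hence of $\aux$ in its first argument, forces $\max_{\delta} \aux(\hat{Y}, \delta) = 0$, for otherwise a fixed $\delta$ with $\aux(\hat{Y}, \delta) > 0$ would keep $\aux(Y^{(t_{i})}, \delta^{(t_{i})})$ bounded away from $0$. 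Evaluating the coordinate derivative at $\delta = 0$ and using $\L_{f}(\hat{Y}, 0) = \hat{Y}$ yields $\partial_{\delta_{j}}\aux(\hat{Y}, 0) = \ip{F_{j}}{X_{0} - \hat{Y}}$; since $0$ maximizes the concave $\aux(\hat{Y}, \cdot)$, all these derivatives vanish, i.e.\ $\ip{F_{j}}{\hat{Y}} = \ip{F_{j}}{X_{0}}$ and $\hat{Y} \in \lf$. As $\hat{Y} \in \clpf$ by construction, \cref{thm:duality} identifies $\hat{Y}$ with the unique projection $Y^{*}$ of $X_{0}$ onto $\pf$, and the monotone sequence $D_{f}(X_{0}, Y^{(t)})$ converges to the claimed infimum.

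I expect the main obstacle to be the legitimacy and exploitation of the strong convexity step: unlike the commuting/classical case, where the separable upper bound on $\ell_{f}$ is automatic from Jensen's trace inequality, here it is an honest hypothesis (established for Kullback-Leibler via the Carlen-Lieb inequality) and must be applied to the padded collection $\{F_{0}, F_{j}^{\pm}\}$ with exactly the scalar parameters that make the induced stationarity conditions coincide with \cref{al:parallel-eq}. A secondary delicate point, shared with \cref{thm:bregman-algorithm}, is pinning the limit of the objective values to the infimum despite the discontinuity of $D_{f}$; this is handled by the coercivity and extended-domain machinery behind \cref{lem:bounded-has-limit} together with the convexity of $\lambda \mapsto D_{f}(X_{0}, \L_{f}(Y_{0}, \lambda \cdot F))$, so that the vanishing first-order conditions at the limit point certify global optimality.
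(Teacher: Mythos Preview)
Your proposal is correct and follows the same auxiliary-function blueprint as the paper: same definition of $\aux$, same progress bound via strong convexity applied to the padded family $\{F_{0}, F_{j}^{\pm}\}$, same telescoping/coercivity/subsequence argument, and the same appeal to \cref{thm:duality} once $\hat{Y}\in\lf\cap\clpf$.

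The one substantive difference is in how you deduce $\hat{Y}\in\lf$ from $\max_{\delta}\aux(\hat{Y},\delta)=0$. You note that $\aux(\hat{Y},0)=0$ already achieves the maximum, so by concavity the first-order conditions hold at $\delta=0$, and $\partial_{\delta_{j}}\aux(\hat{Y},0)=\ip{F_{j}}{X_{0}-\hat{Y}}$ immediately gives the linear constraints. The paper instead works at an arbitrary maximizer $\delta^{*}$, expands $\aux(\hat{Y},\delta^{*})$ into a sum of nonnegative terms of the form $D_{f}(\L_{f}(y_{i},\pm\delta_{j}^{*}),y_{i})\,\tr(\Pi_{i}F_{j}^{\pm})$, argues each must vanish so that $y_{i}=\L_{f}(y_{i},\pm\delta_{j}^{*})$ on the relevant eigenspaces, and then substitutes back into the stationarity equation at $\delta^{*}$ to obtain $\ip{F_{j}}{X_{0}-\hat{Y}}=0$. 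Your route is shorter; the paper's longer computation yields as a byproduct the finer information that $\hat{Y}$ is an eigenvalue-wise fixed point of each $\L_{f}(\cdot,\pm\delta_{j}^{*})$, though this is not used anywhere else.
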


\begin{proof}
  For all $\delta \in \real$ and Hermitian matrix $Y$, define
  $\hat{\ell}_{f}(Y, \delta)$ as the application of function
  $\ell_{f}(\;\cdot\;, \delta)$ to matrix $Y$.
  That is, if $Y = \sum_{i} y_{i}\, \Pi_{i}$ is a spectrum decomposition of $Y$,
  \begin{equation*}
    \hat{\ell}_{f}(Y, \delta) = \sum_{i} \ell_{f}(y_{i}, \delta) \, \Pi_{i}.
  \end{equation*}

  Define an auxiliary function $\aux$ as
  \begin{equation}
    \label{eq:aux-func}
    \aux(Y, \delta) = \ip{\delta \cdot F}{X_{0}} - \tr \sum_{j=1}^{k}
    \Bigl( \hat{\ell}_{f}(Y, \delta_{j}) F_{j}^{+} +
    \hat{\ell}_{f}(Y, - \delta_{j}) F_{j}^{-} \Bigr).
  \end{equation}

  We claim several properties of the auxiliary function $\aux$.

  First, it is easy to verify that $\aux(Y, 0) = 0$.
  This follows from the definition of $\aux$ and the fact that
  $\hat{\ell}_{f}(Y, 0) = 0$ for all $Y$.

  Second, the choice of $\delta^{(t)}_{j}$'s in the algorithm is exactly those
  so that $\aux(Y^{(t)}, \delta^{(t)})$ is maximized over $\delta^{(t)}$.
  To see this, we first notice that $\hat{\ell}_{f}(Y,\delta)$ is convex in
  $\delta$, $\aux$ is concave in $\delta^{(t)}$, and the maximum is achieved at
  the point with zero gradient.
  Next, we use \cref{lem:conjugate-derivative} to compute the derivative of
  $\aux(Y^{(t)}, \delta^{(t)})$ with respect to $\delta^{(t)}_{j}$ as
  \begin{equation*}
    \ip{F_{j}}{X_{0}} - \tr \Bigl( \L_{f}(Y^{(t)}, \delta^{(t)}_{j}) F_{j}^{+} -
    \L_{f}(Y^{(t)}, -\delta^{(t)}_{j}) F_{j}^{-} \Bigr).
  \end{equation*}
  This proves our claim about the optimality of $\delta^{(t)}_{j}$ and, in
  particular, it holds that
  $\aux(Y^{(t)}, \delta^{(t)}) \ge \aux(Y^{(t)}, 0) = 0$.

  Third, if $\max_{\delta} \aux(Y, \delta) = 0$, then $Y \in \lf$.
  Let $\delta^{*} = \argmax_{\delta} \aux(Y, \delta)$ be the maximizer of
  $\aux(Y, \delta)$.
  It satisfies the equations
  \begin{equation}
    \label{eq:parallel-1}
    \ip{F_{j}}{X_{0}} - \tr \Bigl( \L_{f}(Y, \delta^{*}_{j}) F_{j}^{+}
    - \L_{f}(Y, - \delta^{*}_{j}) F_{j}^{-} \Bigr) = 0,
  \end{equation}
  for all $j$.
  We compute
  \begin{equation*}
    \begin{split}
    \aux(Y, \delta^{*}) & = \ip{\delta^{*} \cdot F}{X_{0}} - \tr \sum_{j=1}^{k}
    \Bigl( \hat{\ell}_{f}(Y, \delta^{*}_{j}) F_{j}^{+} +
    \hat{\ell}_{f}(Y, - \delta^{*}_{j}) F_{j}^{-} \Bigr)\\
    & = \ip{\delta^{*} \cdot F}{X_{0}} - \tr \sum_{j=1}^{k} \sum_{i}
    \Bigl( \ell_{f}(y_{i}, \delta^{*}_{j}) \Pi_{i} F_{j}^{+}
    + \ell_{f}(y_{i}, -\delta^{*}_{j}) \Pi_{i} F_{j}^{-} \Bigr).
    \end{split}
  \end{equation*}
  By the definition of $\ell_{f}$, we have
  \begin{equation*}
    \ell_{f}(y_{i}, \delta^{*}_{j}) = \L_{f}(y_{i}, \delta^{*}_{j})
    \, \delta_{j}^{*} - D_{f} \bigl( \L_{f}(y_{i}, \delta^{*}_{j}), y_{i} \bigr),
  \end{equation*}
  and we can continue the calculation for $\aux(Y, \delta^{*})$ as
  \begin{equation*}
    \begin{split}
      \aux(Y, \delta^{*})
      & = \ip{\delta^{*} \cdot F}{X_{0}} - \tr \sum_{j=1}^{k} \Bigl(
      \L_{f}(Y, \delta^{*}_{j}) F_{j}^{+} \delta^{*}_{j} -
      \L_{f}(Y, -\delta^{*}_{j}) F_{j}^{-} \delta^{*}_{j} \Bigr) \\
      & \;\; + \tr \sum_{j=1}^{k} \sum_{i} \Bigl( D_{f} \bigl( \L_{f}(y_{i},
      \delta^{*}_{j}), y_{i} \bigr) \Pi_{i} F_{j}^{+} + D_{f} \bigl(
      \L_{f}(y_{i}, -\delta^{*}_{j}), y_{i} \bigr) \Pi_{i} F_{j}^{-} \Bigr)\\
      & = \tr \sum_{j=1}^{k} \sum_{i} \Bigl( D_{f} \bigl( \L_{f}(y_{i},
      \delta^{*}_{j}), y_{i} \bigr) \Pi_{i} F_{j}^{+} + D_{f} \bigl(
      \L_{f}(y_{i}, -\delta^{*}_{j}), y_{i} \bigr) \Pi_{i} F_{j}^{-} \Bigr).\\
    \end{split}
  \end{equation*}
  Therefore, $\aux(Y, \delta^{*}) = 0$ implies that
  $D_{f}(\L_{f}(y_{i}, \delta^{*}_{j}), y_{i}) = 0$ (or equivalently,
  $y_{i} = \L_{f}(y_{i}, \delta_{j}^{*})$) for all $i, j$ such that
  $\tr(\Pi_{i}F_{j}^{+}) > 0$ and
  $D_{f}(\L_{f}(y_{i}, -\delta^{*}_{j}), y_{i}) = 0$ (or
  $y_{i} = \L_{f}(y_{i}, -\delta_{j}^{*})$) for all $i, j$ such that
  $\tr(\Pi_{i} F_{j}^{-}) > 0$.
  These conditions and \cref{eq:parallel-1} guarantee that
  \begin{equation*}
    \ip{F_{j}}{X_{0} - Y} =
    \ip{F_{j}}{X_{0}} - \tr \Bigl( \L_{f}(Y, \delta^{*}_{j}) F_{j}^{+}
    - \L_{f}(Y, - \delta^{*}_{j}) F_{j}^{-} \Bigr) = 0,
  \end{equation*}
  for all $j = 1, 2, \ldots, k$ and $Y \in \lf$.

  Fourth, it is a lower bound on the improvement each iteration measured by the
  difference between
  \begin{equation*}
    D_{f}(X_{0}, Y^{(t)}) - D_{f}(X_{0}, Y^{(t+1)}) \ge \aux(Y^{(t)}, \delta^{(t)}).
  \end{equation*}
  More generally, we prove that for all $Y\in \Herm_{\Delta}(\X)$ such that
  $X_{0} \triangleright Y$, and $\delta \in \real^{k}$ such that
  $\delta \cdot F$ is admissible with respect to $Y$,
  \begin{equation}
    \label{eq:aux-bound}
    D_{f}(X_{0}, Y) - D_{f}(X_{0}, \L_{f}(Y, \delta \cdot F))
    \ge \aux(Y, \delta).
  \end{equation}
  The special condition follows from this more general form by choosing
  $Y = Y^{(t)}$ and noticing that
  $Y^{(t+1)} = \L_{f}(Y^{(t)}, \delta^{(t)} \cdot F)$.
  To prove the general bound in \cref{eq:aux-bound}, we use
  \cref{eq:div-proj-diff-1} and the strong convexity assumption on $\ell_{f}$:
  \begin{equation*}
    \begin{split}
      & D_{f}(X_{0}, Y) - D_{f}(X_{0}, \L_{f}(Y, \delta \cdot F))\\
      =\, & \ip{\delta \cdot F}{X_{0}} - \ell_{f}(Y, \delta \cdot F)\\
      \ge\, & \ip{\delta \cdot F}{X_{0}} -
      \tr \sum_{j=1}^{k} \Bigl( F_{j}^{+} \hat{\ell}_{f}(Y, \delta_{j}) +
      F_{j}^{-} \hat{\ell}_{f}(Y, - \delta_{j}) \Bigr)\\
      =\, & \aux(Y, \delta).
    \end{split}
  \end{equation*}

  This implies that for all $t=1, 2, \ldots$,
  \begin{equation*}
    D_{f}(X_{0}, Y^{(t)}) - D_{f}(X_{0}, Y^{(t+1)}) \ge \aux(Y^{(t)}, \delta^{(t)}).
  \end{equation*}
  Adding these inequalities together, we have for all $T \ge 1$,
  \begin{equation*}
    D_{f}(X_{0}, Y^{(1)}) - D_{f}(X_{0}, Y^{(T)}) \ge
    \sum_{t=1}^{T-1} \aux(Y^{(t)}, \delta^{(t)}).
  \end{equation*}
  As $D_{f}(X_{0}, Y^{(1)}) = D_{f}(X_{0}, Y_{0}) < \infty$, and both $D_{f}$
  and $\aux$ are nonnegative, the sequence
  \begin{equation*}
    {(\aux(Y^{(t)}, \delta^{(t)}))}_{t}
  \end{equation*}
  converges to $0$.

  By \cref{lem:coercive} and the fact that $D_{f}(X_{0}, Y^{(t)})$ is a
  non-increasing sequence, it follows that the sequence $Y^{(t)}$ is in a
  compact subset of $\Herm_{\Delta}(\X)$ and that there is subsequence
  $Y^{(t_{i})}$ converging to a limiting point $\hat{Y}$.

  We claim that $\max_{\delta}\aux(\hat{Y}, \delta)=0$.
  Otherwise, assume there exist a $\delta$ and an $\varepsilon > 0$ such that
  $\aux(\hat{Y}, \delta) = \varepsilon > 0$.
  By the continuity of $\aux$ with respect to $Y$, it follows that there is an
  integer $m > 0$ such that for all $i \ge m$,
  \begin{equation*}
    \aux \bigl( Y^{(t_{i})}, \delta^{(t_{i})} \bigr)
    \ge \aux \bigl(Y^{(t_{i})}, \delta \bigr) \ge \varepsilon /2.
  \end{equation*}
  This is a contradiction with the fact that
  ${\bigl( \aux \bigl( Y^{(t)}, \delta^{(t)} \bigr) \bigr)}_{t}$ converges to
  $0$.
  Hence the assumption is false and
  we conclude that $\max_{\delta}\aux(\hat{Y}, \delta) = 0$.

  This now implies that $\hat{Y}$ is the intersection of $\lf$ and $\pf$ and we
  complete the proof using \cref{thm:duality}.
\end{proof}

\begin{metaalgorithm}[hbt!]
  \centering
  \begin{algspec}
    \textbf{Require:} $f$ is an extended real function with domain
    $\Delta \subseteq \real$ and domain of $f^{*}$ is open.
    $X_{0}, Y_{0} \in \Herm_{\Delta}(\X)$ such that
    $D_{f}(X_{0}, Y_{0}) < \infty$.\\
    \textbf{Input:} $F=(F_{1}, F_{2}, \ldots, F_{k}) \in {\Herm(\X)}^{k}$
    and $\abs{F_{j}} \le \I$ for all $j$.\\
    \textbf{Output:} $\lambda^{(1)}, \lambda^{(2)}, \cdots$ such that
    \begin{equation*}
      \lim_{t\to\infty} D_{f} \bigl( X_{0}, \L_{f}(Y_{0}, \lambda^{(t)} \cdot F)
      \bigr) = \inf_{\lambda \in \real^{k}} D_{f} \bigl( X_{0}, \L_{f}(Y_{0},
      \lambda \cdot F) \bigr).
    \end{equation*}
    \begin{algorithmic}[1]
      \State{Initialize $\lambda^{(1)} = (0, 0, \ldots, 0)$.} %
      \For{$t = 1, 2, \ldots, $} %
      \State{Compute $Y^{(t)} = \L_{f}(Y_{0}, \lambda^{(t)} \cdot F)$.}
      \State{Compute $j_{t} = \argmax_{j} \abs{\ip{F_{j}}{Y^{(t)} - X_{0}}}$.}
      \State{Solve the following equation of
        $\delta^{(t)}_{j} \in \real$:
        \begin{equation*}
          \tr \Bigl( F_{j_{t}}^{+} \L_{f} (Y^{(t)}, \delta^{(t)}_{j}) -
          F_{j_{t}}^{-} \L_{f}(Y^{(t)}, -\delta^{(t)}_{j}) \Bigr) =
          \ip{F_{j_{t}}}{X_{0}}.
        \end{equation*}}\label{al:sequential-eq}%
      \State{Update parameters
        $\lambda^{(t+1)}_{j} =
        \begin{cases}
          \lambda^{(t)}_{j} + \delta^{(t)}_{j} & \text{if } j = j_{t},\\
          \lambda^{(t)}_{j} & \text{otherwise.}
        \end{cases}
        $}
      \EndFor{}
    \end{algorithmic}
  \end{algspec}
  \caption{Sequential Approximate Projection Algorithm.}%%
  \label{alg:sequential}
\end{metaalgorithm}

\begin{theorem}\label{thm:sequential}
  Suppose $\ell_{f}$ is strongly convex.
  Then \cref{alg:sequential} outputs a sequence
  $\lambda^{(1)}, \lambda^{(2)}, \ldots$ such that
  \begin{equation*}
    \lim_{t\to\infty} D_{f} \bigl( X_{0}, \L_{f}(Y_{0}, \lambda^{(t)} \cdot F)
    \bigr) = \inf_{\lambda \in \real^{k}} D_{f} \bigl( X_{0}, \L_{f}(Y_{0},
    \lambda \cdot F) \bigr).
  \end{equation*}
\end{theorem}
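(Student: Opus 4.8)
The plan is to reuse the auxiliary–function argument of \cref{thm:parallel} almost verbatim, with the one essential change that in each step only the most–violated coordinate $j_t$ is updated, so the relevant auxiliary function is attached to a single index at a time. Concretely, for each fixed $j$ I would set (with $\hat{\ell}_f$ as defined in the proof of \cref{thm:parallel})
\begin{equation*}
  \aux_j(Y, \delta) = \delta \ip{F_j}{X_0} -
  \tr \bigl( \hat{\ell}_f(Y, \delta) F_j^+ + \hat{\ell}_f(Y, -\delta) F_j^- \bigr).
\end{equation*}
Each $\aux_j(Y, \cdot)$ is concave with $\aux_j(Y, 0) = 0$, and differentiating through \cref{lem:conjugate-derivative} shows that the equation solved in \cref{al:sequential-eq} is exactly the stationarity condition for $\max_\delta \aux_{j_t}(Y^{(t)}, \delta)$; hence the update $\delta^{(t)}_{j_t}$ is a maximizer and $\aux_{j_t}(Y^{(t)}, \delta^{(t)}_{j_t}) \ge \aux_{j_t}(Y^{(t)}, 0) = 0$.

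The first substantive step is the per-step improvement bound. By \cref{cor:additive} we have $Y^{(t+1)} = \L_f(Y^{(t)}, \delta^{(t)}_{j_t} F_{j_t})$, and \cref{eq:div-proj-diff-1} rewrites $D_f(X_0, Y^{(t)}) - D_f(X_0, Y^{(t+1)})$ as $\ip{\delta^{(t)}_{j_t} F_{j_t}}{X_0} - \ell_f(Y^{(t)}, \delta^{(t)}_{j_t} F_{j_t})$, so it suffices to bound $\ell_f(Y, \delta F_j)$ above by $\tr(\hat{\ell}_f(Y, \delta) F_j^+ + \hat{\ell}_f(Y, -\delta) F_j^-)$. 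I would obtain this from the strong-convexity hypothesis applied to the three positive operators $F_j^+$, $F_j^-$, and $\I - \abs{F_j}$ — which sum to $\I$ because $\abs{F_j} \le \I$ — with parameters $\delta, -\delta, 0$; the slack term contributes $\hat{\ell}_f(Y,0)=0$ and drops out, leaving the desired inequality after \cref{lem:conjugate-explicit}. Telescoping, together with $D_f(X_0, Y_0) < \infty$ and nonnegativity of $\aux_{j_t}$, forces $\aux_{j_t}(Y^{(t)}, \delta^{(t)}_{j_t}) \to 0$; by \cref{lem:coercive} the monotone sequence $D_f(X_0, Y^{(t)})$ keeps the iterates in a compact set, so some subsequence $Y^{(t_i)} \to \hat{Y} \in \Herm_\Delta(\X)$.

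The main obstacle, and the only place where the sequential argument genuinely departs from the parallel one, is that the index $j_t$ varies with $t$, so the quantity driven to zero is not the value of a single continuous function of $Y$: the selection map $Y \mapsto \argmax_j \abs{\ip{F_j}{Y - X_0}}$ is discontinuous. I would resolve this by pigeonholing. Since there are finitely many indices, some fixed $j_0$ is selected for infinitely many $i$, so after passing to a further subsequence we may assume $j_{t_i} = j_0$ for all $i$. Along it $\aux_{j_0}(Y^{(t_i)}, \delta^{(t_i)}_{j_0}) = \max_\delta \aux_{j_0}(Y^{(t_i)}, \delta) \to 0$, and the fixed-$\delta$ continuity argument of \cref{thm:parallel}, now applied to the single continuous function $\aux_{j_0}(\;\cdot\;, \delta)$, yields $\max_\delta \aux_{j_0}(\hat{Y}, \delta) = 0$. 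As $\aux_{j_0}(\hat{Y}, \cdot)$ is concave with value $0$ attaining its maximum $0$ at $\delta = 0$, its derivative there, namely $\ip{F_{j_0}}{X_0 - \hat{Y}}$, must vanish.

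Finally I would upgrade this single vanishing constraint to membership in $\lf$ using the defining property of $j_0$: for each $i$ in the subsequence, $\abs{\ip{F_{j_0}}{Y^{(t_i)} - X_0}} \ge \abs{\ip{F_j}{Y^{(t_i)} - X_0}}$ for all $j$, so letting $i \to \infty$ shows $j_0$ remains a maximizer at $\hat{Y}$; combined with $\ip{F_{j_0}}{\hat{Y} - X_0} = 0$ this forces $\ip{F_j}{\hat{Y} - X_0} = 0$ for every $j$, i.e.\ $\hat{Y} \in \lf$. Since each $Y^{(t)} \in \pf$ we also have $\hat{Y} \in \clpf$, so $\hat{Y}$ is the unique point of $\lf \cap \clpf$ furnished by \cref{thm:duality}, and the monotone convergence of $D_f(X_0, Y^{(t)})$ to the infimum follows as in the conclusion of \cref{thm:parallel}. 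I expect the pigeonhole/limiting-index step to be the crux, since everything else is a coordinate-restricted transcription of the parallel proof.
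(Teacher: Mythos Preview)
Your proposal is correct and follows the same auxiliary-function approach the paper intends: its proof is a single sentence instructing one to rerun the parallel analysis with $\delta^{(t)}_{l} = 0$ for $l \ne j_t$. Your pigeonhole step fixing $j_0$ and then using the argmax property to propagate $\ip{F_{j_0}}{\hat{Y}-X_0}=0$ to all coordinates actually fills in a detail the paper glosses over when it invokes continuity of an auxiliary function whose selected index depends discontinuously on $Y$.
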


\begin{proof}
  Thanks to the flexibility of the auxiliary function proof technique, the
  analysis is identical to that of \cref{thm:parallel} by choosing
  $\delta^{(t)}_{l} = 0$ for all $l \ne j_{t}$ in the $t$-th iteration.
\end{proof}
% \begin{definition}
%   Auxiliary function $\aux : \dom{\phi} \times \real^{k} \rightarrow \real$ such that
%   \begin{equation*}
%     D_{\phi}(X_{0}, Y) - D_{\phi}(X_{0}, \L_{\phi}(Y, \bla \cdot \bF)) \ge
%     \aux(Y, \bla).
%   \end{equation*}
% \end{definition}

% \begin{proof}
%   For $j=1, 2, \ldots, k$, write $F_{j} = F_{j}^{+} - F_{j}^{-}$. Assume that
%   \begin{equation*}
%     \sum_{j=1}^{k} F_{j}^{+} + F_{j}^{-} \le N \I.
%   \end{equation*}

%   Maximizing $\aux(Y, \bla)$ with respect to $\bla$, we have
%   for all $j=1, 2, \ldots, k$,
%   \begin{equation*}
%     \ip{X_{0}}{F_{j}} = \frac{1}{N} \tr \Bigl( F_{j}^{+}
%     \L_{\phi}(Y, N\lambda_{j}) + F_{j}^{-} \L_{\phi}(Y, -N\lambda_{j}) \Bigr)
%   \end{equation*}
% \end{proof}

\section{Approximate Information Projection Algorithms}\label{sec:ip}

In this section, we discuss several interesting special cases of the general
framework when the convex function $f$ is $x\ln(x)-x$.
For such a convex function, the Bregman divergence is known as Kullback–Leibler
divergence and the Bregman projection is also known as the information
projection for the central role Kullback–Leibler divergence plays in information
theory.
We will show that many important classical algorithms in learning theory
generalize to the quantum (non-commutative) case nicely in this framework.

\subsection{General Approximate Information Projection Algorithms}

For $f(x)=x\ln(x)-x$, we compute the relevant functions and quantities as
follows
\begin{equation}
  \label{eq:kl}
  \begin{aligned}
    f^{*}(x) & = \exp(x),\\
    f'(x) & = \ln(x),\\
    D_{f}(X, Y) & = \tr \bigl(X \ln X - X \ln Y - X + Y\bigr),\\
    \L_{f}(Y, \Lambda) & = \exp(\ln(Y) + \Lambda),\\
    \ell_{f}(Y, \Lambda) & = \tr \exp(\ln(Y) + \Lambda) - \tr Y.
  \end{aligned}
\end{equation}
Because of the fundamental importance of this case, we may sometimes omit the
subscript $f$ and use $D(X, Y)$, $\L(Y, \Lambda)$, and $\ell(Y, \Lambda)$ to
denote $D_{f}(X, Y)$, $\L_{f}(Y, \Lambda)$, $\ell_{f}(Y, \Lambda)$ respectively.
For density matrices $\rho$ and $\sigma$, $D(\rho, \sigma)$ reduces to the usual
Kullback–Leibler divergence
$D(\rho \Vert \sigma) = \tr (\rho \ln \rho - \rho \ln \sigma)$.
We choose to work with the non-normalized Kullback-Leibler divergence as it is
more flexible.

\begin{algorithm}[hbt!]
  \centering
  \begin{algspec}
    \textbf{Require:}
    $X_{0}, Y_{0} \in \Herm_{\Delta}(\X)$ such that $D(X_{0}, Y_{0}) < \infty$.\\
    \textbf{Input:} $F=(F_{1}, F_{2}, \ldots, F_{k}) \in {\Herm(\X)}^{k}$
    and $\sum_{j=1}^{k} \abs{F_{j}} \le \I$.\\
    \textbf{Output:} $\lambda^{(1)}, \lambda^{(2)}, \cdots$ such that
    \begin{equation*}
      \lim_{t\to\infty} D \bigl( X_{0}, \L(Y_{0}, \lambda^{(t)} \cdot F)
      \bigr) = \inf_{\lambda \in \real^{k}} D \bigl( X_{0}, \L(Y_{0},
      \lambda \cdot F) \bigr).
    \end{equation*}
    \begin{algorithmic}[1]
      \State{Initialize $\lambda^{(1)} = (0, 0, \ldots, 0)$.} %
      \For{$t = 1, 2, \ldots $} %
      \State{Compute $Y^{(t)} = \exp(\ln Y_{0} + \lambda^{(t)} \cdot F)$.}
      \For{$j = 1, 2, \ldots, k$} %
      \State{Solve the following equation of
        $\delta^{(t)}_{j}$:\label{al:parallel-kl-eq}
        \begin{equation*}
          \ip{F_{j}^{+}}{Y^{(t)}} \, \e^{\,\delta^{(t)}_{j}} -
          \ip{F_{j}^{-}}{Y^{(t)}} \, \e^{-\delta^{(t)}_{j}} =
          \ip{F_{j}}{X_{0}}.
        \end{equation*}} %
      \EndFor{} %
      \State{Update parameters
        $\lambda^{(t+1)} = \lambda^{(t)} + \delta^{(t)}$.} %
      \EndFor{}
    \end{algorithmic}
  \end{algspec}
  \caption{Parallel Iterative Update Algorithm for Kullback-Leibler Divergence
    Minimization.}%%
  \label{alg:parallel-kl}
\end{algorithm}

\begin{algorithm}[hbt!]
  \centering
  \begin{algspec}
    \textbf{Require:}
    $X_{0}, Y_{0} \in \Herm_{\Delta}(\X)$ such that $D(X_{0}, Y_{0}) < \infty$.\\
    \textbf{Input:} $F=(F_{1}, F_{2}, \ldots, F_{k}) \in {\Herm(\X)}^{k}$
    and $\abs{F_{j}} \le \I$ for all $j$.\\
    \textbf{Output:} $\lambda^{(1)}, \lambda^{(2)}, \cdots$ such that
    \begin{equation*}
      \lim_{t\to\infty} D \bigl( X_{0}, \L(Y_{0}, \lambda^{(t)} \cdot F)
      \bigr) = \inf_{\lambda \in \real^{k}} D \bigl( X_{0}, \L(Y_{0},
      \lambda \cdot F) \bigr).
    \end{equation*}
    \begin{algorithmic}[1]
      \State{Initialize $\lambda^{(1)} = (0, 0, \ldots, 0)$.} %
      \For{$t = 1, 2, \ldots $} %
      \State{Compute $Y^{(t)} = \exp(\ln Y_{0} + \lambda^{(t)} \cdot F)$.}
      \State{Compute $j_{t} = \argmax_{j} \abs{\ip{F_{j}}{Y^{(t)} - X_{0}}}$.}
      \State{Compute
        $\delta^{(t)}$:\label{al:sequential-kl-eq}
        \begin{equation*}
          \ip{F_{j_{t}}^{+}}{Y^{(t)}} \, \e^{\,\delta^{(t)}_{j}} -
          \ip{F_{j_{t}}^{-}}{Y^{(t)}} \, \e^{-\delta^{(t)}_{j}} =
          \ip{F_{j_{t}}}{X_{0}}.
        \end{equation*}} %
      \State{Update parameters
        $\lambda^{(t+1)}_{j} =
        \begin{cases}
          \lambda^{(t)}_{j} + \delta^{(t)} & \text{if } j = j_{t},\\
          \lambda^{(t)}_{j} & \text{otherwise.}
        \end{cases}$} %
      \EndFor{}
    \end{algorithmic}
  \end{algspec}
  \caption{Sequential Iterative Update Algorithm for Kullback-Leibler Divergence
    Minimization.}%%
  \label{alg:sequential-kl}
\end{algorithm}

In this special case, it is possible to solve the equation in
\cref{al:parallel-eq} of the algorithm analytically, which turns out to be a
quadratic equation in $\e^{\, \delta^{(t)}_{j}}$.
The advantage of making \emph{approximate} Bregman projection steps is now
evident as the calculation of the update in each step is extremely simple.
In light of \cref{thm:duality}, this algorithm leads to \cref{alg:parallel-kl}
that minimizes the Kullback-Leibler divergence subject to linear constraints.
In particular, the linear constraints are given by the matrix $X_{0}$ as
$\ip{F_{j}}{X - X_{0}} = 0$ for all $j=1, 2, \ldots, k$ and the problem is to
compute the minimizer of $D(X, Y_{0})$ for $X$ satisfying these linear
constraints.
\Cref{alg:parallel-kl} is a parallel update algorithm, but as in
\cref{sec:approximate-projection-algorithms}, it is also straightforward to
define and analyze a sequential version where only one entry of the parameter
$\lambda$ is updated each step.
This is given in \cref{alg:sequential-kl}.

We now prove the convergence of \cref{alg:parallel-kl} which follows directly
from the general convergence theorem in \cref{thm:parallel} as long as we can
prove that $\ell$ is strongly convex.

\begin{lemma}\label{lem:ell-strong-convex}
  Function $\ell(Y, \Lambda) = \tr\exp(\ln (Y) + \Lambda) - \tr Y$ is strongly
  convex as defined in \cref{def:strong-convex}.
\end{lemma}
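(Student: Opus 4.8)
The plan is to reduce the strong-convexity inequality to a single non-commutative trace inequality and then prove that inequality via the Carlen--Lieb inequality (\cref{lem:carlen-lieb}) together with the operator concavity of the logarithm. Specializing the equivalent characterization of strong convexity stated just after \cref{def:strong-convex} to $f(x) = x\ln x - x$, and using the explicit forms in \cref{eq:kl} (so that $f^{*} = \exp$, $f'(Y) = \ln Y$, and $f^{*}(f'(Y) + \delta_{j}) = \e^{\delta_{j}} Y$, the shift being by a scalar), the claim becomes
\begin{equation*}
  \tr \exp\Bigl( \ln Y + \sum_{j=1}^{k} \delta_{j} F_{j} \Bigr)
  \le \sum_{j=1}^{k} \e^{\delta_{j}} \ip{F_{j}}{Y},
\end{equation*}
for $Y \succ 0$, operators $F_{j} \succeq 0$ with $\sum_{j} F_{j} = \I$, and real scalars $\delta_{j}$. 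I first note that when $Y$ and the $F_{j}$ all commute this is exactly Jensen's inequality applied eigenvalue-by-eigenvalue to the convex function $\e^{x}$, so the entire difficulty is non-commutativity.

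To prove the inequality in general, I would write $H = \sum_{j} \delta_{j} F_{j}$ and $Z = \sum_{j} \e^{\delta_{j}} \ip{F_{j}}{Y}$; here $Z > 0$ because $\sum_{j}\ip{F_{j}}{Y} = \tr Y > 0$. By \cref{lem:carlen-lieb} it suffices to exhibit a single $Q \succ 0$ with $\tr(YQ) = 1$ and $\lambda_{\max}(H - \ln Q) \le \ln Z$, since then $\tr\exp(\ln Y + H) \le \exp(\ln Z) = Z$. The natural candidate is
\begin{equation*}
  Q = \frac{1}{Z} \sum_{j=1}^{k} \e^{\delta_{j}} F_{j},
\end{equation*}
which is positive definite (each $\e^{\delta_{j}} > 0$ and $\sum_{j} F_{j} = \I$) and satisfies $\tr(YQ) = \frac{1}{Z}\sum_{j}\e^{\delta_{j}}\ip{F_{j}}{Y} = 1$ by construction.

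The remaining step, which is the crux, is the spectral bound $\lambda_{\max}(H - \ln Q) \le \ln Z$, equivalently $\ln Q \succeq H - (\ln Z)\I = \sum_{j}(\delta_{j} - \ln Z) F_{j}$. I would obtain this from the operator concavity of $\ln$: applying \cref{lem:jensen-operator} to the operator convex function $-\ln$ with $A_{j} = F_{j}^{1/2}$ (so that $\sum_{j} A_{j}^{\dag} A_{j} = \sum_{j} F_{j} = \I$) and $X_{j} = (\e^{\delta_{j}}/Z)\,\I \succ 0$ yields
\begin{equation*}
  \ln\Bigl( \sum_{j} A_{j}^{\dag} X_{j} A_{j} \Bigr)
  \succeq \sum_{j} A_{j}^{\dag} \ln(X_{j}) A_{j}.
\end{equation*}
Here $\sum_{j} A_{j}^{\dag} X_{j} A_{j} = \frac{1}{Z}\sum_{j}\e^{\delta_{j}} F_{j} = Q$ and $\sum_{j} A_{j}^{\dag}\ln(X_{j}) A_{j} = \sum_{j}(\delta_{j} - \ln Z) F_{j} = H - (\ln Z)\I$, which is exactly the desired operator inequality; Carlen--Lieb then closes the argument.

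I expect the main obstacle to be precisely this coupling of the correct choice of $Q$ with the use of operator (rather than merely trace) Jensen. Golden--Thompson alone bounds $\tr\exp(\ln Y + H)$ only by $\tr(Y\e^{H})$, which is not controlled by $Z$, so the strengthening to the variational Carlen--Lieb form is essential, and the whole argument hinges on $\ln Q \succeq H - (\ln Z)\I$ holding at the operator level. A minor technical point to handle is the case where $Y$ is not of full rank: there one restricts all operators to $\supp(Y)$ via the extended definitions of \cref{sec:extension} and applies the above with $Y \succ 0$ on that subspace.
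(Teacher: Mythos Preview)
Your proof is correct and follows essentially the same route as the paper's: reduce to the trace inequality $\tr\exp(\ln Y + \sum_{j}\delta_{j}F_{j}) \le \sum_{j}\e^{\delta_{j}}\ip{F_{j}}{Y}$ under $\sum_{j}F_{j}=\I$, apply Carlen--Lieb with the same choice $Q = \sum_{j}\e^{\delta_{j}}F_{j}/Z$, and close the spectral bound via operator Jensen for the concave logarithm. The only cosmetic difference is that the paper first handles $\sum_{j}F_{j}\le\I$ by adjoining $F_{0}=\I-\sum_{j}F_{j}$ with $\delta_{0}=0$ before invoking the same two ingredients.
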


\begin{proof}
  Let $F_{j} \succeq 0$ be positive semi-definite matrices satisfying
  $\sum_{j=1}^{k} F_{i} \le \I$, and let $Y$ be positive.
  $\delta_{j} \in \real$ are real numbers.
  By definition, we need to show that
  \begin{equation}
    \label{eq:ell-strong-convex-1}
    \ell \biggl(Y, \sum_{j=1}^{k} \delta_{j} F_{j} \biggr) \le
    \sum_{j=1}^{k} \tr \Bigl( \hat{\ell} (Y, \delta_{j}) F_{j} \Bigr).
  \end{equation}
  We first compute the right hand side as
  \begin{equation*}
    \begin{split}
      \sum_{j=1}^{k} \tr \Bigl( \hat{\ell}(Y, \delta_{j}) F_{j} \Bigr)
      & = \sum_{j=1}^{k} \sum_{i} \tr \Bigl( \ell(y_{i}, \delta_{j})
      \Pi_{i} F_{j} \Bigr)\\
      & = \sum_{j=1}^{k} \sum_{i} \tr \Bigl( \bigl( e^{\ln(y_{i}) + \delta_{j}}
      - y_{i} \bigr) \Pi_{i} F_{j} \Bigr)\\
      & = \tr \biggl( Y \sum_{j=1}^{k}e^{\delta_{j}} F_{j} \biggr)
      - \tr \biggl(Y \sum_{j=1}^{k} F_{j} \biggr).
    \end{split}
  \end{equation*}
  The inequality in \cref{eq:ell-strong-convex-1} then simplifies to
  \begin{equation*}
    \tr \exp \biggl( \ln(Y) + \sum_{j=1}^{k} \delta_{j} F_{j} \biggr)
    \le \tr \biggl( Y \sum_{j=1}^{k}e^{\delta_{j}} F_{j} \biggr)
      + \tr \biggl(Y \biggl( \I - \sum_{j=1}^{k} F_{j} \biggr)\biggr).
  \end{equation*}
  Define $F_{0} = \I - \sum_{j=1}^{k} F_{j}$ and $\delta_{0} = 0$.
  The above inequality can be written as
  \begin{equation}
    \label{eq:ell-strong-convex-2}
    \tr \exp \biggl( \ln(Y) + \sum_{j=0}^{k} \delta_{j} F_{j} \biggr)
    \le \tr \biggl( Y \sum_{j=0}^{k}e^{\delta_{j}} F_{j} \biggr)
  \end{equation}
  for $\sum_{j=0}^{k} F_{j} = \I$ and $F_{j} \succeq 0$.

  In the following we prove \cref{eq:ell-strong-convex-2} using
  \cref{lem:carlen-lieb} and this will complete the proof.
  By choosing $H = \sum_{j=0}^{k} \delta_{j} F_{j}$ and
  \begin{equation*}
    Q = \sum_{j=0}^{k} \e^{\delta_{j}} F_{j} \big/
    \tr \Bigl( Y \sum_{i=0}^{k} \e^{\delta_{i}} F_{i} \Bigr),
  \end{equation*}
  in the Carlen-Lieb inequality, we have
  \begin{equation}
    \begin{split}
      \label{eq:carlen-lieb-1}
      & \tr \exp \biggl( \ln(Y) + \sum_{j=0}^{k} \delta_{j} F_{j} \biggr)\\
      \le & \exp \biggl( \lambda_{\max} \biggl( \sum_{j=0}^{k} \delta_{j} F_{j} -
      \ln Q \biggr) \biggr)\\
      \le & \tr \biggl( Y \, \sum_{j=0}^{k} \e^{\delta_{j}} F_{j} \biggr)
      \exp \biggl( \lambda_{\max} \biggl( \sum_{j=0}^{k} \delta_{j} F_{j} -
      \ln \biggl( \sum_{j=0}^{k}\e^{\delta_{j}} F_{j} \biggr) \biggr) \biggr).
    \end{split}
  \end{equation}
  By the operator concavity of the $\ln$ function and the operator Jensen's
  inequality, we have
  \begin{equation*}
    \sum_{j=0}^{k} \delta_{j} F_{j} - \ln \biggl( \sum_{j=0}^{k}
    \e^{\delta_{j}} F_{j} \biggr) \preceq 0,
  \end{equation*}
  and therefore
  \begin{equation*}
    \lambda_{\max} \biggl( \sum_{j=0}^{k} \delta_{j} F_{j} -
    \ln \biggl( \sum_{j=0}^{k} \e^{\delta_{j}} F_{j} \biggr) \biggr) \le 0.
  \end{equation*}
  Together with \cref{eq:carlen-lieb-1}, this completes the proof of the claimed
  inequality in \cref{eq:ell-strong-convex-2} which is equivalent the statement
  of the lemma.
\end{proof}

We mention that the inequality is not easy to establish without using the
Carlen-Lieb inequality.
For example, one may try to prove it using Golden-Thompson as it is already very
close in the form.
Yet, to complete the proof we would require that the exponential function is
operator convex, which is not the case unfortunately.
One may also try to use Jensen's trace inequality, but the strategy fails to
work because of the non-commutativity between the matrices.

\begin{theorem}\label{thm:parallel-kl}
  Let the sequence $\lambda^{(1)}, \lambda^{(2)}, \ldots$ be generated by
  \cref{alg:parallel-kl}.
  Then
  \begin{equation*}
    \lim_{t\to\infty} D \bigl( X_{0}, \L(Y_{0}, \lambda^{(t)} \cdot F)
    \bigr) = \inf_{\lambda \in \real^{k}} D \bigl( X_{0}, \L(Y_{0},
    \lambda \cdot F) \bigr).
  \end{equation*}
\end{theorem}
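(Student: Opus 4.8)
The plan is to obtain \cref{thm:parallel-kl} as a direct specialization of the general parallel convergence result \cref{thm:parallel} to the Kullback–Leibler case $f(x) = x\ln x - x$. That theorem guarantees convergence of \cref{alg:parallel} as soon as the Legendre–Bregman conjugate $\ell_f$ is strongly convex in the sense of \cref{def:strong-convex}. So the whole argument reduces to two bookkeeping checks: that \cref{alg:parallel-kl} is literally the instance of \cref{alg:parallel} determined by this $f$, and that the strong convexity hypothesis is in force. The latter has already been isolated and proved — it is exactly \cref{lem:ell-strong-convex}, whose proof is where the Carlen–Lieb inequality was needed — so no new substantive inequality remains to be established here.

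First I would verify the algorithm identification using the explicit forms collected in \cref{eq:kl}. Since a scalar multiple of the identity commutes with $\ln Y^{(t)}$, we have
\begin{equation*}
  \L_{f}(Y^{(t)}, \delta_{j}) = \exp\bigl(\ln Y^{(t)} + \delta_{j}\I\bigr)
  = Y^{(t)}\,\e^{\delta_{j}},
  \qquad
  \L_{f}(Y^{(t)}, -\delta_{j}) = Y^{(t)}\,\e^{-\delta_{j}}.
\end{equation*}
Substituting these into the general update equation \cref{al:parallel-eq} and pulling the scalars out of the trace turns its left-hand side into
\begin{equation*}
  \tr\bigl(F_{j}^{+} Y^{(t)}\bigr)\,\e^{\delta_{j}^{(t)}}
  - \tr\bigl(F_{j}^{-} Y^{(t)}\bigr)\,\e^{-\delta_{j}^{(t)}}
  = \ip{F_{j}^{+}}{Y^{(t)}}\,\e^{\delta_{j}^{(t)}}
  - \ip{F_{j}^{-}}{Y^{(t)}}\,\e^{-\delta_{j}^{(t)}},
\end{equation*}
using that $F_{j}^{\pm}$ and $Y^{(t)}$ are Hermitian so that $\tr(F_{j}^{\pm}Y^{(t)}) = \ip{F_{j}^{\pm}}{Y^{(t)}}$. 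This is precisely \cref{al:parallel-kl-eq}. The state update $Y^{(t)} = \exp(\ln Y_{0} + \lambda^{(t)}\cdot F)$ is the instantiation of $Y^{(t)} = \L_{f}(Y_{0}, \lambda^{(t)}\cdot F)$, and the parameter update $\lambda^{(t+1)} = \lambda^{(t)} + \delta^{(t)}$ agrees verbatim, so the two algorithms generate the same sequence $\bigl(\lambda^{(t)}\bigr)$.

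With the identification in place, I would invoke \cref{lem:ell-strong-convex} to conclude that $\ell(Y, \Lambda) = \tr\exp(\ln Y + \Lambda) - \tr Y$ is strongly convex, which is exactly the hypothesis \cref{thm:parallel} requires. Applying \cref{thm:parallel} then yields the claimed limit. I do not anticipate a genuine obstacle in this proof: all the difficulty has been front-loaded into \cref{lem:ell-strong-convex}, and the only remaining care is to confirm that the scalar/identity-shift conventions for $\L_{f}$ match between \cref{alg:parallel} and \cref{alg:parallel-kl} so that \cref{al:parallel-kl-eq} is genuinely the specialization of \cref{al:parallel-eq}. Once that is checked, \cref{thm:parallel-kl} is a corollary.
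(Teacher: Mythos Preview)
Your proposal is correct and follows essentially the same approach as the paper: the paper's proof likewise observes that \cref{alg:parallel-kl} is the specialization of \cref{alg:parallel} to $f(x)=x\ln x - x$, invokes \cref{lem:ell-strong-convex} for strong convexity, and applies \cref{thm:parallel}. Your explicit verification that $\L_{f}(Y^{(t)},\pm\delta_j) = Y^{(t)}\e^{\pm\delta_j}$ is in fact more detailed than what the paper records in its proof, which simply refers back to the main text for the algorithm identification.
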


\begin{proof}
  It has been explained in the main text that the algorithm is derived from
  \cref{alg:parallel} by taking $f = x\ln(x) - x$.
  By \cref{lem:ell-strong-convex}, the assumption that $\ell_{f}$ is strong
  convex in this case is verified and the theorem follows from
  \cref{thm:parallel}.
\end{proof}

Similarly, \cref{thm:sequential} and \cref{lem:ell-strong-convex} guarantee the
following convergence theorem about the sequential version in
\cref{alg:sequential-kl}.

\begin{theorem}\label{thm:sequential-kl}
  Let the sequence $\lambda^{(1)}, \lambda^{(2)}, \ldots$ be generated by
  \cref{alg:sequential-kl}.
  Then
  \begin{equation*}
    \lim_{t\to\infty} D \bigl( X_{0}, \L(Y_{0}, \lambda^{(t)} \cdot F)
    \bigr) = \inf_{\lambda \in \real^{k}} D \bigl( X_{0}, \L(Y_{0},
    \lambda \cdot F) \bigr).
  \end{equation*}
\end{theorem}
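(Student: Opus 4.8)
The plan is to recognize \cref{alg:sequential-kl} as nothing more than the specialization of the general sequential algorithm \cref{alg:sequential} to the convex function $f(x) = x\ln x - x$, and then to invoke the already-established convergence result \cref{thm:sequential}, exactly mirroring the way \cref{thm:parallel-kl} was deduced from \cref{thm:parallel}. Since \cref{thm:sequential} delivers convergence for \emph{any} Legendre $f$ whose Legendre-Bregman conjugate is strongly convex, the whole task reduces to (i) checking that the two algorithms generate the same iterates and (ii) supplying the strong-convexity hypothesis.

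First I would verify the specialization at the level of the update rule. Using the explicit formulas collected in \cref{eq:kl}, the scalar Legendre-Bregman projection is $\L_f(y_i, \delta) = \e^{\ln y_i + \delta} = y_i\, \e^{\delta}$, so the operator $\hat{\L}_f(Y^{(t)}, \delta)$ appearing in \cref{al:sequential-eq} equals $\e^{\delta} Y^{(t)}$. Consequently $\tr\bigl(F_{j_t}^{+} \hat{\L}_f(Y^{(t)}, \delta^{(t)}_{j})\bigr) = \ip{F_{j_t}^{+}}{Y^{(t)}}\, \e^{\delta^{(t)}_{j}}$ and likewise for the $F_{j_t}^{-}$ term, so the defining equation in \cref{al:sequential-eq} collapses precisely to the equation in \cref{al:sequential-kl-eq}. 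The matrix update step $Y^{(t)} = \L_f(Y_0, \lambda^{(t)} \cdot F) = \exp(\ln Y_0 + \lambda^{(t)} \cdot F)$ likewise matches, and the index selection $j_t = \argmax_j \abs{\ip{F_j}{Y^{(t)} - X_0}}$ is identical in both algorithms. This confirms that \cref{alg:sequential-kl} produces exactly the iterates $\lambda^{(t)}$ of \cref{alg:sequential} with $f = x\ln x - x$.

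Second, the only hypothesis that \cref{thm:sequential} imposes on the general algorithm is that $\ell_f$ be strongly convex in the sense of \cref{def:strong-convex}. For the present $f(x) = x\ln x - x$ this is exactly the content of \cref{lem:ell-strong-convex}, whose proof rests on the Carlen-Lieb inequality (\cref{lem:carlen-lieb}). With that hypothesis discharged, \cref{thm:sequential} applies verbatim and yields the claimed limit, the infimum being identified with the projection of $X_0$ onto $\clpf$ via the duality theorem \cref{thm:duality}.

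I do not expect a genuine obstacle at this stage: all the difficult analytic work has been carried out upstream---the strong convexity of $\ell$ via Carlen-Lieb in \cref{lem:ell-strong-convex}, the auxiliary-function convergence argument of \cref{thm:parallel}, its sequential reduction (setting $\delta^{(t)}_l = 0$ for $l \ne j_t$) in \cref{thm:sequential}, and the duality theorem that pins down the limit. The residual task is the purely mechanical bookkeeping of matching the specialized update equations, so the proof amounts to citing \cref{thm:sequential} together with \cref{lem:ell-strong-convex}. The one point I would be slightly careful about is to state explicitly that the input condition $\abs{F_j} \le \I$ of \cref{alg:sequential-kl} is the same admissibility requirement used by \cref{alg:sequential}, so that the invocation of \cref{thm:sequential} is fully justified.
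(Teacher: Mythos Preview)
Your proposal is correct and follows exactly the paper's own route: the paper simply remarks that \cref{thm:sequential} together with \cref{lem:ell-strong-convex} yields \cref{thm:sequential-kl}, and your argument spells out the same reduction with the added bookkeeping that the update equation in \cref{al:sequential-eq} specializes to that of \cref{al:sequential-kl-eq}. There is nothing to add.
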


\subsection{Quantum Partition Function Minimization and AdaBoost}

\begin{algorithm}[hbt!]
  \centering
  \begin{algspec}
    \textbf{Input:} $F=(F_{1}, F_{2}, \ldots, F_{k}) \in {\Herm(\X)}^{k}$
    and $\abs{F_{j}} \le \I$ for all $j$.\\
    \textbf{Output:} $\lambda^{(1)}, \lambda^{(2)}, \cdots$ such that
    \begin{equation*}
      \lim_{t\to\infty} \tr \bigl( \exp(\lambda^{(t)} \cdot F)
      \bigr) = \inf_{\lambda \in \real^{k}} \tr \bigl( \exp(\lambda \cdot F) \bigr).
    \end{equation*}
    \begin{algorithmic}[1]
      \State{Initialize $\lambda^{(1)} = (0, 0, \ldots, 0)$.} %
      \For{$t = 1, 2, \ldots, $} %
      \State{Compute $Y^{(t)} = \exp(\lambda^{(t)} \cdot F)$.} %
      \State{Compute $j_{t} = \argmax_{j} \abs{\ip{F_{j}}{Y^{(t)}}}$.%
        \label{al:argmax}} %
      \State{Compute $\delta^{(t)} = \dfrac{1}{2} \ln
        \dfrac{\ip{F_{j_{t}}^{-}}{Y^{(t)}}}{\ip{F_{j_{t}}^{+}}{Y^{(t)}}}$.} %
      \State{Update parameters
        $\lambda^{(t+1)}_{j} =
        \begin{cases}
          \lambda^{(t)}_{j} + \delta^{(t)} & \text{if } j = j_{t},\\
          \lambda^{(t)}_{j} & \text{otherwise.}
        \end{cases}$} %
      \EndFor{}
    \end{algorithmic}
  \end{algspec}
  \caption{Sequential Iterative Update Algorithm for Quantum Partition
    Function Minimization.}%%
  \label{alg:partition-function}
\end{algorithm}

In the following, we will discuss two important special cases of
\cref{alg:parallel-kl} (and its sequential analog \cref{alg:sequential-kl}).

The first case is given in \cref{alg:partition-function} presented in the
sequential update form.
It is an algorithm for minimizing the quantum partition function over a linear
family of Hamiltonians and is the non-commutative analog the famous AdaBoost
algorithm in learning theory.
The input to the algorithm is a tuple of Hermitian matrices
$F = (F_{1}, F_{2}, \ldots, F_{k})$, satisfying simple normalization conditions.
An important example is that the tuple consists of all terms of a local
Hamiltonian.
The algorithm iteratively computes a $\lambda$ so that the quantum partition
function $\tr(\exp(\lambda \cdot F))$ is minimized over the linear Hamiltonian family
$\sum_{j} \lambda_{j} F_{j}$ for $\lambda_{j} \in \real$.

To see that this is indeed a special case of \cref{alg:sequential-kl}, we choose
$X_{0} = 0$ and $Y_{0} = \I$.
In this case, the (non-normalized) Kullback-Leibler divergence
$D(X_{0}, \L(Y_{0}, \Lambda))$ simplifies to
\begin{equation*}
  D(0, \L(Y_{0}, \Lambda)) = \tr \L(Y_{0}, \Lambda) = \tr \exp(\Lambda),
\end{equation*}
and the equation in \cref{al:sequential-kl-eq} of \cref{alg:sequential-kl} has
solution
\begin{equation*}
  \delta^{(t)}_{j} = \frac{1}{2} \ln \frac{\ip{ F_{j}^{-} }{ Y^{(t)} }}{\ip{
      F_{j}^{+} }{ Y^{(t)} }}.
\end{equation*}
These calculations explains the changes made in \cref{alg:partition-function}
from \cref{alg:sequential-kl}.
The general convergence theorem \cref{thm:sequential} implies the following.

\begin{theorem}\label{thm:partition-function}
  Let $\lambda^{(1)}, \lambda^{(2)}, \ldots$ be the sequence generated by
  \cref{alg:partition-function}.
  Then
  \begin{equation*}
    \lim_{t\to\infty} \tr \bigl( \exp(\lambda^{(t)} \cdot F)
    \bigr) = \inf_{\lambda \in \real^{k}} \tr \bigl( \exp(\lambda \cdot F) \bigr).
  \end{equation*}
\end{theorem}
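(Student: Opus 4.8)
The plan is to recognize \cref{alg:partition-function} as the specialization of the sequential Kullback--Leibler minimization algorithm \cref{alg:sequential-kl} obtained by setting $X_{0} = 0$ and $Y_{0} = \I$, and then to invoke the convergence result already established in \cref{thm:sequential-kl}. Since the heavy lifting---strong convexity of $\ell$ via the Carlen--Lieb inequality (\cref{lem:ell-strong-convex}) and the auxiliary-function convergence analysis (\cref{thm:sequential})---is already in place, this proof reduces to verifying that the two algorithms coincide under this choice of initial data and that the hypotheses carry over.

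First I would check that each line of \cref{alg:partition-function} agrees with the corresponding line of \cref{alg:sequential-kl} when $X_{0} = 0$ and $Y_{0} = \I$. The state update becomes $Y^{(t)} = \exp(\ln \I + \lambda^{(t)} \cdot F) = \exp(\lambda^{(t)} \cdot F)$ because $\ln \I = 0$; the selection rule becomes $j_{t} = \argmax_{j} \abs{\ip{F_{j}}{Y^{(t)}}}$ because $X_{0} = 0$; and the update equation in \cref{al:sequential-kl-eq}, whose right-hand side $\ip{F_{j_{t}}}{X_{0}}$ now vanishes, reads
\begin{equation*}
  \ip{F_{j_{t}}^{+}}{Y^{(t)}} \, \e^{\,\delta^{(t)}} =
  \ip{F_{j_{t}}^{-}}{Y^{(t)}} \, \e^{-\delta^{(t)}},
\end{equation*}
whose unique solution is precisely the closed form $\delta^{(t)} = \frac{1}{2} \ln \bigl( \ip{F_{j_{t}}^{-}}{Y^{(t)}} / \ip{F_{j_{t}}^{+}}{Y^{(t)}} \bigr)$ appearing in \cref{alg:partition-function}.

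Second I would confirm the hypotheses of \cref{thm:sequential-kl}: the conjugate $f^{*}(x) = \e^{x}$ has open domain $\real$; the input satisfies $\abs{F_{j}} \le \I$; and the finiteness requirement holds since $D(0, \I) = \tr \I = \dim \X < \infty$. Finally, evaluating the divergence at this initial data gives $D(X_{0}, \L(Y_{0}, \lambda \cdot F)) = D(0, \exp(\lambda \cdot F)) = \tr \exp(\lambda \cdot F)$ for every $\lambda$, so substituting into the conclusion of \cref{thm:sequential-kl} yields exactly the claimed limit. There is no genuine analytic obstacle here; the only point demanding a little care is verifying that the explicit $\delta^{(t)}$ in \cref{alg:partition-function} really is the solution of the general update equation specialized to $X_{0} = 0$, which the display above makes transparent.
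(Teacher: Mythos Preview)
Your proposal is correct and follows essentially the same approach as the paper: the paper explains, in the paragraph immediately preceding \cref{thm:partition-function}, that \cref{alg:partition-function} is the specialization of \cref{alg:sequential-kl} to $X_{0}=0$, $Y_{0}=\I$, computes $D(0,\L(\I,\Lambda))=\tr\exp(\Lambda)$ and the closed-form $\delta^{(t)}$, and then simply invokes the general convergence theorem. Your write-up is in fact slightly more careful than the paper's one-line proof, explicitly checking the hypotheses (open conjugate domain, $\abs{F_j}\le\I$, and $D(0,\I)<\infty$).
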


We now highlight the connection between the partition function minimization
algorithm in \cref{alg:partition-function} and the well-known AdaBoost algorithm
in learning theory.
Let $(x_{1}, y_{1}), (x_{2}, y_{2}), \ldots, (x_{m}, y_{m})$ be a set of
training examples where we assume for simplicity that each label $y_{i}$ is in
$\{\pm 1\}$.
There is a also a set of features or base hypothesis
$h_{1}, h_{2}, \ldots, h_{T}$ that predict the label $y_{i}$ when given $x_{i}$
as input.
The AdaBoost algorithm maintains a distribution over the training examples and
interacts with a weak learner.
In each step, the algorithm query the weak learner with the distribution and the
weak learner respond with a hypothesis $h_{j}$.
The algorithm then evaluates the performs of $h_{j}$ on each example and
increase the weight of the examples that are not correctly predicted by $h_{j}$
so that the weak learner is forced to focus more on those misclassified
examples.
Finally, the algorithm computes the parameters $\lambda \in \real^{n}$ based on
the error of each hypothesis $h_{j}$ and outputs a final hypothesis
\begin{equation*}
  H(x) = \mathrm{sign} \biggl(\sum_{j}^{T} \lambda_{j} h_{j}(x) \biggr).
\end{equation*}
We refer the readers to~\cite{FS97} for more details.

It is well known in the literature that the parameters $\lambda$ computed by the
AdaBoost algorithm actually minimizes the exponential loss defined as
\begin{equation}
  \label{eq:exp-loss}
  \sum_{i=1}^{m} \exp \biggl( -y_{i} \sum_{j=1}^{T} \lambda_{j} h_{j}
  (x_{i}) \biggr).
\end{equation}

Define diagonal matrices
\begin{equation*}
  F_{j} = \sum_{i=1}^{m} -y_{i} h_{j}(x_{i}) \ket{i}\bra{i},
\end{equation*}
for $j=1, 2, \ldots, T$.
The corresponding partition function is then
\begin{equation*}
  \tr \exp(\lambda \cdot F) = \tr \exp \biggl( \sum_{j=1}^{T}
  \lambda_{j} F_{j} \biggr),
\end{equation*}
which is the same as the exponential loss in \cref{eq:exp-loss}.
In this sense, our algorithm generalizes the iterative update procedure of
AdaBoost to the non-commutative setting.

\subsection{Quantum Iterative Scaling}

\begin{algorithm}[hbt!]
  \centering
  \begin{algspec}
    \textbf{Require:} $\rho_{0}, \sigma_{0} \in \Density(\X)$ such that
    $D(\rho_{0}, \sigma_{0}) < \infty$.\\
    \textbf{Input:} $F=(F_{1}, F_{2}, \ldots, F_{k}) \in {\Pos(\X)}^{k}$
    and $\sum_{j=1}^{k} F_{j} \le \I$.\\
    \textbf{Output:} $\lambda^{(1)}, \lambda^{(2)}, \cdots$ such that
    \begin{equation*}
      \lim_{t\to\infty} D \bigl( \rho_{0}, \L(\sigma_{0},
      \lambda^{(t)} \cdot F) \bigr) = \inf_{\lambda \in \real^{k}} D
      \bigl( \rho_{0}, \L(\sigma_{0}, \lambda \cdot F) \bigr).
    \end{equation*}
    \begin{algorithmic}[1]
      \State{Initialize $\lambda^{(1)} = (0, 0, \ldots, 0)$.} %
      \For{$t = 1, 2, \ldots, $} %
      \State{Compute $Y^{(t)} = \exp(\ln\sigma_{0} + \lambda^{(t)} \cdot F)$.}
      \For{$j = 1, 2, \ldots, k$} %
      \State{$\delta^{(t)}_{j} = \ln \ip{F_{j}}{\rho_{0}} -
        \ln \ip{F_{j}}{Y^{(t)}}$.}
      \EndFor{} %
      \State{Update parameters
        $\lambda^{(t+1)} = \lambda^{(t)} + \delta^{(t)}$.} %
      \EndFor{}
    \end{algorithmic}
  \end{algspec}
  \caption{Quantum iterative scaling algorithm.}%%
  \label{alg:qis}
\end{algorithm}
In the second special case, we consider the case where $F_{j}$ form a POVM and
$X_{0} = \rho_{0}$, $Y_{0} = \sigma_{0}$ are density matrices.
In that case, the solution of the equation in \cref{al:parallel-kl-eq} of
\cref{alg:parallel-kl} can be computed as
\begin{equation*}
  \delta^{(t)}_{j} = \ln \ip{F_{j}}{\rho_{0}} - \ln \ip{F_{j}}{Y^{(t)}}.
\end{equation*}
The \emph{quantum iterative scaling algorithm} given in \cref{alg:qis} then
follows naturally as a special case of \cref{alg:parallel-kl}.

The convergence proved in \cref{thm:parallel} then implies the following theorem
about the convergence of the quantum iterative scaling algorithm.
It is a non-commutative analog of the generalized iterative scaling algorithm
(also known as the SMART algorithm) as stated in Theorem 5.2 of~\cite{CS04}.
As in the commutative case, the intermediate matrices $Y^{(t)}$ are not
normalized to have trace one.
In fact, using the inequality in \cref{lem:ell-strong-convex}, it is easy to
show $\tr Y^{(t)} \le 1$ for all $t$.
Yet in the limit of $t \to \infty$, $Y^{(t)}$ converges to a density matrix.
\begin{theorem}\label{thm:qis}
  Let $\lf$ be the linear family defined by $\ip{F_{i}}{\rho} = \alpha_{i}$
  where $F_{i} \succeq 0$, $\sum_{i=1}^k F_{i} = \I$, and
  $\alpha_{i} = \ip{F_{i}}{\rho_{0}}$.
  Let $\rho_{0} \in \lf, \sigma_{0}$ be two density matrices such that
  $D(\rho_{0} \Vert \sigma_{0}) < +\infty$.
  Define a sequence of operators as
  \begin{equation*}
    Y^{(1)} = \sigma_{0}, \quad
    Y^{(t+1)} = \exp \Bigl( \ln Y^{(t)} + \sum_{i=1}^k (\ln \alpha_{i} -
    \ln \beta_{n,i}) F_{i} \Bigr),
  \end{equation*}
  where $\beta_{n,i} = \ip{F_{i}}{Y^{(t)}}$.
  Then the limit $\lim_{n\to \infty} Y^{(t)}$ converges to the information
  projection $\rho^{*}$ of $\sigma_{0}$ to linear family $\lf$.
\end{theorem}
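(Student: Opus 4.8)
The plan is to recognize the sequence $(Y^{(t)})$ defined here as exactly the trajectory produced by \cref{alg:parallel-kl} (hence by the general \cref{alg:parallel}) with the choices $X_{0} = \rho_{0}$ and $Y_{0} = \sigma_{0}$, and then to read off convergence from \cref{thm:parallel-kl} together with the duality theorem. First I would check the reduction of the update rule. Since each $F_{j}$ is positive semidefinite we have $F_{j}^{+} = F_{j}$ and $F_{j}^{-} = 0$, so the defining equation in \cref{al:parallel-kl-eq} collapses to $\ip{F_{j}}{Y^{(t)}}\, \e^{\delta^{(t)}_{j}} = \ip{F_{j}}{\rho_{0}}$, whose solution is $\delta^{(t)}_{j} = \ln \ip{F_{j}}{\rho_{0}} - \ln \ip{F_{j}}{Y^{(t)}} = \ln \alpha_{j} - \ln \ip{F_{j}}{Y^{(t)}}$. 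This matches the update in the statement. Note also that $\rho_{0} \in \lf$ holds by construction because $\alpha_{j} = \ip{F_{j}}{\rho_{0}}$, that $\sum_{j} \abs{F_{j}} = \sum_{j} F_{j} = \I$ satisfies the normalization hypothesis of \cref{alg:parallel-kl}, and that $\ell$ is strongly convex by \cref{lem:ell-strong-convex}. Hence all hypotheses of \cref{thm:parallel-kl} are met.

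Next I would identify the target point via \cref{thm:duality} applied with $X_{0} = \rho_{0}$ and $Y_{0} = \sigma_{0}$. The theorem produces a unique $Y^{*} \in \lf \cap \clpf$ which, by its characterization as the minimizer of $D(X, \sigma_{0})$ over $X \in \lf$, is precisely the information projection $\rho^{*}$ of $\sigma_{0}$ onto the linear family $\lf$. Because $\sum_{j} F_{j} = \I$, every $X \in \lf$ satisfies $\tr X = \sum_{j} \ip{F_{j}}{X} = \sum_{j} \alpha_{j} = \tr \rho_{0} = 1$; thus $\rho^{*}$ is a genuine density matrix, which justifies the remark in the text that $Y^{(t)}$ converges to a density matrix even though the intermediate iterates need not be normalized.

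It remains to upgrade the convergence of the divergence values furnished by \cref{thm:parallel-kl} to convergence of the matrices $Y^{(t)}$ themselves, and this is where the real work lies. The argument inside the proof of \cref{thm:parallel} already gives what is needed: the values $D(\rho_{0}, Y^{(t)})$ are nonincreasing and bounded, so by \cref{lem:coercive} the whole sequence $(Y^{(t)})$ lies in a compact subset of $\Herm_{\Delta}(\X)$, and every convergent subsequence has a limit $\hat{Y}$ with $\max_{\delta} \aux(\hat{Y}, \delta) = 0$, forcing $\hat{Y} \in \lf \cap \clpf$ and therefore $\hat{Y} = \rho^{*}$ by the uniqueness clause of \cref{thm:duality}. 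A bounded sequence all of whose subsequential limits equal $\rho^{*}$ converges to $\rho^{*}$, giving $\lim_{t} Y^{(t)} = \rho^{*}$.

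The main obstacle is precisely this last passage from value convergence to matrix convergence, which would be immediate if $D(\rho_{0}, \;\cdot\;)$ were continuous but is not, since the divergence is discontinuous on the boundary where $\rho^{*}$ may be rank-deficient (eigenvalue $0$). The point is handled by combining coercivity (\cref{lem:coercive}), the continuity of the Legendre--Bregman projection $\L$ (\cref{lem:proj-conti}) used to keep $\clpf$ closed under further projections, and the uniqueness of the intersection point in \cref{thm:duality}; together these ensure that no subsequential limit escapes to the boundary spuriously and that all such limits coincide with the single information projection $\rho^{*}$.
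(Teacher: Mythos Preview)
Your proposal is correct and follows the same route the paper intends: specialize \cref{alg:parallel-kl} (hence \cref{alg:parallel}) to $X_{0}=\rho_{0}$, $Y_{0}=\sigma_{0}$, $F_{j}\succeq 0$, check that the update equation collapses to $\delta^{(t)}_{j}=\ln\alpha_{j}-\ln\ip{F_{j}}{Y^{(t)}}$, and invoke \cref{thm:parallel-kl} together with \cref{thm:duality}. The paper itself says only that ``the convergence proved in \cref{thm:parallel} then implies the following theorem'' and gives no further argument, so you are not diverging from it.

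Where you go beyond the paper is in the last step: \cref{thm:parallel-kl} literally asserts convergence of the \emph{divergence values} $D(\rho_{0},Y^{(t)})$, whereas \cref{thm:qis} claims convergence of the \emph{matrices} $Y^{(t)}$ to $\rho^{*}$. You correctly bridge this by reusing the internals of the proof of \cref{thm:parallel}: boundedness via \cref{lem:coercive}, the fact that $\aux(Y^{(t)},\delta^{(t)})\to 0$, and continuity of $\aux$ in $Y$ force every subsequential limit $\hat{Y}$ to satisfy $\max_{\delta}\aux(\hat{Y},\delta)=0$, hence $\hat{Y}\in\lf\cap\clpf=\{\rho^{*}\}$ by \cref{thm:duality}. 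One small sharpening: coercivity alone gives boundedness in $\Herm(\X)$, not that limit points remain in $\Herm_{\Delta}(\X)$; for that you should cite \cref{lem:bounded-has-limit}, which is exactly what rules out eigenvalues escaping to $\cl{\Delta}\setminus\Delta$. With that citation in place your argument is complete and in fact more careful than the paper's one-line justification.
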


An important special case is when $\sigma_{0} = \I/d$ where $d$ is the dimension
and $D(\rho, \sigma_{0}) = \ln(d) - S(\rho)$.
The Bregman projection of $\sigma_{0}$ to the linear family of $\rho_{0}$ and
$\{F_{j}\}$ is therefore the solution of the maximum entropy inference problem
formulated as
\begin{align*}
  \text{maximize:}\quad & S(\rho)\\
  \text{subject to:}\quad & \ip{F_j}{\rho} = \ip{F_{j}}{\rho_{0}},\\
                        & \rho \in \Density(\X).
\end{align*}
The sequence $Y^{(1)}, Y^{(2)}, \ldots$ computed in \cref{alg:qis} converges to
the solution of the above convex programming problem.
This shows that the quantum iterative scaling algorithm is an algorithm for
computing the maximum entropy inference given linear constraints of the density
matrix.
By Jaynes' maximum entropy principle, the maximum entropy state is the
exponential of a Hamiltonian of the form $\sum_{j} \lambda_{j} F_{j}$.
We call this problem of finding the Hamiltonian given local information of the
state as the Hamiltonian inference problem.
The general framework above proves the convergence of the algorithm for the
Hamiltonian inference problem and we leave the more detailed analysis of its
convergence rate as future work.

\section{Quantum Algorithmic Speedups}\label{sec:quantum}

In the previous discussions, the algorithms we presented are classical
algorithms that require matrix computations such as
$Y^{(t)} = \exp(\lambda^{(t)} \cdot F)$.
We will consider several ideas that can speedup the computation using techniques
from quantum algorithm design.

\subsection{Implement Matrix Functions on Quantum Computers}

The first natural attempt to quantize the algorithm is to implement the matrix
computations using techniques such as quantum singular value
transformation~\cite{GSLW19,Gil19} and smooth function evaluation~\cite{AGGW20}.

We start with the exact Bregman projection algorithm in \cref{alg:bregman}.
Notice that $Y^{(t)}$ in the algorithm is an intermediate quantity that are used
in later steps to approximate $\ip{F_{j}}{Y^{(t)}}$ and
$\ip{F_{j_{t}}}{\L_{f}(Y^{(t)}, \delta^{(t)} F_{j_{t}})}$.
It therefore suffices to have a subroutine $\abl$ that can compute the average
value $\ip{F_{j}}{\L_{f}(Y_{0}, \lambda \cdot F)}$ given $F_{j}$, $Y_{0}$, and
$\lambda$ as input or certain oracle access.
We emphasize that it is not necessary to compute the matrix $\L_{f}$ explicitly.
Numerical algorithms can then be employed to search for the solution of
$\delta^{(t)}$ in \cref{al:bregman-eq} of \cref{alg:bregman} as it is an
equation only involving a single real variable.

In the case of approximate Bregman projection algorithms, the situation is much
simpler as the equation involved is usually explicitly solvable and no numerical
search is necessary.
For example, in the case of approximate information projection algorithms
(\cref{alg:parallel-kl,alg:sequential-kl}), each iteration amounts to the
computation of $\ip{F_{j}}{Y^{(t)}}$ and $\ip{F_{j_{t}}^{\pm}}{Y^{(t)}}$ where
$Y^{(t)}$ has the exponential form $\exp(\log Y_{0} + \lambda^{(t)} \cdot F)$.
All these average values are in the form of the subroutine call to $\abl$.

Quantum algorithms that implement the subroutine $\abl$ are known.
As
\begin{equation*}
  \L_{f}(Y_{0}, \lambda \cdot F) = (f^{*})' (f'(Y_{0}) + \lambda \cdot F),
\end{equation*}
it is a matrix function of matrix $f'(Y_{0}) + \lambda \cdot F$.
Hence, under the condition that $f'(Y_{0}) + \lambda \cdot F$ is sparse, we can
apply the quantum algorithms for evaluating smooth functions of
Hamiltonians~\cite[Appendix B]{AGGW20}.
In the special case of Kullback-Leibler information projection algorithms, the
Bregman-Legendre projection is the exponential function $\exp(\lambda \cdot F)$
and their algorithm for approximating the average value
$\ip{F_{j}}{\exp(\lambda \cdot F)}$ runs in time
$\tilde{O}(\frac{\sqrt{n}Kd}{\theta})$ where $n$ is the size of the matrices,
$K$ is the upper bound of $\norm{\lambda \cdot F}$, $d$ is the sparsity of
$\lambda \cdot F$, and $\tilde{O}$ suppresses the polylogarithmic dependence on
the parameters.
The quantum running time is sometimes advantageous as classical algorithms for
computing the same quantity will run in time at least linear in $n$.
One caveat is that the quantum running time depends on the upper bound of the
Hamiltonian norm $K$.
In our case, $\lambda^{(t)}$ is being updated each iteration and the norm
$\norm{\lambda^{(t)} \cdot F}$ may even go to infinity for some problem
instances.
This prevents us to claim general time bounds using the techniques
in~\cite{AGGW20} but the quantum implementation could be advantageous in many
practical situations.
Other quantum algorithms for preparing the quantum Gibbs
states~\cite{PW09,CS16,MST+20} are also known with different assumptions and
performance guarantees and may be applicable as subroutines in our information
projection algorithms.

\subsection{Quantum Search for the Maximum Violation}

The second possible approach to speedup the iterative algorithms presented in
\cref{sec:algorithms} is to employ the fast quantum OR lemma as
in~\cite{HLM17,BKL+19} in sequential iterative algorithms.

Quantum implementations of $Y^{(t)}$ in the algorithm usually represent the
matrix as a quantum density matrix.
As quantum measurements may disturb the state they measure, a trivial approach to
estimate $\ip{F_{j}}{Y^{(t)}}$ requires a fresh copy of the state representing
$Y^{(t)}$ each time.
This could be very expensive as the preparation of $Y^{(t)}$ can be the hardest step
among the computations needed in each iteration.
The use of fast quantum OR lemma solves the problem by saving the number of
copies of states required.
Following the approach as in~\cite{BKL+19,AGGW20,AG19}, we show similar ideas
are applicable to our \emph{sequential} update algorithms.

To be more specific, we will consider the case of the partition function
minimization algorithm (\cref{alg:partition-function}) and focus on the cost of
\cref{al:argmax} in that algorithm.
We assume that there is a unitary $U_{j}$ that estimates the value of
$\ip{F_{j}}{\rho^{(t)}}$ to precision $\eta_{j}$ using $n^{(t)}$ copies of the state
$\rho^{(t)} = Y^{(t)}/ \tr(Y^{(t)})$.
Further assume that the access to $U_{j}$ is provided by a unitary $U$ such that
\begin{equation*}
  U \ket{j} \ket{\psi} = \ket{j} U_{j} \ket{\psi}.
\end{equation*}
A straightforward implementation of \cref{al:argmax} has to use fresh copies of
$\rho^{(t)}$ for different $j$ and the cost is
\begin{equation*}
  \Omega \Bigl(m T(U) + m n^{(t)} T_{\rm State} (\rho^{(t)}) \Bigr),
\end{equation*}
where $T_{\rm State}(\rho^{(t)})$ is the time compelxity of preparing the Gibbs
state $\rho^{(t)}$ and $m=k$ is the number of constraints.
Applying the \emph{two-phase quantum minimum finding} algorithm from~\cite[Lemma
7]{AG19}, the time complexity is improved to
\begin{equation*}
  \tilde{O} \Bigl(\sqrt{m} T(U) +
  \log^{4}(m)\log(1/\delta) n^{(t)} T_{\rm State}(\rho^{(t)}) \Bigr)
\end{equation*}
where $\delta$ is the precision parameter.

\subsection{NISQ Applications}

We now briefly mention potential applications of the above algorithms in the
context of designing algorithms on noisy intermediate scale quantum
devices~\cite{Pre18}.

Assume that there are NISQ algorithms for approximately preparing a quantum
state representing the matrix $Y^{(t)}$ in the algorithm.
For the preparation of quantum Gibbs states, it is shown in~\cite{CLW20,MS19}
that a variational quantum algorithm for preparing the Gibbs state can be
derived using a gradient descent method optimizing the free energy.
Then \cref{alg:parallel-kl,alg:sequential-kl} can also be efficiently
implemented on a NISQ device as our iterative algorithm has the structure of
hybrid quantum algorithms with a quantum part for the preparation of the quantum
state for $Y^{(t)}$ and classical part that updates the parameters using a
simple rule based on the average values of the current state.
In some sense, it is a variational quantum algorithm in which the quantum part
is a Gibbs state preparation subroutine and the classical update rules are given
by our algorithmic framework.
In the case of quantum iterative scaling, this approach lifts the entropy
estimation algorithm in~\cite{CLW20} to a solver for maximum entropy problem
with linear constraints.

\section{Summary}\label{sec:summary}

In this paper, we prove a general duality theorem for Bregman divergence on
Hermitian matrices under simple assumptions of the underlying convex function.
Several iterative update algorithms are designed based on the idea of exact and
approximate Bregman projections and the convergence is proved using the duality
theorem and the auxiliary function method.

There are many interesting questions left open and we leave them as future work.
First, we have only been able to prove the strong convex inequality for
Kullback-Leibler divergence.
Can we have a more general theory about the condition under which the strong
convex is true?
Better understanding of related inequalities will lead to, for example, the
non-commutative analog of logistic regression based on Fermi/Dirac convex
function (Line 4 of \cref{tab:examples})~\cite{BB97,CSS02}.
Second, we have only been able to work with linear equality constraints and it
is an interesting problem to further generalize the framework so that we can
handle linear inequality constraints.
Finally, it is an interesting problem to establish quantitative bounds on the
class and quantum time complexity of the algorithms introduced in the paper.

\section*{Acknowledgments}

The author thanks Bei Zeng for stimulating discussions on related problems and
for bringing~\cite{CS04} to his attention.
He thanks Tongyang Li for helpful discussions and for pointing out the related
work of~\cite{HKT21}.

\bibliographystyle{abbrv-links}

\bibliography{quantum-iterative-scaling}

\begin{thebibliography}{10}

\bibitem{Alh22}
A.~Alhambra.
\newblock \href{http://arxiv.org/abs/2204.08349}{Quantum many-body systems in
  thermal equilibrium}.
\newblock {\em arXiv:2204.08349 [cond-mat, physics:quant-ph]}, Apr. 2022.
\newblock arXiv: 2204.08349.

\bibitem{AAKS20}
A.~Anshu, S.~Arunachalam, T.~Kuwahara, and M.~Soleimanifar.
\newblock \href{http://arxiv.org/abs/2004.07266}{Sample-efficient learning of
  quantum many-body systems}.
\newblock {\em arXiv:2004.07266 [cond-mat, physics:quant-ph]}, Apr. 2020.
\newblock arXiv: 2004.07266.

\bibitem{AAKS21}
A.~Anshu, S.~Arunachalam, T.~Kuwahara, and M.~Soleimanifar.
\newblock \href{http://dx.doi.org/10.1038/s41567-021-01232-0}{Sample-efficient
  learning of interacting quantum systems}.
\newblock {\em Nature Physics}, pages 1--5, May 2021.
\newblock Publisher: Nature Publishing Group.

\bibitem{AGGW20}
J.~v. Apeldoorn, A.~Gily{\'e}n, S.~Gribling, and R.~d. Wolf.
\newblock \href{http://dx.doi.org/10.22331/q-2020-01-13-220}{Convex
  optimization using quantum oracles}.
\newblock {\em Quantum}, 4:220, Jan. 2020.
\newblock Publisher: Verein zur F{\"o}rderung des Open Access Publizierens in
  den Quantenwissenschaften.

\bibitem{AHK12}
S.~Arora, E.~Hazan, and S.~Kale.
\newblock \href{http://dx.doi.org/10.4086/toc.2012.v008a006}{The
  {Multiplicative} {Weights} {Update} {Method}: a {Meta}-{Algorithm} and
  {Applications}}.
\newblock {\em Theory of Computing}, 8(6):121--164, 2012.
\newblock Publisher: Theory of Computing.

\bibitem{BMDG05}
A.~Banerjee, S.~Merugu, I.~S. Dhillon, and J.~Ghosh.
\newblock \href{https://www.jmlr.org/papers/v6/banerjee05b.html}{Clustering
  with {Bregman} {Divergences}}.
\newblock {\em Journal of Machine Learning Research}, 6(Oct):1705--1749, 2005.

\bibitem{BB97}
H.~G. Bauschke and J.~M. Borwein.
\newblock \href{http://eudml.org/doc/227096}{Legendre functions and the method
  of random {Bregman} projections.}
\newblock {\em Journal of Convex Analysis}, 4(1):27--67, 1997.
\newblock Publisher: Heldermann Verlag.

\bibitem{BKL+19}
F.~G. S.~L. Brand{\~a}o, A.~Kalev, T.~Li, C.~Y.-Y. Lin, K.~M. Svore, and X.~Wu.
\newblock \href{http://arxiv.org/abs/1710.02581}{Quantum {SDP} {Solvers}:
  {Large} {Speed}-ups, {Optimality}, and {Applications} to {Quantum}
  {Learning}}.
\newblock {\em arXiv:1710.02581 [quant-ph]}, Apr. 2019.
\newblock arXiv: 1710.02581 version: 3.

\bibitem{Bre67}
L.~M. Bregman.
\newblock \href{http://dx.doi.org/10.1016/0041-5553(67)90040-7}{The relaxation
  method of finding the common point of convex sets and its application to the
  solution of problems in convex programming}.
\newblock {\em USSR Computational Mathematics and Mathematical Physics},
  7(3):200--217, Jan. 1967.

\bibitem{Byr07}
C.~L. Byrne.
\newblock {\em Applied {Iterative} {Methods}}.
\newblock A K Peters/CRC Press, 1 edition, 2007.

\bibitem{CL19}
E.~A. Carlen and E.~H. Lieb.
\newblock \href{http://dx.doi.org/10.1142/s1664360719500085}{Some trace
  inequalities for exponential and logarithmic functions}.
\newblock {\em Bulletin of Mathematical Sciences}, 09(02):1950008, Apr. 2019.
\newblock Publisher: World Scientific Publishing Co.

\bibitem{Cen81}
Y.~Censor.
\newblock \href{http://dx.doi.org/10.1137/1023097}{Row-{Action} {Methods} for
  {Huge} and {Sparse} {Systems} and {Their} {Applications}}.
\newblock {\em SIAM Review}, 23(4):444--466, 1981.
\newblock Publisher: Society for Industrial and Applied Mathematics.

\bibitem{CD-PE+90}
Y.~Censor, A.~R. De~Pierro, T.~Elfving, G.~T. Herman, and A.~N. Iusem.
\newblock \href{http://dx.doi.org/10.4064/-24-1-145-163}{On iterative methods
  for linearly constrained entropy maximization}.
\newblock {\em Banach Center Publications}, 24(1):145--163, 1990.

\bibitem{CD-PI91}
Y.~Censor, A.~R. De~Pierro, and A.~N. Iusem.
\newblock \href{http://dx.doi.org/10.1016/0168-9274(91)90059-9}{Optimization of
  {Burg}'s entropy over linear constraints}.
\newblock {\em Applied Numerical Mathematics}, 7(2):151--165, Feb. 1991.

\bibitem{CH02}
Y.~Censor and G.~T. Herman.
\newblock \href{http://dx.doi.org/10.1137/s1052623401389439}{Block-{Iterative}
  {Algorithms} with {Underrelaxed} {Bregman} {Projections}}.
\newblock {\em SIAM Journal on Optimization}, 13(1):283--297, Jan. 2002.
\newblock Publisher: Society for Industrial and Applied Mathematics.

\bibitem{CJZZ12}
J.~Chen, Z.~Ji, B.~Zeng, and D.~L. Zhou.
\newblock \href{http://dx.doi.org/10.1103/physreva.86.022339}{From ground
  states to local {Hamiltonians}}.
\newblock {\em Physical Review A}, 86(2):022339, Aug. 2012.
\newblock Publisher: American Physical Society.

\bibitem{CLW20}
A.~N. Chowdhury, G.~H. Low, and N.~Wiebe.
\newblock \href{http://arxiv.org/abs/2002.00055}{A {Variational} {Quantum}
  {Algorithm} for {Preparing} {Quantum} {Gibbs} {States}}.
\newblock {\em arXiv:2002.00055 [quant-ph]}, Jan. 2020.
\newblock arXiv: 2002.00055.

\bibitem{CS16}
A.~N. Chowdhury and R.~D. Somma.
\newblock \href{http://arxiv.org/abs/1603.02940}{Quantum algorithms for {Gibbs}
  sampling and hitting-time estimation}.
\newblock {\em arXiv:1603.02940 [quant-ph]}, Mar. 2016.
\newblock arXiv: 1603.02940.

\bibitem{CSS02}
M.~Collins, R.~E. Schapire, and Y.~Singer.
\newblock \href{http://dx.doi.org/10.1023/a:1013912006537}{Logistic
  {Regression}, {AdaBoost} and {Bregman} {Distances}}.
\newblock {\em Machine Learning}, 48(1):253--285, July 2002.

\bibitem{CS04}
I.~Csisz{\'a}r and P.~C. Shields.
\newblock \href{http://dx.doi.org/10.1561/0100000004}{Information {Theory} and
  {Statistics}: {A} {Tutorial}}.
\newblock {\em Foundations and Trends® in Communications and Information
  Theory}, 1(4):417--528, Dec. 2004.
\newblock Publisher: Now Publishers, Inc.

\bibitem{DR72}
J.~N. Darroch and D.~Ratcliff.
\newblock \href{http://dx.doi.org/10.1214/aoms/1177692379}{Generalized
  {Iterative} {Scaling} for {Log}-{Linear} {Models}}.
\newblock {\em Annals of Mathematical Statistics}, 43(5):1470--1480, Oct. 1972.
\newblock Publisher: Institute of Mathematical Statistics.

\bibitem{DK70}
C.~Davis and W.~M. Kahan.
\newblock \href{http://dx.doi.org/10.1137/0707001}{The {Rotation} of
  {Eigenvectors} by a {Perturbation}. {III}}.
\newblock {\em SIAM Journal on Numerical Analysis}, 7(1):1--46, Mar. 1970.
\newblock Publisher: Society for Industrial and Applied Mathematics.

\bibitem{D-PI86}
A.~R. De~Pierro and A.~N. Iusem.
\newblock \href{http://dx.doi.org/10.1007/bf00940283}{A relaxed version of
  {Bregman}'s method for convex programming}.
\newblock {\em Journal of Optimization Theory and Applications},
  51(3):421--440, Dec. 1986.

\bibitem{D-PD-PL97}
S.~Della~Pietra, V.~Della~Pietra, and J.~Lafferty.
\newblock \href{http://dx.doi.org/10.1109/34.588021}{Inducing features of
  random fields}.
\newblock {\em IEEE Transactions on Pattern Analysis and Machine Intelligence},
  19(4):380--393, Apr. 1997.
\newblock Conference Name: IEEE Transactions on Pattern Analysis and Machine
  Intelligence.

\bibitem{D-PD-PL02}
S.~Della~Pietra, V.~Della~Pietra, and J.~Lafferty.
\newblock Duality and {Auxiliary} {Functions} for {Bregman} {Distances}.
\newblock Technical Report CMU-CS-01-109R, School of Computer Science, Carnegie
  Mellon University, Pittsburgh, PA 15213, Feb. 2002.

\bibitem{FS97}
Y.~Freund and R.~E. Schapire.
\newblock \href{http://dx.doi.org/10.1006/jcss.1997.1504}{A
  {Decision}-{Theoretic} {Generalization} of {On}-{Line} {Learning} and an
  {Application} to {Boosting}}.
\newblock {\em Journal of Computer and System Sciences}, 55(1):119--139, Aug.
  1997.

\bibitem{Gil19}
A.~Gily{\'e}n.
\newblock {\em Quantum {Singular} {Value} {Transformation} \& {Its}
  {Algorithmic} {Applications}}.
\newblock PhD thesis, Amsterdam Institute for Logic, Language and Computation,
  2019.

\bibitem{GSLW19}
A.~Gily{\'e}n, Y.~Su, G.~H. Low, and N.~Wiebe.
\newblock \href{http://dx.doi.org/10.1145/3313276.3316366}{Quantum singular
  value transformation and beyond: exponential improvements for quantum matrix
  arithmetics}.
\newblock {\em Proceedings of the 51st Annual ACM SIGACT Symposium on Theory of
  Computing - STOC 2019}, pages 193--204, 2019.
\newblock arXiv: 1806.01838.

\bibitem{GBH70}
R.~Gordon, R.~Bender, and G.~T. Herman.
\newblock \href{http://dx.doi.org/10.1016/0022-5193(70)90109-8}{Algebraic
  {Reconstruction} {Techniques} ({ART}) for three-dimensional electron
  microscopy and {X}-ray photography}.
\newblock {\em Journal of Theoretical Biology}, 29(3):471--481, Dec. 1970.

\bibitem{HKT21}
J.~Haah, R.~Kothari, and E.~Tang.
\newblock \href{http://arxiv.org/abs/2108.04842}{Optimal learning of quantum
  {Hamiltonians} from high-temperature {Gibbs} states}.
\newblock {\em arXiv:2108.04842 [quant-ph]}, Aug. 2021.
\newblock arXiv: 2108.04842.

\bibitem{HLM17}
A.~W. Harrow, C.~Y.-Y. Lin, and A.~Montanaro.
\newblock \href{http://dx.doi.org/10.1137/1.9781611974782.105}{Sequential
  measurements, disturbance and property testing}.
\newblock {\em Proceedings of the Twenty-Eighth Annual ACM-SIAM Symposium on
  Discrete Algorithms}, pages 1598--1611, Jan. 2017.
\newblock arXiv: 1607.03236.

\bibitem{Jay57}
E.~T. Jaynes.
\newblock \href{http://dx.doi.org/10.1103/PhysRev.106.620}{Information {Theory}
  and {Statistical} {Mechanics}}.
\newblock {\em Physical Review}, 106(4):620--630, May 1957.
\newblock Publisher: American Physical Society.

\bibitem{Kal07}
S.~Kale.
\newblock {\em Efficient algorithms using the multiplicative weights update
  method}.
\newblock phd, Princeton University, USA, 2007.
\newblock AAI3286120 ISBN-13: 9780549285342.

\bibitem{LPP97}
J.~Lafferty, S.~D. Pietra, and V.~D. Pietra.
\newblock Statistical {Learning} {Algorithms} {Based} on {Bregman} {Distances}.
\newblock In {\em Canadian {Workshop} on {Information} {Theory}}, 1997.

\bibitem{Lew96}
A.~S. Lewis.
\newblock \href{http://dx.doi.org/10.1137/0806009}{Convex {Analysis} on the
  {Hermitian} {Matrices}}.
\newblock {\em SIAM Journal on Optimization}, 6(1):164--177, Feb. 1996.
\newblock Publisher: Society for Industrial and Applied Mathematics.

\bibitem{MS19}
J.~Martyn and B.~Swingle.
\newblock \href{http://dx.doi.org/10.1103/physreva.100.032107}{Product spectrum
  ansatz and the simplicity of thermal states}.
\newblock {\em Physical Review A}, 100(3):032107, Sept. 2019.
\newblock Publisher: American Physical Society.

\bibitem{Mat98}
R.~Mathias.
\newblock \href{http://dx.doi.org/10.1137/s0895479896310536}{Quadratic
  {Residual} {Bounds} for the {Hermitian} {Eigenvalue} {Problem}}.
\newblock {\em SIAM Journal on Matrix Analysis and Applications},
  19(2):541--550, Apr. 1998.
\newblock Publisher: Society for Industrial and Applied Mathematics.

\bibitem{MST+20}
M.~Motta, C.~Sun, A.~T.~K. Tan, M.~J.~O. Rourke, E.~Ye, A.~J. Minnich, F.~G.
  S.~L. Brandao, and G.~K.-L. Chan.
\newblock \href{http://dx.doi.org/10.1038/s41567-019-0704-4}{Determining
  eigenstates and thermal states on a quantum computer using quantum imaginary
  time evolution}.
\newblock {\em Nature Physics}, 16(2):205--210, Feb. 2020.
\newblock arXiv: 1901.07653.

\bibitem{PW09}
D.~Poulin and P.~Wocjan.
\newblock \href{http://dx.doi.org/10.1103/physrevlett.103.220502}{Sampling from
  the {Thermal} {Quantum} {Gibbs} {State} and {Evaluating} {Partition}
  {Functions} with a {Quantum} {Computer}}.
\newblock {\em Physical Review Letters}, 103(22):220502, Nov. 2009.
\newblock Publisher: American Physical Society.

\bibitem{Pre18}
J.~Preskill.
\newblock \href{http://dx.doi.org/10.22331/q-2018-08-06-79}{Quantum {Computing}
  in the {NISQ} era and beyond}.
\newblock {\em Quantum}, 2:79, Aug. 2018.

\bibitem{Roc70}
R.~T. Rockafellar.
\newblock {\em Convex analysis}.
\newblock Number~28 in Princeton mathematical series. Princeton University
  Press, Princeton, N.J, 1970.

\bibitem{AG19}
J.~van Apeldoorn and A.~Gily{\'e}n.
\newblock \href{http://arxiv.org/abs/1904.03180}{Quantum algorithms for
  zero-sum games}.
\newblock {\em arXiv:1904.03180 [quant-ph]}, Apr. 2019.
\newblock arXiv: 1904.03180.

\end{thebibliography}

%% Start-Of-Trailer

\end{document}